\newtheorem{example}{Example}
\newtheorem{assumption}{Assumption}
\newcommand{\rref}[2][]{\prettyref{#2}}
\newcommand{\cA}{\mathcal{A}}
\newcommand{\cC}{\mathcal{C}}
\newcommand{\cD}{\mathcal{D}}
\newcommand{\cE}{\mathcal{E}}
\newcommand{\cG}{\mathcal{G}}
\newcommand{\cI}{\mathcal{I}}
\newcommand{\cT}{\mathcal{T}}
\newcommand{\bbE}{\mathbb{E}}
\DeclareMathOperator{\pa}{pa}
\DeclareMathOperator{\an}{an}
\newcommand{\rmP}{{\mathrm{P}}}
\newcommand{\rmQ}{{\mathrm{Q}}}
\newcites{appendix}{References of Appendix}
\newtheorem{lemma}{Lemma}
\newtheorem{theorem}{Theorem}
\newtheorem{proposition}{Proposition}
\newtheorem{definition}{Definition}
\newtheorem{corollary}{Corollary}
\newtheorem{observation}{Observation}
\title{Matching a Desired Causal State \\via Shift Interventions}
\author{
  Jiaqi Zhang \\
  LIDS, EECS, and IDSS, MIT \\
  Cambridge, MA 02139 \\
  \texttt{viczhang@mit.edu} \\
   \And
   Chandler Squires \\
   LIDS, EECS, and IDSS, MIT \\
   Cambridge, MA 02139 \\
   \texttt{csquires@mit.edu} \\
   \AND
   Caroline Uhler \\
   LIDS, EECS, and IDSS, MIT \\
   Cambridge, MA 02139 \\
   \texttt{cuhler@mit.edu} \\
}
\begin{document}

\maketitle
\addtocontents{toc}{\protect\setcounter{tocdepth}{0}}

\begin{abstract}
Transforming a causal system from a given initial state to a desired target state is an important task permeating multiple fields including control theory, biology, and materials science. 
In causal models, such transformations can be achieved by performing a set of interventions.
In this paper, we consider the problem of identifying a shift intervention that matches the desired mean of a system through active learning.
We define the Markov equivalence class that is identifiable from shift interventions and propose two active learning strategies that are guaranteed to exactly match a desired mean.
We then derive a worst-case lower bound for the number of interventions required and show that these strategies are optimal for certain classes of graphs. 
In particular, we show that our strategies may require exponentially fewer interventions than the previously considered approaches, which optimize for structure learning in the underlying causal graph. %
In line with our theoretical results, we also demonstrate experimentally that our proposed active learning strategies require fewer interventions compared to several baselines.
\end{abstract}

\section{Introduction}

Consider an experimental biologist attempting to turn cells from one type into another, e.g., from fibroblasts to neurons \citep{vierbuchen2010direct}, by altering gene expression.
This is known as cellular reprogramming and has shown great promise in recent years for regenerative medicine \citep{rackham2016predictive}.
A common approach is to model gene expression of a cell, which is governed by an underlying gene regulatory network, using a \textit{structural causal model} \citep{friedman2000using,badsha2019learning}.
Through a set of \textit{interventions}, such as gene knockouts or over-expression of transcription factors \citep{dominguez2016beyond}, a biologist can infer the structure of the underlying regulatory network.
After inferring enough about this structure, a biologist can identify the intervention needed to successfully reprogram a cell.
More generally, transforming a causal system from an initial state to a desired state through interventions is an important task pervading multiple applications.
Other examples include closed-loop control \citep{touchette2004information} and pathway design of microstructures \citep{wodo2015automated}.

With little prior knowledge of the underlying causal model, this task is intrinsically difficult. 
Previous works have addressed the problem of intervention design to achieve full identifiability of the causal model \citep{hauser2014two, greenewald2019sample, squires2020active}. 
However, since interventional experiments tend to be expensive in practice, one wishes to minimize the number of trials and learn \textit{just enough} information about the causal model to be able to identify the intervention that will transform it into the desired state.
Furthermore, in many realistic cases, the set of interventions which can be performed is constrained.
For instance, in CRISPR experiments, only a limited number of genes can be knocked out to keep the cell alive; or in robotics, a robot can only manipulate a certain number of arms at once.

\textbf{Contributions.} 
We take the first step towards the task of \textit{causal matching} (formalized in \rref{sec:problem-step}), where an experimenter can perform a series of interventions in order to identify a \textit{matching intervention} which transforms the system to a desired state.
In particular, we consider the case where the goal is to match the \textit{mean} of a distribution.
We focus on a subclass of interventions called \textit{shift} interventions, which can for example be used to model gene over-expression experiments \citep{triantafillou2017predicting}. 
These interventions directly increase or decrease the values of their perturbation targets, with their effect being propagated to variables which are downstream (in the underlying causal graph) of these targets.
We show that there always exists a unique shift intervention (which may have multiple perturbation targets) that exactly transforms the mean of the variables into the desired mean (Lemma~\ref{lm:1}). We call this shift intervention the \emph{matching intervention}.

To find the matching intervention, in \rref{sec:identifiability} we characterize the \textit{Markov equivalence class} of a causal graph induced by shift interventions, i.e., the edges in the causal graph that are identifiable from shift interventions; in particular, we show that the resulting Markov equivalence classes can be more refined than previous notions of interventional Markov equivalence classes. 
We then propose two \textit{active} learning strategies in \rref{sec:algorithms} based on this characterization, which are guaranteed to identify the matching intervention.
These active strategies proceed in an adaptive manner, where each intervention is chosen based on all the information gathered so far.

In \rref{sec:theory}, we derive a worst-case lower bound on the number of interventions required to identify the matching intervention and show that the proposed strategies are optimal up to a logarithmic factor.
Notably, the proposed strategies may use \textit{exponentially} fewer interventions than previous active strategies for structure learning.
Finally, in \rref{sec:experiments}, we demonstrate also empirically that our proposed strategies outperform previous methods as well as other baselines in various settings. 

\subsection{Related Works}
\textbf{Experimental Design.} Previous work on experimental design in causality has considered two closely related goals: learning the most structural information about the underlying DAG given a fixed budget of interventions \citep{ghassami2018budgeted}, and fully identifying the underlying DAG while minimizing the total number or cost \citep{shanmugam2015learning,kocaoglu2017cost} of interventions.
These works can also be classified according to whether they consider a passive setting, i.e., the interventions are picked at a single point in time \citep{hyttinen2013experiment,shanmugam2015learning,kocaoglu2017cost}, or an active setting, i.e., interventions are decided based on the results of previous interventions \citep{he2008active,agrawal2019abcd,greenewald2019sample,squires2020active}.
The setting addressed in the current work is closest to the active, full-identification setting. The primary difference is that in order to match a desired mean, one does not require full identification; in fact, as we show in this work, we may require significantly less interventions.

\textbf{Causal Bandits.}
Another related setting is the bandit problem in sequential decision making, where an agent aims to maximize the cumulative reward by selecting an arm at each time step. Previous works considered the setting where there are causal relations between regrets and arms \citep{lattimore2016causal, lee2018structural, yabe2018causal}. Using a known causal structure, these works were able to improve the dependence on the total number of arms compared to previous regret lower-bounds \citep{bubeck2012regret,lattimore2016causal}. 
These results were further extended to the case when the causal structure is unknown \textit{a priori} \citep{de2020causal}. 
In all these works the variables are discrete, with arms given by \textit{do}-interventions (i.e., setting variables to a given value), so that there are only a finite number of arms.
In our work, we are concerned with the continuous setting and shift interventions, which corresponds to an infinite (continuous) set of arms.

\textbf{Correlation-based Approaches.}
There are also various correlation-based approaches for this task that do not make use of any causal information. For example, previous works have proposed score-based \citep{cahan2014cellnet}, entropy-based \citep{d2015systematic} and distance-based approaches \citep{rackham2016predictive} for cellular reprogramming. 
However, as shown in bandit settings \citep{lattimore2016causal}, when the system follows a causal structure, this structure can be exploited to learn the optimal intervention more efficiently.
Therefore, we here focus on developing a causal approach.

\section{Problem Setup}\label{sec:problem-step}
We now formally introduce the \textit{causal matching problem} of identifying an intervention to match the desired state in a causal system under a given metric. 
Following \citet{koller2009probabilistic}, a \textit{causal structural model} is given by a directed acyclic graph (DAG) $\cG$ with nodes $[p]=\{1,\dots , p\}$, and a set of random variables $X=\{X_1,...,X_p\}$ whose joint distribution $\rmP$ factorizes according to $\cG$. Denote by $\pa_{\cG}(i)=\{j\in [p]\mid j\to i\}$ the \textit{parents} of node $i$ in $\cG$.
An \textit{intervention} $I\subset [p]$ with multiple \textit{perturbation targets} $i\in I$ either removes all incoming edges to $X_i$ (\textit{hard} intervention) or modifies the conditional probability $\rmP(X_i|X_{\pa_{\cG}(i)})$ (\textit{soft} intervention) for all $i\in I$. This results in an interventional distribution $\rmP^{I}$. Given a desired joint distribution $\rmQ$ over $X$, the goal of causal matching is to find an \textit{optimal matching intervention} $I$ such that $\rmP^{I}$ best matches $\rmQ$ under some metric. In this paper, we address a special case of the causal matching problem, which we call \textit{causal mean matching}, where the distance metric between $\rmP^I$ and $\rmQ$ depends only on their expectations.

We focus on causal mean matching for a class of soft interventions, called \textit{shift interventions} \citep{rothenhausler2015backshift}. Formally, a shift intervention with perturbation targets $I\subset [p]$ and shift values $\{a_i\}_{i\in I}$ modifies the conditional distribution as $\rmP^{I}(X_i=x+a_i|X_{\pa_{\cG}(i)})=\rmP(X_i=x|X_{\pa_{\cG}(i)})$. Here, the shift values $\{a_i\}_{i\in I}$ are assumed to be deterministic.
We aim to find $I\subset [p]$ and $\{a_i\}_{i\in I}\in \mathbb{R}^{|I|}$ such that the \textit{mean} of $\rmP^I$ is closest to that of $\rmQ$, i.e., minimizes $ d(\bbE_{\rmP^{I}}(X), \bbE_{\rmQ}(X))$ for some metric $d$. In fact, as we show in the following lemma, there always exists a unique shift intervention, which we call the \emph{matching intervention}, that achieves \textit{exact mean matching}.\footnote{To lighten notation, we use $I$ to denote both the perturbation targets and the shift values of this intervention.}

\begin{lemma}\label{lm:1}
For any causal structural model and desired mean $\bbE_\rmQ (X)$,
there exists a unique shift intervention $I^*$ such that $\bbE_{\rmP^{I^*}}(X) = \bbE_{\rmQ}(X)$.
\end{lemma}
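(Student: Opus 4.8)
The plan is to regard the interventional mean as a function of the chosen shift values and show that this function is a bijection. Encode a shift intervention by a vector $\ba = (a_1,\dots,a_p)\in\bbR^p$, where $a_i$ is the shift applied at node $i$ (so $a_i=0$ encodes no perturbation there), let $\rmP^{\ba}$ denote the resulting distribution, and define $\Phi\colon\bbR^p\to\bbR^p$ by $\Phi(\ba):=\bbE_{\rmP^{\ba}}(X)$. Exact mean matching is precisely the equation $\Phi(\ba)=\bbE_{\rmQ}(X)$, so the existence and uniqueness of the matching intervention $I^*$ is equivalent to the statement that $\Phi$ is a bijection onto $\bbR^p$; this is what I would prove.

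The key structural fact I would exploit is that $\Phi$ is triangular with respect to any topological order $\pi$ of $\cG$. By the definition of a shift intervention, conditionally on its parents the variable $X_i$ is governed by its original mechanism plus the constant $a_i$; writing $X_i^{\ba}$ for the intervened variables and applying the tower property gives
$$\bbE[X_i^{\ba}] = \bbE\big[\,\bbE[X_i \mid X_{\pa_\cG(i)} = X_{\pa_\cG(i)}^{\ba}]\,\big] + a_i .$$
The first term on the right depends only on the joint distribution of $X_{\pa_\cG(i)}^{\ba}$, which in turn is determined solely by the shift values applied at the ancestors of $i$; in particular it involves neither $a_i$ nor any shift at $i$ or its descendants. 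Hence, fixing the topological order $\pi$, one can write $\Phi_i(\ba) = a_i + h_i(\ba_{\an(i)})$, where $h_i$ depends only on the shifts at nodes strictly preceding $i$ in $\pi$. This exhibits $\Phi$ as a lower-triangular map with unit diagonal in the order $\pi$.

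Given this triangular form, I would conclude by forward substitution along $\pi$. For a source node $i$ (appearing first in $\pi$), $h_i$ is the constant observational mean $\mu_i := \bbE_{\rmP}(X_i)$, so $a_i = \bbE_{\rmQ}(X_i) - \mu_i$ is forced and unique. Inductively, once the shifts at all ancestors of $i$ have been determined, $h_i(\ba_{\an(i)})$ is a fixed real number and the scalar equation $a_i + h_i(\ba_{\an(i)}) = \bbE_{\rmQ}(X_i)$ has the unique solution $a_i = \bbE_{\rmQ}(X_i) - h_i(\ba_{\an(i)})$. Sweeping through $\pi$ pins down every coordinate of $\ba$ uniquely, which simultaneously establishes existence (the constructed $\ba$ satisfies all $p$ equations) and uniqueness (no other choice can).

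The main obstacle is to justify the triangular decomposition rigorously rather than merely asserting it: one must verify that the inner conditional-mean term genuinely factors through the distribution of the ancestors alone, using the acyclicity of $\cG$ together with the precise relation $\rmP^{\ba}(X_i = x + a_i \mid X_{\pa_\cG(i)}) = \rmP(X_i = x \mid X_{\pa_\cG(i)})$, and that the interventional first moments are finite so that the subtractions defining $h_i$ are well-posed. I would emphasize that no linearity or additive-noise assumption is required — the (possibly nonlinear) mechanism is entirely absorbed into $h_i$, and the argument leans only on the additivity of the shift in each structural equation and on the existence of a topological order of $\cG$.
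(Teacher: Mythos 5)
Your proposal is correct and is essentially the paper's own argument: the paper likewise fixes a topological order and determines each shift value by forward substitution (its iterative construction of $I_1, I_2, \dots$ sets $a_{i_k} = \bbE_{\rmQ}(X_{i_k}) - \bbE_{\rmP^{I_{k-1}}}(X_{i_k})$, using exactly your observation that the parents' marginal, hence $h_i$, depends only on ancestor shifts), and its uniqueness proof via the smallest differing index is precisely your unit-diagonal triangularity argument. Your packaging of existence and uniqueness into a single statement that $\ba \mapsto \bbE_{\rmP^{\ba}}(X)$ is a triangular bijection is a cleaner presentation, but not a different proof.
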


We assume throughout that the underlying causal DAG $\cG$ is \textit{unknown}.
But we assume \textit{causal sufficiency} \citep{spirtes2000causation}, which excludes the existence of latent confounders, as well as access to enough observational data to determine the joint distribution $\rmP$ and thus the Markov equivalence class of $\cG$ \citep{andersson1997characterization}.
It is well-known that with enough interventions, the causal DAG $\cG$ becomes fully identifiable \citep{yang2018characterizing}.
Thus one strategy for causal mean matching is to first use interventions to fully identify the structure of $\cG$, and then solve for the matching intervention given full knowledge of the graph. 
However, in general this strategy requires more interventions than needed. 
In fact, the number of interventions required by such a strategy can be \textit{exponentially} larger than the number of interventions required by a strategy that directly attempts to identify the matching intervention, as illustrated in Figure~\ref{fig:shift-vs-structure-learning-gap} and proven in Theorem~\ref{thm:2}.

\begin{figure}[t]
     \centering
     \begin{subfigure}[b]{0.4\textwidth}
         \centering
         \includegraphics[width=0.64\textwidth]{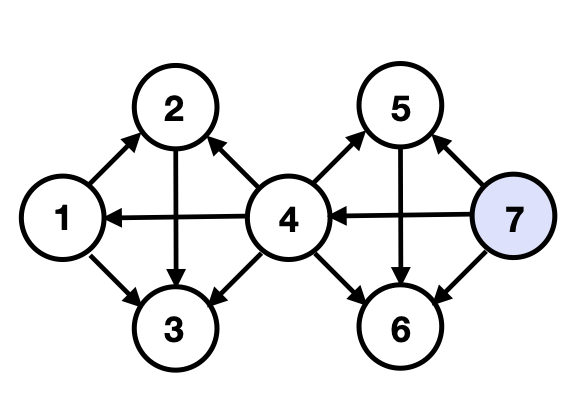}
         \caption{}
     \end{subfigure}
     \begin{subfigure}[b]{0.4\textwidth}
         \centering
         \includegraphics[width=0.64\textwidth]{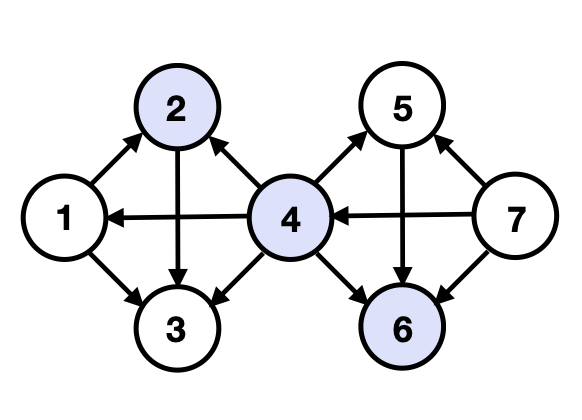}
         \caption{}
     \end{subfigure}
    \caption{
    Completely identifying a DAG can require exponentially more interventions than identifying the matching intervention. 
    Consider a graph constructed by joining $r$ size-$4$ cliques, where the matching intervention has the source node as the only perturbation target, as pictured in \textbf{(a)} with $r=2$ and the source node in purple;
    \textbf{(b)} shows the minimum size set intervention (in purple) that completely identifies the DAG, which grows as $O(r)$ \citep{squires2020active}.
    In Theorem \ref{thm:2}, we show that the matching intervention can be identified in $O(\log r)$ single-node interventions.}
    \label{fig:shift-vs-structure-learning-gap}
\end{figure}

In this work, we consider \textit{active} intervention designs, where a series of interventions are chosen adaptively to learn the matching intervention. This means that the information obtained after performing each intervention is taken into account for future choices of interventions. We here focus on the \textit{noiseless} setting, where for each intervention enough data is obtained to decide the effect of each intervention. Direct implications for the noisy setting are discussed in \rref{appendix:noisy}. To incorporate realistic cases in which the system cannot withstand an intervention with too many target variables, as is the case in CRISPR experiments, where knocking out too many genes at once often kills the cell, we consider the setting where there is a \textit{sparsity} constraint $S$ on the maximum number of perturbation targets in each intervention, i.e., we only allow $I$ where $|I| \leq S$.

\section{Identifiability}\label{sec:identifiability}
In this section, we characterize and provide a graphical representation of the \emph{shift interventional Markov equivalence class} (shift-$\mathcal{I}$-MEC), i.e., the equivalence class of DAGs that is identifiable by shift interventions $\mathcal{I}$.
We also introduce \textit{mean interventional faithfulness}, an assumption that guarantees identifiability of the underlying DAG up to its shift-$\mathcal{I}$-MEC.
Proofs are given in \rref{appendix:identify}.

\subsection{Shift-interventional Markov Equivalence Class}\label{sec:id-1}

For any DAG $\cG$ with nodes $[p]$, a distribution $f$ is \textit{Markov} with respect to $\cG$ if it factorizes according to $f(X)=\prod_{i\in [p]} f(X_i|X_{\pa_{\cG}(i)})$. 
Two DAGs are \textit{Markov equivalent} or in the same \textit{Markov equivalence class} (MEC) if any positive distribution $f$ which is Markov with respect to (w.r.t.) one DAG is also Markov w.r.t. the other DAG. 
With observational data, a DAG is only identifiable up to its MEC \citep{andersson1997characterization}. 
However, the identifiability improves to a smaller class of DAGs with interventions. 
For a set of interventions $\cI$ (not necessarily shift interventions), the pair $(f, \{f^I\}_{I\in \cI})$ is \textit{$\cI$-Markov} w.r.t.~$\cG$ if $f$ is Markov w.r.t.~$\cG$ and $f^I$ factorizes according to
\begin{equation*}
f^I(X) = \prod_{i\notin I}f(X_i|X_{\pa_{\cG}(i)}) \prod_{i\in I} f^{I}(X_i|X_{\pa_{\cG}(i)}),\quad \forall I \in \cI.   
\end{equation*}
Similarly, the \textit{interventional Markov equivalence class} ($\cI$-MEC) of a DAG can be defined, and \citet{yang2018characterizing} provided a structural characterization of the $\cI$-MEC for general interventions $\cI$ (not necessarily shift interventions).

Following, we show that if $\cI$ consists of shift interventions, then the $\cI$-MEC becomes smaller, i.e., identifiability of the causal DAG is improved. 
The proof utilizes Lemma~\ref{lm:2} on the relationship between conditional probabilities. 
For this, denote by $\an_{\cG}(i)$ the ancestors of node $i$, i.e., all nodes $j$ for which there is a directed path from $j$ to $i$ in $\cG$.
For a subset of nodes $I$, we say that $i\in I$ is a \textit{source} w.r.t. $I$ if $\an_{\cG}(i)\cap I=\varnothing$. 
A subset $I'\subset I$ is a \textit{source} w.r.t.~$I$ if every node in $I'$ is a source w.r.t.~$I$.

\begin{lemma}\label{lm:2}
For any distribution $f$ that factorizes according to $\cG$, the interventional distribution $f^I$ for a shift intervention $I\subset [p]$ with shift values $\{a_i\}_{i\in I}$ satisfies 
\[
\bbE_{f^I}(X_i) = \bbE_{f}(X_i) + a_i, 
\]
for any source $i\in I$. Furthermore, if $i\in I$ is not a source w.r.t. $I$, then there exists a positive distribution $f$ such that $\bbE_{f^I}(X_i) \neq \bbE_{f}(X_i) + a_i$.
\end{lemma}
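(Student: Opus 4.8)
The plan is to treat the two halves separately: the equality for sources follows from an ancestral-set argument combined with the tower property, while the failure for non-sources is exhibited by a direct linear-Gaussian construction in which the shift propagates along a directed path.

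For the first half, let $i\in I$ be a source w.r.t.\ $I$ and consider the ancestral closure $A:=\an_\cG(i)\cup\{i\}$. Because $i$ is a source, no strict ancestor of $i$ lies in $I$, so among the intervened nodes only $i$ belongs to $A$; that is, $A\cap I=\{i\}$. I would first observe that, since $A$ is ancestrally closed, the marginal law of $X_A$ under any distribution factorizing on $\cG$ depends only on the conditionals $f(X_k\mid X_{\pa_\cG(k)})$ for $k\in A$. Passing from $f$ to $f^I$ leaves every such conditional unchanged except the one at $i$, which is replaced by its $a_i$-shift. In particular the marginal of $X_{\pa_\cG(i)}$ (a subset of $A\setminus\{i\}$) is identical under $f$ and $f^I$. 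The shift-intervention definition gives $\bbE_{f^I}(X_i\mid X_{\pa_\cG(i)})=\bbE_f(X_i\mid X_{\pa_\cG(i)})+a_i$ pointwise in the conditioning value, and taking the outer expectation over $X_{\pa_\cG(i)}$ --- the same measure under both distributions --- yields $\bbE_{f^I}(X_i)=\bbE_f(X_i)+a_i$ by the tower property.

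For the second half, suppose $i$ is not a source, so there is $j\in I\cap\an_\cG(i)$ with $j\neq i$ and a directed path $j\to\cdots\to i$. I would instantiate $f$ as a linear Gaussian structural model, $X_k=\sum_{l\in\pa_\cG(k)}b_{lk}X_l+\epsilon_k$, with all weights $b_{lk}\neq 0$ and independent standard-Gaussian noise, so that $f$ is positive. Writing $B$ for the weighted adjacency matrix, the interventional mean shift $\delta:=\bbE_{f^I}(X)-\bbE_f(X)$ solves $\delta=B^\top\delta+a$, hence $\delta=(\id-B^\top)^{-1}a$, where $a$ has entries $a_k$ on $I$ and $0$ elsewhere. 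Expanding the inverse as a path sum gives $\delta_i=a_i+\sum_{k\in I,\,k\neq i}t_{ki}\,a_k$, where $t_{ki}$ is the sum over directed $k$-to-$i$ paths of the products of the edge weights. Since the path $j\to\cdots\to i$ exists and all weights are nonzero, $t_{ji}\neq 0$; then for shift values with $a_j\neq 0$ (and the remaining ancestor shifts chosen so the sum does not cancel) we obtain $\delta_i\neq a_i$, i.e.\ $\bbE_{f^I}(X_i)\neq\bbE_f(X_i)+a_i$, as required.

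The step I expect to be most delicate is the justification in the first half that intervening on nodes outside the ancestral closure of $i$ --- including possible descendants of $i$ that also lie in $I$ --- leaves the marginal law of $X_{\pa_\cG(i)}$ unchanged. This is exactly the fact that marginals of ancestrally closed sets are insensitive to downstream modifications of the DAG factorization, which I would isolate and invoke cleanly rather than argue ad hoc. The second half is essentially a computation; its only subtlety is ruling out accidental cancellation in the path sum, which is handled by a generic choice of nonzero weights and shift values.
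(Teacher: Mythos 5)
Your first half is, in substance, exactly the paper's argument. The paper observes that for a source $i$ w.r.t.\ $I$ the set $\pa_{\cG}(i)$ contains no member or descendant of $I$, so $f^I(X_{\pa_{\cG}(i)})=f(X_{\pa_{\cG}(i)})$, whence $f^I(X_i=x+a_i)=f(X_i=x)$ and the mean shifts by $a_i$. Your ancestral-closure formulation with the tower property is a slightly more systematic way of justifying the same step, and the ``delicate'' point you isolate (marginals of ancestrally closed sets are insensitive to modifications of downstream factors) is precisely what the paper's one-line claim rests on; no gap there.

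In the second half you use the same family of counterexample as the paper (a linear Gaussian SEM in which shifts at intervened ancestors propagate along directed paths), but your write-up contains one genuine flaw: the shift values $\{a_k\}_{k\in I}$ are part of the given intervention and are universally quantified in the lemma --- only the distribution $f$ may be chosen. So the parenthetical ``and the remaining ancestor shifts chosen so the sum does not cancel,'' and the closing ``generic choice of nonzero weights \emph{and shift values},'' are not legitimate moves. The slip is repairable entirely within your framework, by genericity over the edge weights alone: in your expansion $\delta_i - a_i = \sum_{k\in I\cap\an_{\cG}(i)} t_{ki}(B)\,a_k$, each pair consisting of a start node $k$ and a directed path $P$ from $k$ to $i$ contributes the monomial $\prod_{e\in P}B_e$, and distinct pairs yield distinct monomials (a directed path's edge set determines its start node), so the expression is identically zero in $B$ only if $a_k=0$ for every intervened ancestor $k$. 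Since some intervened ancestor has $a_j\neq 0$ (shift values are implicitly nonzero; the paper likewise relies on $a_{i'}\neq 0$), a generic choice of nonzero weights already forces $\delta_i\neq a_i$ for the \emph{given} shifts. For comparison, the paper argues explicitly rather than generically: it picks $i'$ to be a source of $I$ among the ancestors of $i$ (so by the first half its mean shifts by exactly $a_{i'}$), routes one path $i'\to\cdots\to i$ with first edge weight $2|a_i|/a_{i'}$ and unit weights thereafter, and sets all remaining weights to a small $\epsilon>0$ so that the propagated contribution dominates. Once your quantifier slip is fixed, your genericity route is if anything more robust --- for instance, it is indifferent to other intervened nodes lying on the chosen path, a cancellation scenario that the paper's explicit weighting does not address.
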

Hence, we can define the \textit{shift-$\cI$-Markov property} and \textit{shift-interventional Markov equivalence class} (shift-$\cI$-MEC) as follows.
\begin{definition}\label{def:1}
For a set of shift interventions $\cI$, the pair $(f, \{f^I\}_{I\in \cI})$ is \textnormal{shift-$\cI$-Markov} w.r.t. $\cG$ if $(f, \{f^I\}_{I\in \cI})$ is $\cI$-Markov w.r.t. $\cG$ and 
\[
\bbE_{f^I}(X_i) = \bbE_{f}(X_i) + a_i, \quad \forall~i\in I\in\cI~s.t.~\an_{\cG}(i)\cap I = \varnothing.
\]
Two DAGs are in the same \textnormal{shift-$\cI$-MEC} if any positive distribution that is shift-$\cI$-Markov w.r.t.~one DAG is shift-$\cI$-Markov also w.r.t. the other DAG.
\end{definition}

The following graphical characterizations are known: Two DAGs are in the same MEC if and only if they share the same skeleton (adjacencies) and v-structures (induced subgraphs $i\rightarrow j \leftarrow k$), see \citet{verma1991equivalence}. 
For general interventions $\cI$, two DAGs are in the same $\cI$-MEC, if they are in the same MEC and they have the same directed edges $\{i\rightarrow j|i\in I, j\notin I, I\in\cI, i-j\}$, where $i-j$ means that either $i \to j$ or $j \to i$ \citep{hauser2012characterization, yang2018characterizing}. 
In the following theorem, we provide a graphical criterion for two DAGs to be in the same shift-$\cI$-MEC.
\begin{theorem}\label{thm:1}
Let $\cI$ be a set of shift interventions. Then two DAGs $\cG_1$ and $\cG_2$ belong to the same shift-$\cI$-MEC if and only if they have the same skeleton, v-structures, directed edges $\{i\rightarrow j|i\in I, j\notin I, I\in\cI, i-j\}$, as well as source nodes of $I$ for every $I \in \cI$.
\end{theorem}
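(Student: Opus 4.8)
The plan is to first collapse the first three conditions into the known characterization of the ordinary $\cI$-MEC. By \citet{yang2018characterizing}, two DAGs agree on skeleton, v-structures, and the directed edges $\{i\to j\mid i\in I, j\notin I, I\in\cI, i-j\}$ if and only if they induce the same set of $\cI$-Markov pairs $(f,\{f^I\})$. Since, by Definition~\ref{def:1}, the shift-$\cI$-Markov property is exactly the $\cI$-Markov property augmented with the source mean constraints $\bbE_{f^I}(X_i)=\bbE_f(X_i)+a_i$ ranging over the sources of each $I$, it suffices to prove that $\cG_1$ and $\cG_2$ lie in the same shift-$\cI$-MEC if and only if they lie in the same $\cI$-MEC and, in addition, every $I\in\cI$ has the same set of source nodes in both DAGs. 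Writing $\cS_j(I)$ for the sources of $I$ in $\cG_j$, the point is that the appended mean constraints depend on the DAG only through $\cS_j(I)$, whereas the quantities $\bbE_f(X_i)$, $\bbE_{f^I}(X_i)$, and $a_i$ are graph-free.

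For the ``if'' direction I would argue directly. If $\cG_1$ and $\cG_2$ share the same $\cI$-MEC, then a pair is $\cI$-Markov w.r.t.\ $\cG_1$ exactly when it is $\cI$-Markov w.r.t.\ $\cG_2$; and if moreover $\cS_1(I)=\cS_2(I)$ for all $I$, then the appended mean constraints are literally the same system of equations for both DAGs. Hence the two shift-$\cI$-Markov classes coincide, so $\cG_1$ and $\cG_2$ lie in the same shift-$\cI$-MEC.

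For the ``only if'' direction I would exploit genuine shift pairs together with Lemma~\ref{lm:2}. Given any positive $f$ that is Markov w.r.t.\ $\cG_1$, let $f^I$ be the actual shift intervention applied to $f$ along $\cG_1$; by the first part of Lemma~\ref{lm:2} this pair satisfies all source mean constraints of $\cG_1$, so it is shift-$\cI$-Markov w.r.t.\ $\cG_1$ and therefore, by hypothesis, also w.r.t.\ $\cG_2$, forcing $f$ to be Markov w.r.t.\ $\cG_2$. The symmetric argument then yields the same MEC, i.e.\ the same skeleton and v-structures. To recover the remaining $\cI$-directed edges I would show that genuine shifts already distinguish the orientation of any edge $i-j$ with $i\in I$ and $j\notin I$: a nonzero shift of $i$ propagates along $i\to j$ and perturbs $\bbE_{f^I}(X_j)$ under that orientation, whereas under $j\to i$ this contribution is absent, so the $\cI$-Markov factorization w.r.t.\ the wrong DAG is violated for a suitably chosen $f$. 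This is the $\cI$-MEC argument of \citet{yang2018characterizing} instantiated with shift interventions, and it gives the same $\cI$-MEC.

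Finally, for the source condition I would argue by contraposition. Suppose some $i$ is a source of $I$ in $\cG_1$ but not in $\cG_2$. By the second part of Lemma~\ref{lm:2} applied to $\cG_2$, there is a positive $f$ that is Markov w.r.t.\ $\cG_2$ whose genuine shift along $\cG_2$ yields $\bbE_{f^I}(X_i)\neq\bbE_f(X_i)+a_i$. This genuine shift pair is shift-$\cI$-Markov w.r.t.\ $\cG_2$ (first part of Lemma~\ref{lm:2}, applied to the sources of $\cG_2$), yet it fails the source constraint that $\cG_1$ imposes at $i$, so it is not shift-$\cI$-Markov w.r.t.\ $\cG_1$; hence the two classes differ. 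The step I expect to be the main obstacle is the $\cI$-directed-edge case: one must verify that restricting to shift interventions loses no orientation information relative to general interventions, which requires carefully tracking how a shift propagates along, and possibly cancels across, multiple directed paths so that the mean of a downstream node genuinely detects the edge direction. The source step, by contrast, follows cleanly from Lemma~\ref{lm:2}.
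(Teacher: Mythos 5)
Your proposal is correct and follows essentially the same route as the paper: reduce shift-$\cI$-MEC equivalence to $\cI$-MEC equivalence (discharged via the graphical characterization of \citet{yang2018characterizing}, Theorem 3.9) plus equality of the source nodes of every $I\in\cI$, with the two parts of Lemma~\ref{lm:2} supplying, exactly as in the paper, the two directions of the source condition. If anything you are slightly more careful than the paper, which asserts the reduction ``same shift-$\cI$-MEC iff same $\cI$-MEC plus equivalent source mean constraints'' as immediate from Definition~\ref{def:1}, whereas you correctly observe that the forward direction (equal shift-$\cI$-Markov classes forcing equal $\cI$-Markov classes, in particular recovering the directed edges $i\rightarrow j$, $i\in I$, $j\notin I$, from shift interventions alone) requires the genuine-shift-pair argument you sketch.
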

In other words, two DAGs are in the same shift-$\cI$-MEC if and only if they are in the same $\cI$-MEC and they have the same source perturbation targets. 
Figure \ref{fig2} shows an example; in particular,  
to represent an MEC, we use the \textit{essential graph} (EG), which has the same skeleton as any DAG in this class and directed edges $i\rightarrow j$ if $i\rightarrow j$ for every DAG in this class. 
The essential graphs corresponding to the MEC, $\cI$-MEC and shift-$\cI$-MEC of a DAG $\cG$ are referred to as EG, $\cI$-EG and shift-$\cI$-EG of $\cG$, respectively.
They can be obtained from the aforementioned graphical criteria (along with a set of logical rules known as the Meek rules \citep{meek2013causal}; see details in \rref{appendix:pre}).
Figure \ref{fig2} shows an example of EG, $\cI$-EG and shift-$\cI$-EG of a four-node DAG.

\begin{figure}[t]
     \centering
     \begin{subfigure}[b]{0.2\textwidth}
         \centering
         \includegraphics[width=0.65\textwidth]{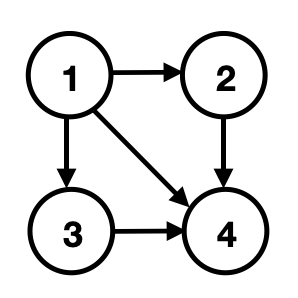}
         \caption{}
     \end{subfigure}
     \begin{subfigure}[b]{0.2\textwidth}
         \centering
         \includegraphics[width=0.65\textwidth]{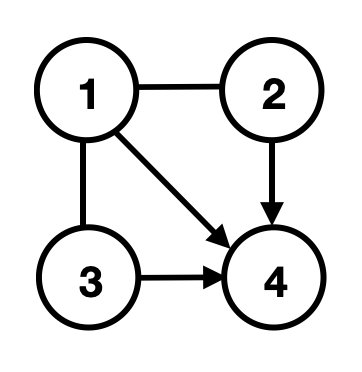}
         \caption{}
     \end{subfigure}
    \begin{subfigure}[b]{0.2\textwidth}
         \centering
         \includegraphics[width=0.65\textwidth]{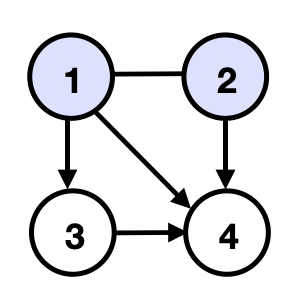}
         \caption{}\label{fig2b}
    \end{subfigure}
    \begin{subfigure}[b]{0.2\textwidth}
         \centering
         \includegraphics[width=0.65\textwidth]{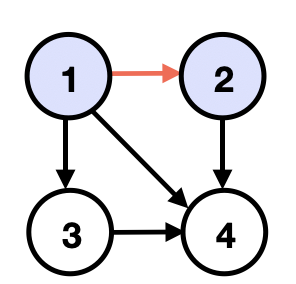}
         \caption{}
    \end{subfigure}
        \caption{Three types of essential graphs. \textbf{(a).} DAG $\cG$; \textbf{(b).} EG of $\cG$; \textbf{(c).} $\cI$-EG of $\cG$ where $\cI$ contains one intervention with perturbation targets $X_1,X_2$ (purple); \textbf{(d).} shift-$\cI$-EG of $\cG$, which can identify an additional edge compared to $\cI$-EG (red).
        }
        \label{fig2}
\end{figure}

\subsection{Mean Interventional Faithfulness}
For the causal mean matching problem, the underlying $\cG$ can be identified from shift interventions $\cI$ up to its shift-$\cI$-MEC.
However, we may not need to identify the entire DAG to find the matching intervention $I^*$. 
Lemma \ref{lm:1} implies that if $i$ is neither in nor downstream of $I^*$, then the mean of $X_i$ already matches the desired state, i.e., $\bbE_{\rmP}(X_i)= \bbE_{\rmQ}(X_i)$;
this suggest that these variables may be negligible when learning $I^*$. 
Unfortunately, the reverse is not true; one may design ``degenerate" settings where a variable is in (or downstream of) $I^*$, but its marginal mean is also unchanged:
\begin{example}
Let $X_3 = X_1 + 2 X_2$, with $\bbE_\rmP (X_1) = 1$ and $\bbE_\rmP (X_2) = 1$, so that $\bbE_\rmP (X_3) = 3$.
Suppose $I^*$ is a shift intervention with perturbation targets $\{ X_1, X_2, X_3 \}$, with $a_1 = 1$, $a_2 = -1$, and $a_3 = 1$.
Then $\bbE_{\rmP^I} (X_3) = 3$, i.e., the marginal mean of $X_3$ is unchanged under the intervention.
\end{example}

Such degenerate cases arise when the shift on a node $X_j$ (deemed $0$ if not shifted) exactly cancels out the contributions of shifts on its ancestors.
Formally, the following assumption rules out these cases.

\begin{assumption}[Mean Interventional Faithfulness]\label{assumption:1}
If $i\in [p]$ satisfies $\bbE_{\rmP}(X_i)= \bbE_{\rmQ}(X_i)$, then $i$ is neither a nor downstream of any perturbation target, i.e., $i\notin I^*, \an_{\cG}(i)\cap I^*=\varnothing$.
\end{assumption}
This is a particularly weak form of faithfulness, which is implied by interventional faithfulness assumptions in prior work \citep{yang2018characterizing,squires2020permutation,jaber2020causal}. 

Let $T$ be the collection of nodes $i\in [p]$ for which $\bbE_{\rmP}(X_i)\neq \bbE_{\rmQ}(X_i)$.
The following lemma shows that under the mean interventional faithfulness assumption we can focus on the subgraph $\cG_{T}$ induced by $T$, since $I^*\subset T$ and interventions on $X_{T}$ do not affect $X_{[p]\setminus T}$.
\begin{lemma}\label{lm:3}
If \rref{assumption:1} holds, then any edge $i-j$ with $j\in T$ and $i \notin T$ has orientation $j\leftarrow i$. Conversely, if \rref{assumption:1} does not hold, then there exists some $i - j$, $j \in T$, $i \not\in T$ such that $j \to i$.
\end{lemma}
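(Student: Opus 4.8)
The plan is to translate \rref{assumption:1} into the single statement that $T$ equals the set of nodes lying in or downstream of $I^*$, and then read off the edge orientations from the fact that this latter set is closed under taking children. Write $D := \{i \in [p] : i \in I^* \text{ or } \an_\cG(i)\cap I^* \neq \varnothing\}$ for the perturbation targets together with their descendants. Two elementary facts drive the argument. First, a shift intervention only alters the distribution of variables in or downstream of its targets, so for $i \notin D$ we have $\bbE_{\rmP^{I^*}}(X_i) = \bbE_\rmP(X_i)$; since $I^*$ matches the desired mean, $\bbE_\rmP(X_i)=\bbE_\rmQ(X_i)$ and hence $i\notin T$. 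This gives $T \subseteq D$ unconditionally. Second, I will invoke \rref{lm:2}: for a source $i$ of $I^*$ we have $\bbE_{\rmP^{I^*}}(X_i)=\bbE_\rmP(X_i)+a_i$, and because the matching intervention carries nonzero shift values on its targets, $a_i \neq 0$ forces $\bbE_\rmP(X_i)\neq\bbE_\rmQ(X_i)$, i.e. every source of $I^*$ lies in $T$. Finally, $D$ is closed under children: if $k\in D$ and $k\to l$, then $k$ itself (when $k\in I^*$) or an $I^*$-ancestor of $k$ is an ancestor of $l$, so $l\in D$.

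For the forward direction, assume \rref{assumption:1}, whose contrapositive reads $D\subseteq T$; combined with $T\subseteq D$ this yields $T=D$. Now take an edge between $i\notin T$ and $j\in T$ and suppose toward a contradiction that it is oriented $j\to i$. Then $j$ is a parent, hence an ancestor, of $i$; since $j\in T=D$, either $j\in I^*$ or $j$ has an $I^*$-ancestor, and in either case $i$ acquires an $I^*$-ancestor, so $i\in D=T$, contradicting $i\notin T$. Hence the only admissible orientation is $i\to j$, i.e. $j\leftarrow i$.

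For the converse, suppose \rref{assumption:1} fails, so there is a node $k\in D\setminus T$. Because $k\in D$, the nonempty set $(\{k\}\cup\an_\cG(k))\cap I^*$ has an ancestrally-minimal member $s$, which is then a source of $I^*$ admitting a directed path $s=v_0\to v_1\to\cdots\to v_m=k$. By the second fact $s\in T$, while $k\notin T$, so along this path there is a first index $l$ with $v_l\notin T$; then $v_{l-1}\in T$, $v_l\notin T$, and the edge $v_{l-1}\to v_l$ is exactly an edge of the claimed form.

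The main obstacle, and really the only place requiring care, is the second fact: establishing that source perturbation targets always change the marginal mean. This rests on \rref{lm:2} together with the convention that the (unique) matching intervention $I^*$ carries nonzero shift values on all of its targets; I would make this nondegeneracy explicit, since a source target with $a_i=0$ would be a vacuous target and would also conflict with the uniqueness asserted in \rref{lm:1}. Everything else is bookkeeping with ancestor sets in the DAG.
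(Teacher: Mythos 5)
Your proof is correct and takes essentially the same route as the paper's: both arguments rest on Lemma~\ref{lm:1} (shift interventions only affect targets and their descendants, giving $T\subseteq D$) and on Lemma~\ref{lm:2} together with the nonzero-shift convention forced by uniqueness (so every source of $I^*$ lies in $T$), prove the forward direction by contradiction on the orientation, and prove the converse by extracting the first $T$-exiting edge along a directed path from a source of $I^*$ down to the faithfulness-violating node. Your packaging via the set $D$ and the identity $T=D$, and your ancestrally-minimal-element device that merges the paper's two converse cases ($i\in I^*$ versus $\an_{\cG}(i)\cap I^*\neq\varnothing$) into one, are organizational rather than substantive differences.
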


\section{Algorithms}\label{sec:algorithms}
Having shown that shift interventions allow the identification of source perturbation targets and that the mean interventional faithfulness assumption allows reducing the problem to an induced subgraph, we now propose two algorithms to learn the matching intervention. 
The algorithms actively pick a shift intervention $I_t$ at time $t$ based on the current shift-interventional essential graph (shift-$\cI_{t}$-EG). 
Without loss of generality and for ease of discussion, we assume that the mean interventional faithfulness assumption holds and we therefore only need to consider $\cG_{T}$.
In \rref{appendix:alg}, we show that the faithfulness violations can be identified and thus \rref{assumption:1} is not necessary for identifying the matching intervention, but additional interventions may be required.

\begin{figure}[t]
     \centering
     \begin{subfigure}[b]{0.32\textwidth}
         \centering
         \includegraphics[width=0.52\textwidth]{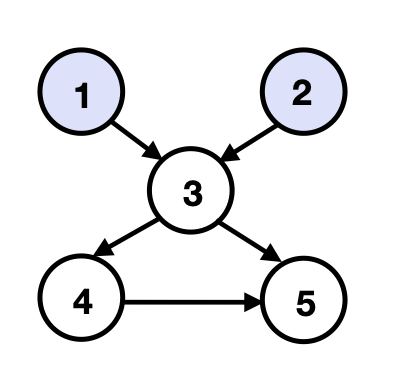}
         \caption{}
     \end{subfigure}
     \begin{subfigure}[b]{0.32\textwidth}
         \centering
         \includegraphics[width=0.52\textwidth]{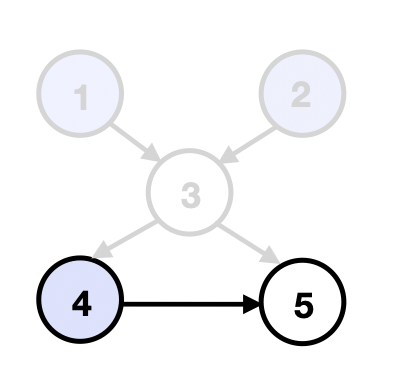}
         \caption{}
     \end{subfigure}
     \begin{subfigure}[b]{0.32\textwidth}
         \centering
         \includegraphics[width=0.52\textwidth]{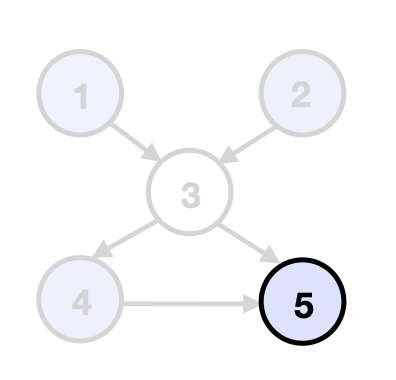}
         \caption{}
     \end{subfigure}
        \caption{Learning $I^*$ when the structure is known. Undimmed parts represent the current subgraph with source nodes (in purple). $I^*=\{1,2,4,5\}$ is solved in three steps. Shift values are omitted. \textbf{(a).} $\cG_{T}$ and $U_{T}$; \textbf{(b).} $\cG_{T_1}$ and $U_{T_1}$; \textbf{(c).} $\cG_{T_2}$ and $U_{T_2}$.
        }
        \label{fig:upstream-search-example}
\end{figure}

\textbf{Warm-up: Upstream Search.} Consider solving for the matching intervention $I^*$ when the structure of $\cG_{T}$ is known, i.e., the current shift-$\cI_{t}$-EG is fully directed. 
Let $U_{T}=\{i|i\in T, \an_{\cG_{T}}(i)\cap T = \varnothing \}$ be the non-empty set of source nodes in $T$. 
We make the following key observation.
\begin{observation}\label{obs:1}
$U_{T}\subset I^*$, and the shift values are $a_{i}= \bbE_{\rmQ}(X_i) - \bbE_{\rmP}(X_i)$~for each $i \in U_T$.
\end{observation}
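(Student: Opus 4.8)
The plan is to prove the two assertions separately, in both cases leaning on the fact that the mean interventional faithfulness assumption (\rref{assumption:1}) forces $I^* \subseteq T$, together with \rref{lm:2} and \rref{lm:3}. The first preliminary step I would establish is that $T$ is closed under taking descendants: by \rref{lm:3} no edge leaves $T$ (any edge between $T$ and its complement points into $T$), so every child --- and hence every descendant --- of a node in $T$ again lies in $T$. This has the useful consequence that for any $i \in T$, a directed $\cG$-path into $i$ originating in $T$ stays entirely inside $T$, so $\an_{\cG_{T}}(i) = \an_{\cG}(i) \cap T$. In particular, a source node $i \in U_{T}$ satisfies $\an_{\cG}(i) \cap T = \varnothing$, and since $I^* \subseteq T$ this gives $\an_{\cG}(i) \cap I^* = \varnothing$; that is, $i$ has no ancestor among the perturbation targets.

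Next I would show $U_{T} \subseteq I^*$ by contradiction. Fix $i \in U_{T}$ and suppose $i \notin I^*$. By the previous step $i$ has no ancestor in $I^*$ either, so $X_i$ is neither a perturbation target nor downstream of one. A shift intervention alters only the structural assignments of its targets, and these shifts propagate solely to descendants; hence the marginal law of $X_i$ is untouched and $\bbE_{\rmP^{I^*}}(X_i) = \bbE_{\rmP}(X_i)$. But $I^*$ is the matching intervention, so $\bbE_{\rmP^{I^*}}(X_i) = \bbE_{\rmQ}(X_i)$, forcing $\bbE_{\rmP}(X_i) = \bbE_{\rmQ}(X_i)$. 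This contradicts $i \in T$, which by definition means $\bbE_{\rmP}(X_i) \neq \bbE_{\rmQ}(X_i)$. Therefore $i \in I^*$.

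Finally, to pin down the shift values, I would invoke \rref{lm:2}. For $i \in U_{T}$ we have just shown $i \in I^*$ and, from the first step, $\an_{\cG}(i) \cap I^* = \varnothing$, so $i$ is a source w.r.t.\ $I^*$. \rref{lm:2} then yields $\bbE_{\rmP^{I^*}}(X_i) = \bbE_{\rmP}(X_i) + a_i$, and combining this with the matching identity $\bbE_{\rmP^{I^*}}(X_i) = \bbE_{\rmQ}(X_i)$ gives $a_i = \bbE_{\rmQ}(X_i) - \bbE_{\rmP}(X_i)$, as claimed.

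I expect the main obstacle to be the first, book-keeping step: correctly arguing that source-ness in the induced subgraph $\cG_{T}$ transfers to the absence of $I^*$-ancestors in the full graph $\cG$. The subtlety is that ancestry in an induced subgraph can in principle differ from ancestry in $\cG$, and it is exactly the descendant-closedness of $T$ (a consequence of \rref{lm:3}, hence of \rref{assumption:1}) that rules out a directed $\cG$-path into $i$ which exits and re-enters $T$. Once the identification $\an_{\cG_{T}}(i) = \an_{\cG}(i)\cap T$ is in hand, both the membership claim and the shift-value computation follow directly from \rref{lm:2} and the defining property of $I^*$.
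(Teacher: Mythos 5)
Your proposal is correct and takes essentially the same route as the paper's own (very terse) justification: under \rref{assumption:1} we have $I^*\subseteq T$, sources of $\cG_T$ have no ancestors among the perturbation targets, so shifting other variables cannot move their means — forcing $U_T\subseteq I^*$ since these means must change — and \rref{lm:2} then pins down the shift values. Your explicit verification that $T$ is descendant-closed, so that $\an_{\cG_{T}}(i)=\an_{\cG}(i)\cap T$ and source-ness transfers from the induced subgraph to the full graph, is a correct filling-in of a bookkeeping step the paper leaves implicit.
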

This follows since shifting other variables in $T$ cannot change the mean of nodes in $U_T$.
Further, the shifted means of variables in $U_T$ match the desired mean (Lemma \ref{lm:2}). 
Given the resulting intervention $U_T$, we obtain a new distribution $\rmP^{U_T}$.
Assuming mean interventional faithfulness on this distribution, we may now remove those variables whose means in $\rmP^{U_T}$ already match $\rmQ$. 
We then repeat this process on the new set of unmatched source nodes, $T_1$, to compute the corresponding shift intervention $U_{T_1}$.
Repeating until we have matched the desired mean for all variables yields $I^*$.
We illustrate this procedure in Figure~\ref{fig:upstream-search-example}.

The idea of upstream search extends to shift-$\cI_{t}$-EG with partially directed or undirected $\cG_T$. In this case, if a node or nodes of $\cG_T$ are identified as source, Observation \ref{obs:1} still holds. Hence, we solve a part of $I^*$ with these source nodes and then intervene on them to reduce the unsolved graph size.

\textbf{Decomposition of Shift Interventional Essential Graphs:} In order to find the source nodes, we decompose the current shift-$\cI_t$-EG into undirected components. \citet{hauser2014two} showed that every interventional essential graph is a chain graph with chordal chain components, where the orientations in one chain component do not affect the orientations in other components.\footnote{The \textit{chain components} of a chain graph are the undirected connected components after removing all its directed edges, and an undirected graph is \textit{chordal} if all cycles of length greater than 3 contain a chord.} By a similar argument, we can obtain an analogous decomposition for shift interventional essential graphs, and show that there is at least one chain component with no incoming edges.
Let us separate out all of the chain components of shift-$\cI_t$-EG with no incoming edges. The following lemma proves that all sources are contained within these components.
\begin{lemma}\label{lm:4}
For any shift-$\cI$-EG of $\cG$, each chain component has exactly one source node w.r.t.~this component. 
This node is a source w.r.t.~$\cG$ if and only if there are no incoming edges to this component.
\end{lemma}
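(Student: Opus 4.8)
The plan is to analyze the structure of a shift-$\cI$-EG as a chain graph with chordal chain components and exploit this decomposition. First I would recall from the established characterization that every shift-$\cI$-EG is a chain graph whose chain components are chordal, so that within each undirected connected component the orientations are locally consistent and do not propagate across components. The key to the first claim—that each chain component contains \emph{exactly} one source node with respect to that component—is the chordality of the component. A chordal graph, when we consider the partial orientations consistent with acyclicity (i.e., the DAGs in the shift-$\cI$-MEC restricted to this component), admits at least one node that can serve as a topological source. I would argue existence by taking any DAG $\cG$ in the shift-$\cI$-MEC and picking the node within the component that comes earliest in a topological order of $\cG$; this node has no incoming edges from within the component, hence is a source w.r.t.~the component.

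For uniqueness, the main work is to show that no two distinct nodes in the same chain component can both be sources. Here I would use that a chain component is an undirected connected chordal graph in the essential graph, meaning every edge inside it is genuinely undirected (unoriented across the whole equivalence class). If two nodes $u,v$ in the same component were both sources w.r.t.~the component, I would derive a contradiction by building a directed path or by using connectivity: since the component is connected, there is an undirected path between $u$ and $v$, and along this path the requirement that both endpoints be sources (no incoming edges within the component) forces an orientation inconsistency. More carefully, I expect to invoke a property of chordal essential-graph components—that they behave like a single ``UCCG'' in which one can always orient to make any chosen vertex the unique source, but not two simultaneously, because acyclicity plus connectivity means a source dominates the local order. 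The cleanest route is probably to note that a source w.r.t.~the component is precisely a node that is a source in \emph{some} consistent DAG, and then show that the set of in-component sources, shared across \emph{all} DAGs in the class, must be a singleton using Theorem~\ref{thm:1} (which says source nodes of each $I\in\cI$ are invariants of the shift-$\cI$-MEC).

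For the second claim—that this unique in-component source is a source w.r.t.~all of $\cG$ if and only if the component has no incoming directed edges—I would argue both directions directly. If the component has no incoming edges, then the unique source $s$ of the component has no incoming edges from inside (by definition of in-component source) and none from outside (since the whole component has no incoming directed edges), so $s$ has no incoming edges at all in the relevant DAG and is therefore a source w.r.t.~$\cG$. Conversely, if the component does have an incoming directed edge $j\to k$ with $k$ in the component and $j$ outside, then $k$ has an ancestor outside, and I would show that the unique in-component source must also acquire an ancestor: because the component is connected and $k$ is reachable from the source within the component along undirected edges, orientations consistent with the directed edge $j\to k$ together with chordality force the source to lie downstream of $j$, so it is not a source w.r.t.~$\cG$.

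The main obstacle I anticipate is the uniqueness argument within a chain component: carefully showing that connectivity plus chordality plus the acyclicity of consistent orientations forbids two simultaneous in-component sources, without hand-waving. I expect this to hinge on a lemma about undirected connected chordal graphs (every such graph admits an acyclic orientation with a unique source, and the common sources across the whole equivalence class collapse to one), and I would want to state and use that property explicitly, likely by reducing to the known result of \citet{hauser2014two} that chain components are chordal and that orientations within a component are free up to the v-structure/Meek constraints. The second claim is comparatively routine once the decomposition and the definition of ``incoming edges to a component'' are pinned down.
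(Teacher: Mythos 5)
Your skeleton matches the paper's (existence of an in-component source via a topological order, the decomposition, and both directions of the second claim), but the crux---uniqueness of the source within a chain component---is a genuine gap, and both routes you propose for it would fail. Acyclicity plus connectivity does \emph{not} forbid two sources of a connected orientation: $u \to w \leftarrow v$ is acyclic and connected with two sources, so "a source dominates the local order" is false as stated. Nor does Theorem~\ref{thm:1} help in the way you suggest: its invariance statement concerns the source nodes of the interventions $I \in \cI$, not sources of chain components, and the lemma's uniqueness claim is about the component restriction of the underlying DAG in any case. The idea you are missing is the paper's v-structure argument: if $i$ and $j$ were both sources w.r.t.\ the component $\cC$, take a \emph{shortest} path $i - k_1 - \cdots - k_r - j$ in $\cC$; sourcehood forces $i \to k_1$ and $j \to k_r$, so somewhere along the path there is a collider $k_{l-1} \to k_l \leftarrow k_{l+1}$, and minimality of the path makes $k_{l-1}$ and $k_{l+1}$ non-adjacent---a v-structure. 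Since v-structures are shared by all DAGs in the shift-$\cI$-MEC, these edges would be oriented in the essential graph, contradicting that they lie inside an undirected chain component. Note that chordality is not what drives this step at all; what does the work is the invariance of v-structures, i.e., that valid orientations of a chain component may not create new v-structures.

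The converse of the second claim also needs a concrete mechanism that you only gesture at ("orientations consistent with $j\to k$ together with chordality force the source to lie downstream of $j$"). The paper's mechanism is Meek-rule closure of the essential graph: given an incoming edge $j \to k$ with $j \notin \cC$, $k \in \cC$, every neighbor $l$ of $k$ inside $\cC$ must be adjacent to $j$ (otherwise Meek R1 would orient the undirected edge $k-l$), and that edge must be oriented $j \to l$ (it is directed because $j$ and $l$ lie in different components, and $l \to j$ would trigger Meek R2 on $l \to j \to k$ with $l - k$, again orienting $k-l$). Propagating this along the connected component yields $j \to l$ for \emph{all} $l \in \cC$, in particular $j \to i$, so the in-component source $i$ is outright a child of $j$, not merely forced downstream by some chordality consideration. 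Without the Meek-closure propagation, this direction of your argument remains unsubstantiated; with it, your first direction (no incoming edges implies $i$ is a source of $\cG$) is fine and matches the paper.
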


These results hold when replacing $\cG$ with any induced subgraph of it. Thus, we can find the source nodes in $T$ by finding the source nodes in each of its chain components with no incoming edges.

\subsection{Two Approximate Strategies}\label{sec:4.3}
Following the chain graph decomposition, we now focus on how to find the source node of an undirected connected chordal graph $\cC$. 
If there is no sparsity constraint on the number of perturbation targets in each shift intervention, then directly intervening on \textit{all} of the variables in $\cC$ gives the source node, since by Theorem \ref{thm:1}, all DAGs in the shift-$\cI$-MEC share the same source node. 
However, when the maximum number of perturbation targets in an intervention is restricted to $S <|\cC|$, multiple interventions may be necessary to find the source node.

After intervening on $S$ nodes, the remaining unoriented part can be decomposed into connected components.
In the worst case, the source node of $\cC$ is in the \textit{largest} of these connected components.
Therefore we seek the set of nodes, within the sparsity constraint, that minimizes the largest connected component size after being removed. 
This is known as the \textit{MinMaxC} problem \citep{lalou2018critical}, which we show is NP-complete on chordal graphs (\rref{appendix:alg}). 
We propose two approximate strategies to solve this problem, one based on the clique tree representation of chordal graphs and the other based on robust supermodular optimization. 
The overall algorithm with these subroutines is summarized in Algorithm \ref{alg:1}. We outline the subroutines here, and give further details in \rref{appendix:alg}.

\begin{algorithm}[t]
\SetAlgoLined
\KwIn{Joint distribution $\rmP$, desired joint distribution $\rmQ$, sparsity constraint $S$.}
Initialize $I^*=\varnothing$ and $\cI=\{\varnothing\}$\;
\While{$\bbE_{\rmP^{I^*}}(X)\neq \bbE_{\rmQ}(X)$}{
  let $T=\{i|i\in [p], \bbE_{\rmP^{I^*}}(X_i)\neq\bbE_{\rmQ}(X_i)\}$\;
  let $\cG$ be the subgraph of shift-$\cI$-EG induced by $T$\;
  let $U_T$ be the identified source nodes in $T$\;
  \While{$U_T=\varnothing$}{
    let $\cC$ be a chain component of $\cG$ with no incoming edges\;
    select shift intervention $I$ by running $\texttt{CliqueTree}(\cC,S)$ or $\texttt{Supermodular}(\cC,S)$\;
    perform $I$ and append it to $\cI$\;
    update $\cG$ and $U_T$ as the outer loop\;
   }
   set $a_i=\bbE_{\rmQ}(X_i)-\bbE_{\rmP^{I^*}}(X_i)$ for $i$ in $U_T$\;
   include perturbation targets $U_T$ and shift values $\{a_i\}_{i\in U_T}$ in $I^*$ and perform $I^*$\;
 }
 \KwOut{Matching Intervention $I^*$}
 \caption{Active Learning for Causal Mean Matching}
 \label{alg:1}
\end{algorithm}

\textbf{Clique Tree Strategy.}
Let $C(\cC)$ be the set of maximal cliques in the chordal graph $\cC$. There exists a \textit{clique tree} $\cT(\cC)$ with nodes in $C(\cC)$ and edges satisfying that $\forall C_1,C_2\in C(\cC)$, their intersection $C_1\cap C_2$ is a subset of any clique on the unique path between $C_1,C_2$ in $\cT(\cC)$ \citep{blair1993introduction}. %
Thus, deleting a clique which is not a leaf node in the clique tree will break $\cC$ into at least two connected components, each corresponding to a subtree in the clique tree.
Inspired by the central node algorithm \citep{greenewald2019sample, squires2020active}, we find the \textit{$S$-constrained central clique} of $\cT(\cC)$ by iterating through $C(\cC)$ and returning the clique with no more than $S$ nodes that separates the graph most, i.e., solving MinMaxC when interventions are constrained to be maximal cliques.
We denote this approach as \texttt{CliqueTree}.

\textbf{Supermodular Strategy.}
Our second approach, denoted \texttt{Supermodular}, optimizes a lower bound of the objective of MinMaxC. Consider the following equivalent formulation of MinMaxC
\begin{equation}\label{eq:4.3-1}
    \min_{A \subset V_{\cC}} \max_{i\in V_{\cC}} f_i(A),\quad |A|\leq S,
\end{equation}
where $V_{\cC}$ represents the nodes of $\cC$ and $\forall~i\in V_{\cC}$, $f_i(A) = \sum_{j\in V_\cC} g_{i,j}(A)$ with $g_{i,j}(A) = 1$ if $i$ and $j$ are the same or connected after removing nodes in $A$ from $\cC$ and $g_{i,j}(A) = 0$ otherwise.

MinMaxC \eqref{eq:4.3-1} resembles the problem of robust supermodular optimization \citep{krause2008robust}. Unfortunately, $f_i$ is not supermodular for chordal graphs (\rref{appendix:alg}). Therefore, we propose to optimize for a surrogate of $f_i$ defined as $\hat{f}_i(A) = \sum_{j\in\cC} \hat{g}_{i,j}(A)$, where
\begin{equation}\label{eq:4.3-2}
    \hat{g}_{i,j}(A) = \begin{cases}
    \frac{m_{i,j}(V_{\cC}-A)}{m_{i,j}(V_{\cC})},&\quad {i--j}~\mathrm{in}~{\cC}, \\
    0,&\quad \mathrm{otherwise}.
    \end{cases}
\end{equation}
Here $m_{i,j}(V_{\cC'})$ is the number of paths without cycles between $i$ and $j$ in $\cC'$ (deemed $0$ if $i$ or $j$ does not belong to $\cC'$ and $1$ if $i=j\in \cC'$) and $i--j$ means $i$ is either connected or equal to $j$. 
Comparing $\hat{g}_{i,j}$ with $g_{i,j}$, we see that $\hat{f}_i(A)$ is a lower bound of $f_i(A)$ for MinMaxC, which is tight when $\cC$ is a tree. 
We show that $\hat{f}_i$ is monotonic supermodular for all $i$ (\rref{appendix:alg}). 
Therefore, we consider \eqref{eq:4.3-2} with $f_i$ replaced by $\hat{f}_i$, which can be solved using the SATURATE algorithm \citep{krause2008robust}. 
Notably, the results returned by \texttt{Supermodular} can be quite different to those returned by \texttt{CliqueTree} since \texttt{Supermodular} is not constrained to pick a maximal clique; see Figure \ref{fig4}.

\begin{figure}[t]
     \centering
     \begin{subfigure}[b]{0.4\textwidth}
         \centering
         \includegraphics[width=0.75\textwidth]{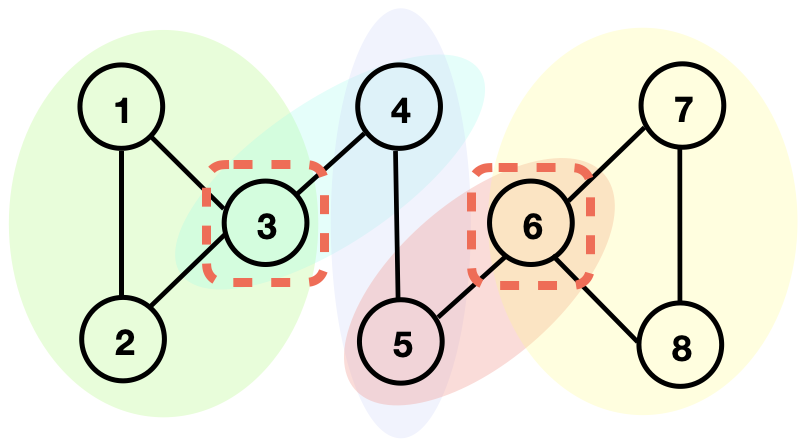}
         \caption{}
     \end{subfigure}
     \begin{subfigure}[b]{0.5\textwidth}
         \centering
         \includegraphics[width=0.7\textwidth]{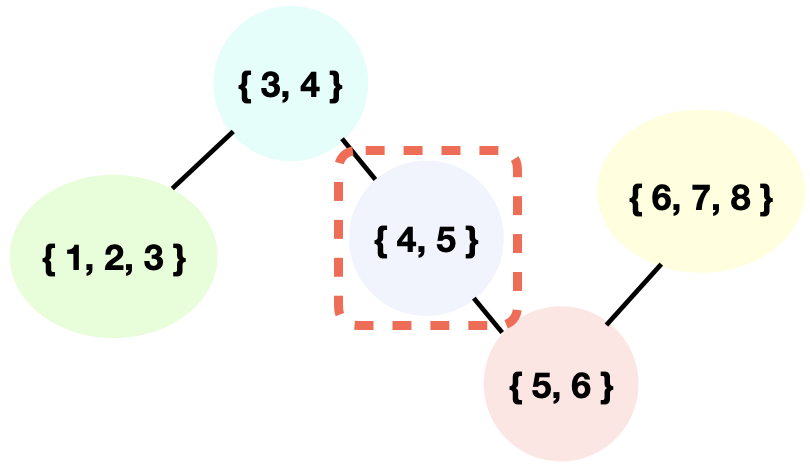}
         \caption{}
     \end{subfigure}
        \caption{Picking $2$ nodes in an undirected connected chordal graph $\cC$. \texttt{CliqueTree} picks $\{X_4,X_5\}$, while \texttt{Supermodular} picks the better $\{X_3,X_6\}$. \textbf{(a).} $\cC$; \textbf{(b).} Clique tree $\cT(\cC)$.
        }
        \label{fig4}
\end{figure}

\section{Theoretical Results}\label{sec:theory}
In this section we derive a \textit{worst-case} lower bound on the number of interventions for any algorithm to identify the source node in a chordal graph. 
Then we use this lower bound to show that our strategies are optimal up to a logarithmic factor.
This contrasts with the structure learning strategy, which may require exponentially more interventions than our strategy (Figure \ref{fig:shift-vs-structure-learning-gap}).

The worst case is with respect to all feasible orientations of an essential graph \citep{hauser2014two,shanmugam2015learning}, i.e., orientations corresponding to DAGs in the equivalence class. 
Given a chordal chain component $\cC$ of $\cG$, let $r_{\cC}$ be the number of maximal cliques in $\cC$, and $m_{\cC}$ be the size of the \textit{largest} maximal clique in $\cC$.
The following lemma provides a lower bound depending only on $m_{\cC}$.
\begin{lemma}\label{lemma:oracle-lb}
In the worst case over feasible orientations of $\cC$, any algorithm requires at least $\lceil \frac{m_{\cC}-1}{S}\rceil$ shift interventions to identify the source node, under the sparsity constraint $S$.
\end{lemma}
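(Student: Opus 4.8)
The plan is to run an adaptive adversary argument localized to a largest maximal clique $K \subseteq \cC$ of size $m := m_\cC$. First I would reduce identification of the source of $\cC$ to identification of the minimum of an adversarial total order on $K$. In any feasible orientation of $\cC$, the induced orientation on the complete graph $K$ is acyclic and hence a transitive tournament, i.e. a total order on $K$; moreover I claim that any total order on $K$ with a prescribed minimum $\mu$ extends to a feasible orientation of all of $\cC$ in which $\mu$ is the unique source guaranteed by Lemma~\ref{lm:4}. Granting this, it suffices to lower-bound the number of shift interventions needed to pin down $\mu = \min K$.

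Second, I would characterize exactly what one shift intervention reveals about the order on $K$. By Theorem~\ref{thm:1}, intervening on $I$ reveals (i) the cross-boundary orientations $\{i \to j : i \in I,\ j \notin I\}$ and (ii) the source node of $I$. Writing $J = I \cap K$ and $\mu_J = \min J$, this means we learn, for every $z \in K \setminus J$ and every $i \in J$, whether $i < z$ or $z < i$, and we learn that $\mu_J$ is the minimum of $J$; crucially we learn nothing about the relative order of the vertices of $K \setminus J$ among themselves. A vertex $z$ is eliminated as a candidate minimum exactly when some incoming edge $w \to z$ (i.e. $w < z$) is observed. A short case analysis then shows that after intervening on $I$ the surviving candidates are precisely $\{z \in K : z < \mu_J\}$, together with $\mu_J$ itself when $\mu_J$ happens to be the global minimum — in all cases a strict down-set of the order lying at or below $\mu_J$.

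Third comes the adversary. I maintain a candidate set $C \subseteq K$ (initially $C = K$) consisting of the currently smallest vertices of $K$, whose internal order has not yet been revealed and is kept free. When the algorithm queries $I$, the adversary declares the vertices of $J = I \cap C$ to be the $|J|$ largest elements of $C$; this is consistent with everything revealed so far, since the internal order of $C$ was free and all previously eliminated vertices already sit above $C$. By the computation above the new candidate set is $C' = C \setminus J$, whose internal order is again free, with $|C'| \ge |C| - |J| \ge |C| - S$. Hence each intervention shrinks the candidate set by at most $S$. Starting from $|C| = m$, reaching a singleton demands total shrinkage $m-1$, so at least $\lceil (m-1)/S \rceil$ interventions are required; and as long as $|C| \ge 2$, the free internal order lets the adversary realize at least two feasible orientations of $\cC$ with distinct sources, so no algorithm can yet have identified the source.

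The main obstacle I anticipate is the realizability and consistency bookkeeping underlying the reduction and the adversary: I must verify that (a) any total order on the clique $K$ with a prescribed minimum extends to a genuine feasible orientation of the whole chordal component $\cC$ with that minimum as its unique source, and (b) the adversary's incremental commitments (``the queried candidates are the current top of $C$'') never contradict earlier revealed comparisons — including comparisons between $K$ and vertices outside $K$ and the source-of-$I$ data for the parts of interventions lying outside $K$. Both claims reduce to standard facts about perfect elimination orderings and feasible (moral, acyclic) orientations of chordal graphs, but stating them carefully is where the real work lies; once they are in hand, the counting argument giving $\lceil (m_\cC - 1)/S \rceil$ is immediate.
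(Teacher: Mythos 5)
Your proposal is correct and takes essentially the same route as the paper: the realizability fact you defer as obligation (a) is precisely the paper's Proposition~2 (any maximal clique of a chordal graph can be placed most-upstream under a feasible, v-structure-free orientation with an \emph{arbitrary} prescribed internal order, proved via perfect elimination orderings), and the paper's adversary likewise chooses the orientation so that $S_{k'}\cap K$ is most-downstream among the not-yet-eliminated clique vertices, so each intervention removes at most $S$ candidates and $\lceil \frac{m_{\cC}-1}{S}\rceil$ interventions are forced. One minor caveat: your blanket claim that an intervention reveals nothing about the relative order within $K\setminus J$ is false in general, since Meek rule~2 orients $z_1\rightarrow z_2$ whenever $z_1\rightarrow i\rightarrow z_2$ with $i\in J$ and $z_1,z_2$ adjacent, but your adversary's placement of $J$ at the top of the candidate set $C$ guarantees no element of $J$ separates two surviving candidates, so the internal order of $C'=C\setminus J$ indeed stays free and the counting argument goes through unchanged.
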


To give some intuition for this result, consider the case where the largest maximal clique is upstream of all other maximal cliques.
Given such an ordering, in the worst case, each intervention rules out only $S$ nodes in this clique (namely, the most downstream ones).
Now, we show that our two strategies need at most $ \lceil \log_2(r_{\cC}+1)\rceil\cdot\lceil\frac{m_{\cC}-1}{S}\rceil$ shift interventions for the same task.
\begin{lemma}\label{lemma:alg-lb}
In the worst case over feasible orientations of $\cC$, both \textnormal{\texttt{CliqueTree}} and \textnormal{\texttt{Supermodular}} require at most $ \lceil \log_2(r_\cC+1)\rceil\cdot\lceil\frac{m_\cC-1}{S}\rceil$ shift interventions to identify the source node, under the sparsity constraint $S$.
\end{lemma}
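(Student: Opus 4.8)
The plan is to analyze the inner \texttt{while} loop of Algorithm~\ref{alg:1}, which repeatedly intervenes on a single chain component $\cC$ (with no incoming edges) until its unique source node—guaranteed by Lemma~\ref{lm:4}—is identified, and to bound the number of iterations by the product of two separate factors. The first is a ``within-clique'' factor $\lceil (m_\cC-1)/S\rceil$, measuring the cost of locating the source of one maximal clique under the sparsity constraint; the second is a ``divide-and-conquer'' factor $\lceil\log_2(r_\cC+1)\rceil$, measuring how many times the algorithm recurses on the clique tree $\cT(\cC)$. I would prove the product bound by showing that the algorithm proceeds in at most $\lceil\log_2(r_\cC+1)\rceil$ rounds, each costing at most $\lceil(m_\cC-1)/S\rceil$ interventions.

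First I would establish the within-clique bound. By Theorem~\ref{thm:1}, intervening on a set $I$ reveals the source of $I$ and orients every edge from $I$ to its complement. For a clique $K$ of size $m\le m_\cC$ this enables a tournament-style elimination: intervening on any $S$ current candidates $I\subset K$ reveals their internal source $v$; since all $I$--$(K\setminus I)$ edges are now oriented, if $v$ has no incoming edge within $K$ it is the source of $K$, and otherwise all $S$ intervened nodes are certified non-sources and discarded. Each step thus eliminates $S$ candidates, and once a single candidate survives it is forced to be the source by elimination, so at most $\lceil(m-1)/S\rceil\le\lceil(m_\cC-1)/S\rceil$ interventions suffice. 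This matches the lower bound of Lemma~\ref{lemma:oracle-lb}, so it is tight per clique.

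Next I would handle the recursion. A tree on $r_\cC$ nodes has a centroid clique $C^*$ whose removal splits $\cT(\cC)$ into subtrees each containing at most half the maximal cliques \citep{blair1993introduction}; deleting the nodes of $C^*$ splits $\cC$ into the corresponding connected components. Resolving $C^*$ (cost $\le\lceil(m_\cC-1)/S\rceil$, which also orients its boundary) reveals its source $v$, and the global source of $\cC$ is either $v$ or lies in one component. The key point is that this component is \emph{unique and identifiable}: because $\cC$ is an undirected chain component it has no v-structures, so $v$ cannot receive edges $u_1\to v\leftarrow u_2$ from non-adjacent nodes $u_1,u_2$ in two different components; hence $v$ is either free of incoming edges (and is the global source, so we stop) or has incoming edges from exactly one component, into which we recurse. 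Since each recursion at least halves the maximal-clique count (the additive slack from re-attaching the separator clique being absorbed by the ``$+1$''), the number of rounds is at most $\lceil\log_2(r_\cC+1)\rceil$, yielding the bound for \texttt{CliqueTree}.

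The main obstacle is the \texttt{Supermodular} strategy, because it optimizes the surrogate $\hat f_i$, which is a \emph{lower} bound on the true component-size objective $f_i$, so a small surrogate value does not immediately control the true largest component. The plan is to exhibit the $\le S$-node centroid set as a feasible witness that the optimal min--max surrogate value halves the relevant structure, and then to combine the bicriteria approximation guarantee of SATURATE \citep{krause2008robust} with the monotone supermodularity of $\hat f_i$ and its tightness on trees (where the clique tree itself lives) to argue that \texttt{Supermodular} separates $\cC$ at least as well as \texttt{CliqueTree} in each round. Showing that this surrogate-level guarantee translates into the same constant-factor reduction of the maximal-clique count—so that the $\lceil\log_2(r_\cC+1)\rceil$ round bound carries over verbatim, after which the within-clique factor applies unchanged—is the delicate step of the argument.
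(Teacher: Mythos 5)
Your skeleton matches the paper's: a per-round cost of $\lceil (m_\cC-1)/S\rceil$ interventions to resolve one maximal clique (your tournament elimination is an adaptive variant of the paper's batch of $\lceil (|K|-1)/S\rceil$ non-overlapping interventions covering all but one node of the central clique $K$), times $\lceil\log_2(r_\cC+1)\rceil$ rounds from a centroid-style split of the clique tree. The genuine gap is in the halving step. You assume that after resolving the centroid clique the algorithm recurses into a \emph{single} connected component of $\cC$ minus that clique. But in the worst case the last surviving candidate $v$ is never intervened on, and is the source of $K$; then every edge between $v$ and the subtrees of $\cT(\cC)$ that attach to $K$ only through $v$ remains unoriented (your claim that resolving $C^*$ ``also orients its boundary'' holds only for intervened nodes). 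The next chain component with no incoming edges is therefore not one subtree but can be the union of $v$, \emph{all} subtrees attached exclusively through $v$, and possibly one more subtree. If many subtrees hang off a single vertex of the centroid clique --- e.g., $r_\cC-1$ single-clique subtrees all attached through $v$ --- this union contains nearly $r_\cC-1$ maximal cliques, no halving occurs, and the $\lceil\log_2(r_\cC+1)\rceil$ round bound fails for the arbitrary elimination order you allow. Relatedly, your ``unique incoming component'' argument reasons about the true orientation rather than the essential graph, so it bounds where the source \emph{is} but not the clique count of the undirected chain component the algorithm must process next, and the algorithm cannot ``stop'' at $v$ unless $v$'s outgoing orientations are actually identified. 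The paper closes exactly this hole with Proposition~\ref{prop:3}: there exists a specific node $k\in K$ such that the subgraph induced by $k$, the subtrees $\cT_{A_k}$ attached only through $k$, and the largest other adjacent subtree $\cT^*_k$ has at most $\lceil (r-1)/2\rceil$ maximal cliques; the interventions are chosen to leave precisely that node un-intervened, and two further orientation facts (Facts 1 and 2 in the paper's Case II) show that at most one extra subtree, dominated by $\cT^*_k$, can survive into the next chain component. Without an analogue of this selection rule, your halving claim is false as stated.

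On \texttt{Supermodular}, note that the paper sidesteps the ``delicate step'' you flag: it explicitly declines to analyze the gap between $\hat{g}_{i,j}$ and $g_{i,j}$ or SATURATE's approximation quality, and instead idealizes \texttt{Supermodular} as solving MinMaxC optimally over all $\le S$-subsets, whence it is no worse than \texttt{CliqueTree} in the worst case because the latter's clique choice is a feasible point of MinMaxC. Your proposed route via SATURATE's bicriteria guarantee and tightness of $\hat{f}_i$ on trees is left as a sketch, would still require the Proposition~\ref{prop:3} machinery to convert a separator into a clique-count halving, and so does not repair the main gap.
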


By combining \rref{lemma:oracle-lb} and \rref{lemma:alg-lb}, which consider subproblems of the causal mean matching problem, we obtain a bound on the number of shift interventions needed for solving the full causal mean matching problem. 
Let $r$ be the largest $r_{\cC}$ for all chain components $\cC$ of $\cG$:
\begin{theorem}\label{thm:2}
Algorithm \ref{alg:1} requires at most $\lceil \log_2(r+1)\rceil$ times more shift interventions, compared to that required by the optimal strategy, in the worst case over feasible orientations of $\cG$.
\end{theorem}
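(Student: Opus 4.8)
The plan is to lift the single--chain-component guarantees of Lemma~\ref{lemma:oracle-lb} and Lemma~\ref{lemma:alg-lb} to the entire problem, using that Algorithm~\ref{alg:1} solves causal mean matching as a sequence of source-finding subproblems, one per chain component. First I would argue that this decomposition is intrinsic to the pair $(\cG, I^*)$ and hence shared by every correct strategy, including the optimal one. By Observation~\ref{obs:1} and Lemma~\ref{lm:4}, the nodes that leave $T$ over the course of the algorithm are exactly the entries of $I^*$, uncovered layer by layer along the partial order induced by $\cG$, and at each layer the sources to identify are precisely those of the chain components with no incoming edges. Since every such source is an entry of $I^*$, any strategy returning $I^*$ must identify all of them; the resulting list of subproblems $\cC_1,\dots,\cC_K$ is therefore the same for Algorithm~\ref{alg:1} and for the optimal strategy.

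Next I would bound each subproblem separately. Fixing an orientation, Lemma~\ref{lemma:oracle-lb} gives that the optimal strategy spends at least $L_k := \lceil (m_{\cC_k}-1)/S\rceil$ interventions to find the source of $\cC_k$, while Lemma~\ref{lemma:alg-lb} gives that Algorithm~\ref{alg:1} spends at most $A_k \le \lceil \log_2(r_{\cC_k}+1)\rceil\, L_k$. Because $r_{\cC_k}\le r$ by definition of $r$ as the largest $r_{\cC}$ over all chain components, monotonicity of $\log_2$ yields the uniform per-subproblem inequality $A_k \le \lceil \log_2(r+1)\rceil\, L_k$, valid in the worst case over feasible orientations of $\cC_k$.

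It remains to aggregate. Algorithm~\ref{alg:1} treats one chain component per inner iteration and exposes a new layer only after the current sources have been intervened on, so its total cost is $\mathrm{ALG}=\sum_{k=1}^K A_k \le \lceil \log_2(r+1)\rceil \sum_{k=1}^K L_k$. To conclude $\mathrm{ALG}\le \lceil \log_2(r+1)\rceil\, \mathrm{OPT}$ I would then establish $\sum_k L_k \le \mathrm{OPT}$, i.e.\ that the optimal strategy pays at least the sum of the individual lower bounds: subproblems at different layers are temporally forced, since a deeper component loses its incoming edges only after the shallower sources are resolved, and distinct components in the same layer lie on disjoint node sets whose largest cliques must each be separately disambiguated.

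I expect this aggregation step to be the crux. The per-subproblem comparison is immediate from the two lemmas, but proving $\mathrm{OPT}\ge \sum_k L_k$ requires ruling out interventions that make simultaneous progress on several currently-active chain components and thereby beat the summed lower bound (note that $\lceil (a+b)/S\rceil$ can be strictly smaller than $\lceil a/S\rceil + \lceil b/S\rceil$, so naive additivity of the lower bounds is not automatic). The cleanest route is to restrict attention to the worst-case orientation, chosen so that at most one chain component binds per layer---as in the clique-chain construction of Figure~\ref{fig:shift-vs-structure-learning-gap}---which serializes the schedule and reduces the aggregation to a telescoping sum over layers in which no amortization across subproblems is possible.
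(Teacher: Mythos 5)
Your proposal has the same skeleton as the paper's proof: the paper likewise divides $I^*$ into layers $I_1,\dots,I_k$ where $I_{k'}$ consists of the sources of $I^*-\cup_{l<k'}I_l$, argues that any strategy must resolve these layers sequentially because shifting $I_{k'}$ changes the marginals of the deeper targets, uses Observation~\ref{obs:1} and Lemma~\ref{lm:4} to conclude that each layer presents exactly $|I_{k'}|$ chain components with no incoming edges containing one target each, and then applies Lemma~\ref{lemma:oracle-lb} and Lemma~\ref{lemma:alg-lb} componentwise with $r_{\cC}\le r$ before summing over components and layers. Up to and including the per-subproblem comparison, you have reconstructed the argument faithfully.

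Where you diverge is the aggregation step, and there your diagnosis is sharper than your repair. The paper disposes of this step in one sentence, asserting that ``by Lemma~\ref{lemma:oracle-lb}'' the optimal strategy needs at least $\sum_{i=1}^{|I_{k'}|}\lceil (m_{k',i}-1)/S\rceil$ interventions per layer, i.e., it charges each active component separately---exactly the additivity you flag as non-automatic. Your worry is substantive for $S>1$: nothing in the problem forbids a single intervention from placing targets in several active chain components (or even in components of deeper layers, whose learned orientations persist in the shift-$\cI$-EG), and the adversary argument behind Lemma~\ref{lemma:oracle-lb} only forces each component to absorb $m_{k',i}-1$ intervened nodes in total, which by your own counting yields $\lceil \sum_i (m_{k',i}-1)/S\rceil$ rather than the sum of ceilings. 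However, your proposed patch---choosing the worst-case orientation so that ``at most one chain component binds per layer''---moves in the wrong direction: the components are vertex-disjoint, so orientations can be chosen adversarially in all of them simultaneously, and an orientation in which only one component binds lower-bounds $\mathrm{OPT}$ by a per-layer \emph{maximum}, which is weaker than the per-layer \emph{sum} you need; serializing the schedule forfeits rather than recovers $\mathrm{OPT}\ge\sum_k L_k$. To make the aggregation airtight you should either restrict to $S=1$, where each intervention touches a single component and $\lceil(m_{k',i}-1)/1\rceil=m_{k',i}-1$ makes the node-counting bounds add exactly, or carry the amortization-robust form $\lceil\sum_i(m_{k',i}-1)/S\rceil$ through the argument and account for the rounding loss explicitly. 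In short: identical route to the paper, a correctly identified crux that the paper passes over by assertion, and a fix for that crux that does not succeed as described.
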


A direct application of this theorem is that, in terms of the number of interventions required to solve the causal mean matching problem, our algorithm is optimal in the worst case when $r=1$, i.e., when every chain component is a clique.
All proofs are provided in \rref{appendix:bound}.

\section{Experiments}\label{sec:experiments}

We now evaluate our algorithms in several synthetic settings.\footnote{Code is publicly available at: \url{https://github.com/uhlerlab/causal_mean_matching}.} Each setting considers a particular graph type, number of nodes $p$ in the graph and number of perturbation targets $|I^*|\leq p$ in the matching intervention. 
We generate $100$ problem instances in each setting. 
Every problem instance contains a DAG with $p$ nodes generated according to the graph type and a randomly sampled subset of $|I^*|$ nodes denoting the perturbation targets in the matching intervention.
We consider both, random graphs including Erdös-Rényi graphs \citep{erdHos1959renyi} and Barabási–Albert graphs \citep{albert2002statistical}, as well as structured chordal graphs, in particular, rooted tree graphs and moralized Erdös-Rényi graphs \citep{shanmugam2015learning}. 
The graph size $p$ in our simulations ranges from $10$ to $1000$, while the number of perturbation targets ranges from $1$ to $\min\{p,100\}$.

We compare our two subroutines for Algorithm \ref{alg:1}, \texttt{CliqueTree} and \texttt{Supermodular}, against three carefully constructed baselines. 
The \texttt{UpstreamRand} baseline follows Algorithm \ref{alg:1} where line 8 is changed to selecting $I$ randomly from $\cC$ without exceeding $S$, i.e., when there is no identified source node it randomly samples from the chain component with no incoming edge. 
This strategy highlights how much benefit is obtained from \texttt{CliqueTree} and \texttt{Supermodular} on top of upstream search. 
The \texttt{Coloring} baseline is modified from the coloring-based policy for structure learning \citep{shanmugam2015learning}, previously shown to perform competitively on large graphs \citep{squires2020active}. 
It first performs structure learning with the coloring-based policy, and then uses upstream search with known DAG. We also include an \texttt{Oracle} baseline, which does upstream search with known DAG.

In Figure~\ref{fig5} we present a subset of our results on Barabási–Albert graphs with $100$ nodes; similar behaviors are observed  in all other settings and shown in \rref{appendix:append-exp}. In Figure \ref{fig5:a}, we consider problem instances with varying size of $|I^*|$. 
Each algorithm is run with sparsity constraint $S=1$. 
We plot the number of extra interventions compared to \texttt{Oracle}, averaged across the $100$ problem instances.
As expected, \texttt{Coloring} requires the largest number of extra interventions.
This finding is consistent among different numbers of perturbation targets, since the same amount of interventions are used to learn the structure regardless of $I^*$. 
As $|I^*|$ increases, \texttt{CliqueTree} and \texttt{Supermodular} outperform \texttt{UpstreamRand}.
To further investigate this trend, we plot the rate of extra interventions\footnote{The rate is calculated by (\#\texttt{Strategy}-\#\texttt{UpstreamRand})/\#\texttt{UpstreamRand} where \# denotes the number of extra interventions compared to \texttt{Oracle} and \texttt{Strategy} can be \texttt{CliqueTree}, \texttt{Supermodular} or \texttt{UpstreamRand}.} used by \texttt{CliqueTree} and \texttt{Supermodular} relative to \texttt{UpstreamRand} in Figure \ref{fig5:b}. 
This figure shows that \texttt{CliqueTree} and \texttt{Supermodular} improve upon upstream search by up to $25\%$ as the number of perturbation targets increases. 
Finally, we consider the effect of the sparsity constraint $S$ in Figure \ref{fig5:c} with $|I^*|=50$. 
In line with the discussion in \rref{sec:4.3}, as $S$ increases, the task becomes easier for plain upstream search. 
However, when the number of perturbation targets is restricted, \texttt{CliqueTree} and \texttt{Supermodular} are superior, with \texttt{Supermodular} performing best in most cases.

\begin{figure}[t]
     \centering
     \begin{subfigure}[b]{0.3\textwidth}
         \centering
         \includegraphics[width=0.9\textwidth]{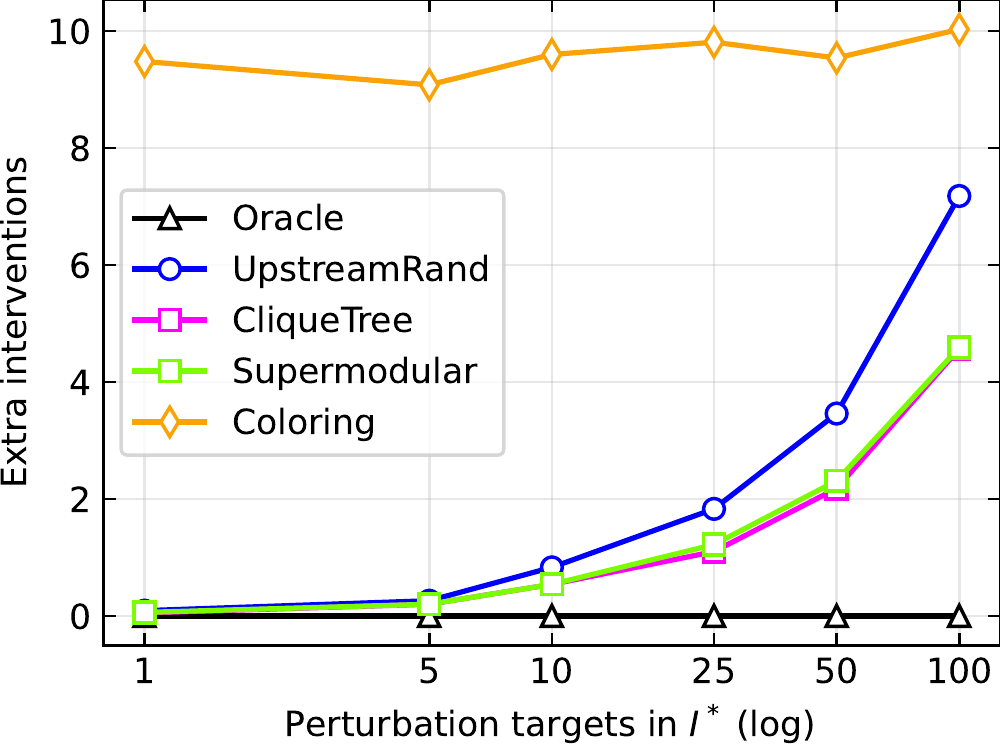}
         \caption{}
         \label{fig5:a}
     \end{subfigure}
     \hfill
     \begin{subfigure}[b]{0.3\textwidth}
         \centering
         \includegraphics[width=0.9\textwidth]{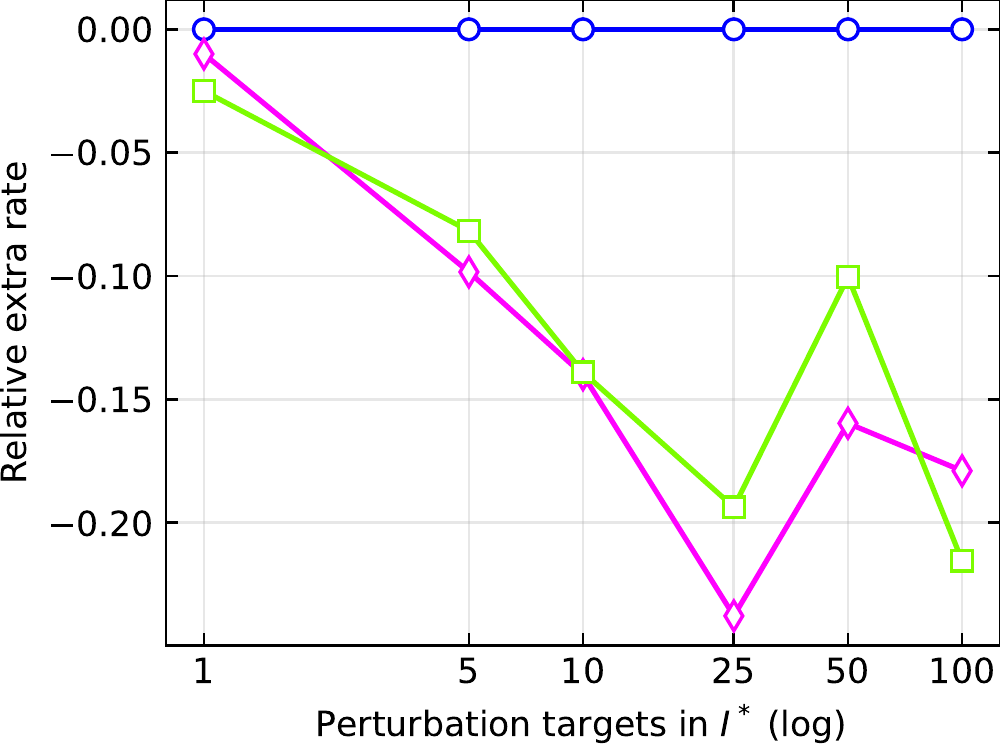}
         \caption{}
         \label{fig5:b}
     \end{subfigure}
    \hfill
     \begin{subfigure}[b]{0.3\textwidth}
         \centering
         \includegraphics[width=0.9\textwidth]{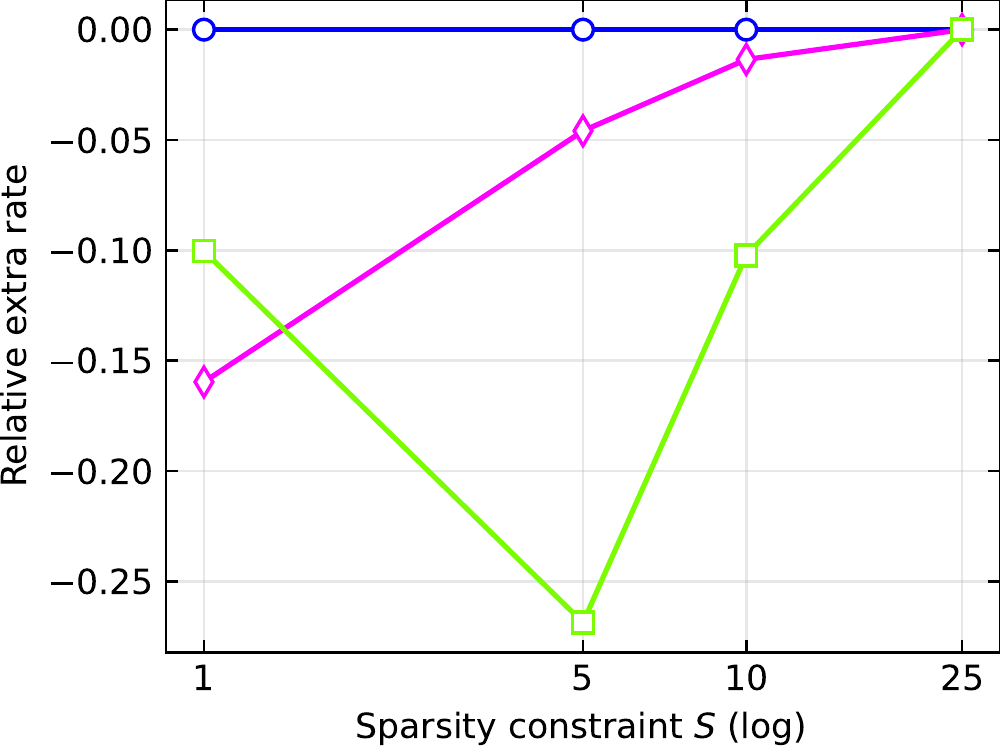}
         \caption{}
         \label{fig5:c}
     \end{subfigure}
        \caption{Barabási–Albert graphs with $100$ nodes. \textbf{(a).} Averaged (100 instances) numbers of extra interventions each algorithm (with sparsity constraint $S=1$) requires compared to \texttt{Oracle}, plotted against number of perturbation targets in $I^*$; \textbf{(b).} Rates of extra interventions \texttt{CliqueTree} and \texttt{Supermodular} ($S=1$) required relative to \texttt{UpstreamRand}, plotted against number of perturbation targets in $I^*$; \textbf{(c).} Relative extra rate ($|I^*|=50$), plotted against sparsity constraint $S$.
        }
        \label{fig5}
\end{figure}

\section{Discussion}\label{sec:discussion}

In this work, we introduced the \textit{causal mean matching} problem, which has important applications in medicine and engineering.
We aimed to develop active learning approaches for identifying the matching intervention using shift interventions.
Towards this end, we characterized the shift interventional Markov equivalence class and showed that it is in general more refined than previously defined equivalence classes.
We proposed two strategies for learning the matching intervention based on this characterization, and showed that they are optimal up to a logarithmic factor.
We reported experimental results on a range of settings to support these theoretical findings.

\textbf{Limitations and Future Work.} This work has various limitations that may be interesting to address in future work.
First, we focus on the task of matching a desired \textit{mean}, rather than an entire distribution.
This is an inherent limitation of deterministic shift interventions: as noted by \citet{hyttinen2012learning}, in the linear Gaussian setting, these interventions can \textit{only} modify the mean of the initial distribution.
Thus, matching the entire distribution, or other relevant statistics, will require broader classes of interventions.
Assumptions on the desired distribution are also required to rule out possibly non-realizable cases.
Second, we have focused on causal DAG models, which assume acyclicity and the absence of latent confounders.
In many realistic applications, this could be an overly optimistic assumption, requiring extensions of our results to the cyclic and/or causally insufficient setting.
Finally, throughout the main text, we have focused on the noiseless setting; we briefly discuss the noisy setting in \rref{appendix:noisy}, but there is much room for more extensive investigations.

\section*{Acknowledgements } C.~Squires was partially supported by an NSF Graduate Fellowship. All authors were partially supported by NSF (DMS-1651995), ONR (N00014-17-
1-2147 and N00014-18-1-2765), the MIT-IBM Watson AI Lab, and a Simons Investigator Award to C.~Uhler.

\bibliographystyle{apalike}
\bibliography{main}

\newpage
\appendix
\tableofcontents
\addtocontents{toc}{\protect\setcounter{tocdepth}{2}}
\newpage

\section{Preliminaries}\label{appendix:pre}
\subsection{Meek Rules}
Given any Markov equivalence class of DAGs with shared directed and undirected edges, the corresponding essential graph $\cE$ can be obtained using a set of logical relations known as Meek rules \citepappendix{meek2013causal+}. The Meek rules are stated in the following proposition.
\begin{proposition}[Meek Rules \citepappendix{meek2013causal+}]\label{prop:1}
We can infer all directed edges in $\cE$ using the following four rules:
\begin{enumerate}
    \item If $i\rightarrow j - k$ and $i$ is not adjacent to $k$, then $j\rightarrow k$.
    \item If $i\rightarrow j \rightarrow k$ and $i-k$, then $i\rightarrow k$.
    \item If $i-j, i-k, i-l, j\rightarrow k, l\rightarrow k$ and $j$ is not adjacent to $l$, then $i\rightarrow k$.
    \item If $i-j, i-k, i-l, j\leftarrow k, l\rightarrow k$ and $j$ is not adjacent to $l$, then $i\rightarrow j$.
\end{enumerate}
\end{proposition}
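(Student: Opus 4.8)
The plan is to prove the two directions that together constitute the completeness of Meek's rules: \emph{soundness} (every edge the rules orient is directed the same way in all members of the equivalence class) and \emph{completeness proper} (every edge the rules leave undirected is genuinely reversible). Throughout, call a DAG a \emph{consistent extension} if it has the given skeleton, agrees with all already-known directed edges, introduces no v-structure beyond the fixed set, and is acyclic; the essential graph $\cE$ is obtained by orienting $i \to j$ precisely when $i \to j$ holds in every consistent extension, and leaving $i - j$ undirected otherwise. Writing $H$ for the partially directed graph obtained by applying the four rules to convergence, the proposition amounts to the identity $H = \cE$ (they share the skeleton, so it suffices to match directed edges).

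For soundness I would check each rule separately by a short case analysis resting on a single dichotomy: reorienting the forced edge in the opposite direction must create either a directed cycle or a new v-structure, both forbidden in a consistent extension. For Rule~1, orienting $k \to j$ in $i \to j - k$ with $i,k$ nonadjacent produces the new v-structure $i \to j \gets k$. For Rule~2, orienting $k \to i$ in $i \to j \to k$ with $i - k$ closes the cycle $i \to j \to k \to i$. For Rules~3 and~4 the two mechanisms combine: assuming the opposite orientation of the target edge, the cycle-avoidance constraint forces the orientations of the auxiliary edges $i - j$ and $i - l$, and these in turn produce the v-structure $j \to i \gets l$ at $i$ (using that $j$ and $l$ are nonadjacent). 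Since each such consequence is impossible in any consistent extension, the rule's conclusion holds in all of them. Thus the directed edges of $H$ are contained among those of $\cE$; the rules never over-orient.

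The substantive direction is completeness: I must show that every edge left undirected in $H$ is undirected in $\cE$, i.e., can be oriented either way by some consistent extension. The key structural step is to prove that, once no rule applies, $H$ is a chain graph whose chain components are \emph{chordal} undirected graphs, and whose orientations across distinct chain components are decoupled (the standard chain-graph decomposition of essential graphs). Chordality is where the four rules are used essentially: I would argue inductively that if a chain component contained a chordless cycle, or an edge actually forced by the global constraints, then one of Rules~1--4 would still be applicable, contradicting closedness. Given this, consistent extensions are built component by component: a chordal undirected graph admits an acyclic orientation with no new v-structures, obtained from any perfect elimination ordering, and these per-component orientations assemble into a single consistent DAG precisely because the chain-graph property guarantees they do not interact.

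Finally, to certify that an undirected edge $a - b$ of a chain component $\cC$ is genuinely reversible, I would exhibit two consistent extensions orienting it oppositely. Using chordality of $\cC$, choose a perfect elimination ordering in which $a$ precedes $b$ to obtain $a \to b$, and another in which $b$ precedes $a$ to obtain $b \to a$; in each case the induced orientation of $\cC$ has no new v-structures and extends to a full consistent DAG by the decoupling across components. Hence both orientations of $a - b$ occur in consistent extensions, so $a - b$ is undirected in $\cE$, giving the reverse containment: every directed edge of $\cE$ is directed in $H$. Combined with soundness this yields $H = \cE$. The main obstacle is the structural lemma of the third paragraph --- showing that closure under exactly these four rules forces chordal chain components and guarantees per-component, v-structure-free orientations that assemble into a consistent extension; by comparison, the soundness case analysis and the perfect-elimination-ordering construction are routine.
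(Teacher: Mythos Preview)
The paper does not prove this proposition at all: it is stated in the appendix as a preliminary, attributed to \citepappendix{meek2013causal+}, and accompanied only by an illustrative figure of the four configurations. There is therefore no paper proof to compare against; your proposal goes well beyond what the paper supplies.

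That said, your outline is the standard route to Meek's theorem and is essentially correct. Soundness via the cycle/v-structure dichotomy is fine (your phrasing ``auxiliary edges $i-j$ and $i-l$'' fits Rule~3 but not Rule~4, where the target is $i-j$ and the auxiliaries are $i-k$ and $i-l$; the case split there is: if $j\to i$, then $l\to i$ gives the forbidden collider $j\to i\gets l$, while $i\to l$ closes the directed cycle $j\to i\to l\to k\to j$). For completeness, the structural lemma you flag as the crux is indeed the hard part. Two points worth sharpening: (i) ``chain graph'' requires ruling out \emph{semi}-directed cycles, not just directed ones, and the argument that closedness under Rules~1--2 (plus acyclicity of the starting orientations) suffices for this is not entirely trivial; (ii) the paper's setting includes background orientations from interventions, so you genuinely need all four rules --- Rule~4 is redundant for pure observational CPDAGs but required here, and the chordality/decoupling argument must be carried out for this more general starting point. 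With those caveats, the perfect-elimination-ordering construction you describe is exactly how one exhibits the two opposing consistent extensions of an undirected edge.
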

Figure \ref{afig1} illustrates these four rules.

\begin{figure}[ht]
     \centering
     \begin{subfigure}[b]{0.22\textwidth}
         \centering
         \includegraphics[width=\textwidth]{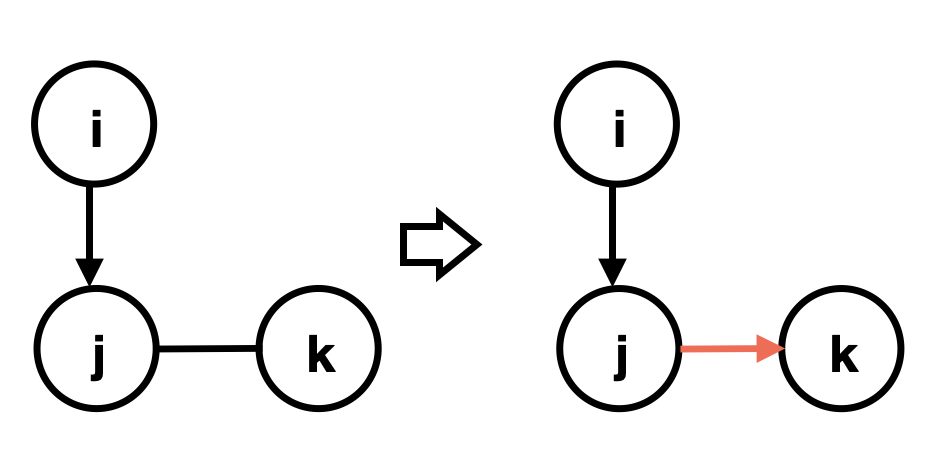}
         \caption{R1}
     \end{subfigure}
     \hfill
     \begin{subfigure}[b]{0.22\textwidth}
         \centering
         \includegraphics[width=\textwidth]{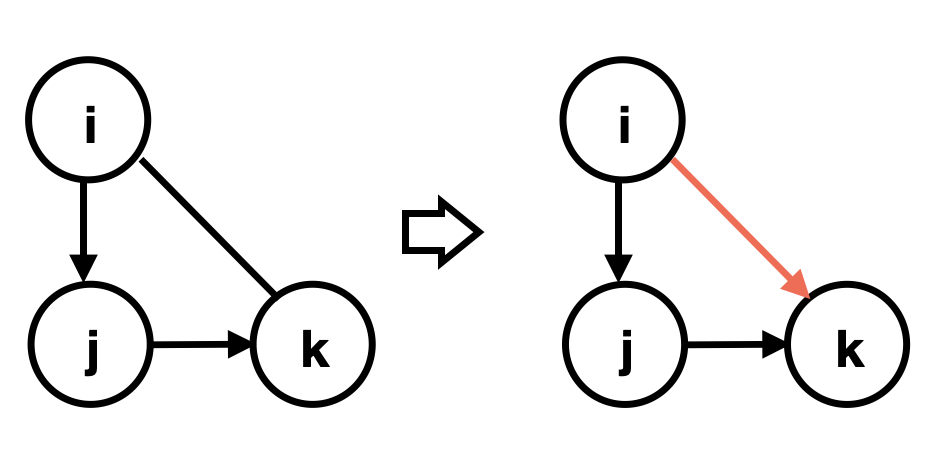}
         \caption{R2}
     \end{subfigure}
    \hfill
     \begin{subfigure}[b]{0.23\textwidth}
         \centering
         \includegraphics[width=0.95\textwidth]{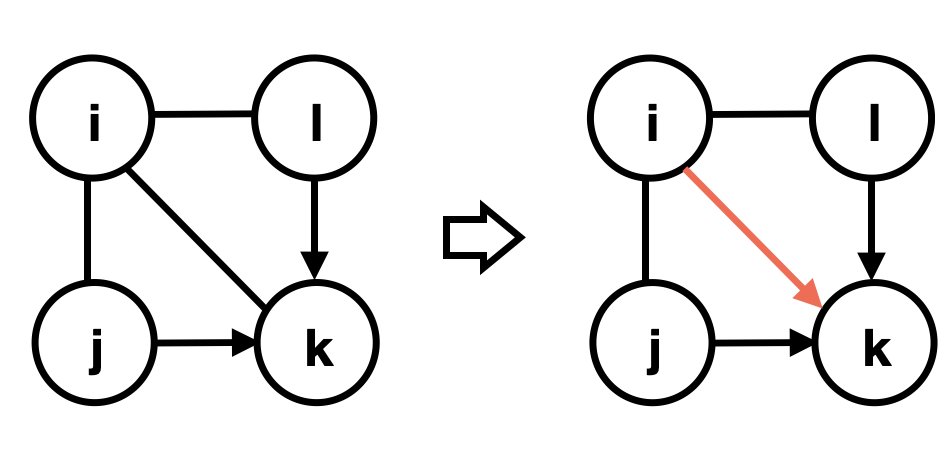}
         \caption{R3}
     \end{subfigure}
    \hfill
    \begin{subfigure}[b]{0.23\textwidth}
         \centering
         \includegraphics[width=0.95\textwidth]{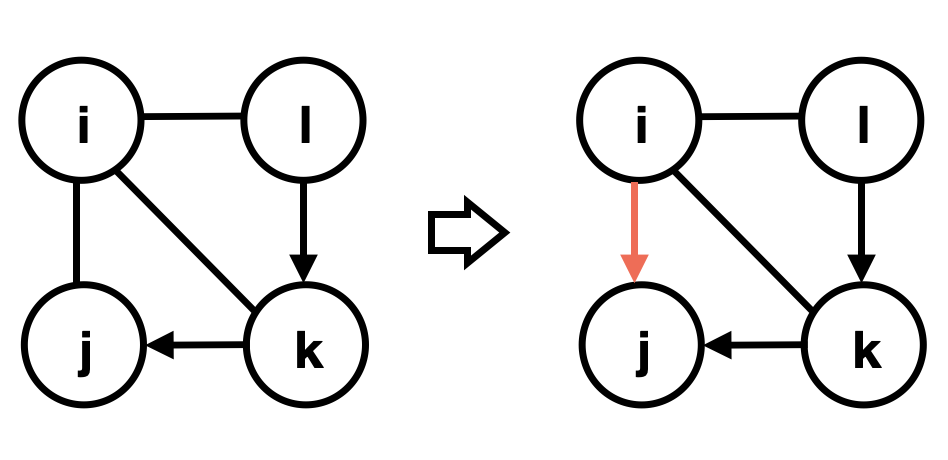}
         \caption{R4}
     \end{subfigure}
        \caption{Meek Rules.
        }
        \label{afig1}
\end{figure}

\section{Proof of Exact Matching}\label{appendix:exact}
\begin{proof}[Proof of Lemma \ref{lm:1}]
Without loss of generality, assume $1,2,...,p$ is the topological order of the underlying DAG $\cG$, i.e., $j\in \pa_{\cG}(i)$ implies $j<i$. We will first construct $I^*$ such that $\bbE_{\rmP^{I^*}}(X)=\bbE_{\rmQ}(X)$, and then show that $I^*$ is unique.

\textbf{Existence:} Denote $i_1$ as the smallest $i\in [p]$ such that $\bbE_{\rmP}(X_{i})\neq \bbE_{\rmQ}(X_i)$. Witout loss of generality we assume that $i_1$ exists (if $i_1$ does not exists, then $I^*=\varnothing$ suffices since $\bbE_{\rmP}(X)=\bbE_{\rmQ}(X)$). 

Let $I_1$ be the shift intervention with perturbation target $i_1$ and shift values $a_{i_1}= \bbE_{\rmQ}(X_{i_1})-\bbE_{\rmP}(X_{i_1})$. Since $\rmP^{I_1}(X_{i_1}=x+a_{i_1}|X_{\pa_{\cG}(i_1)})=\rmP(X_{i_1}=x|X_{\pa_{\cG}(i_1)})$ and $\rmP^{I_1}(X_{\pa_{\cG}(i_1)})=\rmP(X_{\pa_{\cG}(i_1)})$ by definition, we have
\[
\rmP^{I_1}(X_{i_1}=x+a_{i_1})=\rmP(X_{i_1}=x).
\]
Thus $\bbE_{\rmP^{I_1}}(X_{i_1}) = \bbE_{\rmP}(X_{i_1})+a_{i_1} = \bbE_{\rmQ}(X_{i_1})$. Also $\bbE_{\rmP^{I_1}}(X_{i}) = \bbE_{\rmQ}(X_{i})$ for $i<i_1$. Denote $i_2$ as the smallest $i\in [p]$ such that $\bbE_{\rmP^{I_1}}(X_i)\neq \bbE_{\rmQ}(X_i)$. If $i_2$ does not exists, then $I^*=I_1$ suffices. Otherwise $i_2>i_1$. 

Let $I_2$ be the shift intervention with perturbation target $i_1,i_2$ and corresponding shift values $a_{i_1}$ and $a_{i_2} = \bbE_{\rmQ}(X_{i_2})-\bbE_{\rmP^{I_1}}(X_{i_2})$. We have $\rmP^{I_2}(X_{i_2}=x+a_{i_2}|X_{\pa_{\cG}(i_2)})=\rmP(X_{i_2}=x|X_{\pa_{\cG}(i_2)})=\rmP^{I_1}(X_{i_2}=x|X_{\pa_{\cG}(i_2)})$ and $\rmP^{I_2}(X_{\pa_{\cG}(i_2)})=\rmP^{I_1}(X_{\pa_{\cG}(i_2)})$ by definition, the topological order, and $i_2>i_1$. Then
\[
    \rmP^{I_2}(X_{i_2}=x+a_{i_2}) = \rmP^{I_1}(X_{i_2}=x).
\]
Thus $\bbE_{\rmP^{I_2}}(X_{i_2}) = \bbE_{\rmP^{I_1}}(X_{i_2})+a_{i_2} = \bbE_{\rmQ}(X_{i_2})$. Also $\bbE_{\rmP^{I_2}}(X_{i})=\bbE_{\rmP^{I_1}}(X_{i})=\bbE_{\rmQ}(X_{i})$ for $i<i_2$. By iterating this process, we will reach $I_k$ for some $k\leq p$ such that there is no $i$ with $\bbE_{\rmP^{I_k}}(X_i)\neq \bbE_{\rmQ}(X_k)$. Taking $I^*=I_k$ suffices.

\textbf{Uniqueness:} If there exists $I_1^*\neq I_2^*$ such that $\bbE_{\rmP^{I_1^*}}(X)=\bbE_{\rmP^{I_2^*}}(X)=\bbE_{\rmQ}(X)$, let $i\in [p]$ be the smallest index such that either $i$ has different shift values in $I_1^*$ and $I_2^*$, or $i$ is only in one intervention's perturbation targets. In either case, we have $\rmP^{I_1^*}(X_{\pa_{\cG}(i)}) = \rmP^{I_2^*}(X_{\pa_{\cG}(i)})$ by the topological order and $\rmP^{I_1^*}(X_{i}=x|X_{\pa_{\cG}(i)})= \rmP^{I_2^*}(X_{i}=x+a|X_{\pa_{\cG}(i)})$ for some $a\neq 0$. Thus $\rmP^{I_1^*}(X_{i}=x)= \rmP^{I_2^*}(X_{i}=x+a)$ contradicting $\bbE_{\rmP^{I_1^*}}(X_i)=\bbE_{\rmP^{I_2^*}}(X_i)$.
\end{proof}

\section{Proof of Identifiability}\label{appendix:identify}
\subsection{Shift Interventional MEC}
\begin{proof}[Proof of Lemma \ref{lm:2}]
For any distribution $f$ that factorizes according to $\cG$ and shift intervention $I$, let $i\in I$ be any source w.r.t. $I$. By definition, $\an_{\cG}(i)\cap I=\varnothing$. Thus $\pa_{\cG}(i)$ contains neither a member nor a descendant of $I$, i.e., there does not exists $j\in \pa_{\cG}(i)$ and $k\in I$ such that there is a direct path from $k$ to $j$ or $k=j$. Hence we have $f^I(X_{\pa_{\cG}(i)}) = f(X_{\pa_{\cG}(i)})$, which gives
\[
    f^I(X_{i}=x+a_i) = f(X_{i}=x).
\]
Therefore $\bbE_{f^I}(X_i) = \bbE_{f}(X_i)+a_i$.

On the other hand, if $i\in I$ is not a source w.r.t. $I$, consider the following linear Gaussian model,
\[
X_j = \sum_{k\in \pa_{\cG}(j)} \beta_{kj} X_k + \epsilon_j, \quad \forall j\in [p],
\]
where $\beta_{kj}$ are deterministic scalars and $\epsilon_j\sim\mathcal{N}(0,1)$ are i.i.d. random variables.

Since $i$ is not a source in $I$, there exists a source $i'$ in $I$ such that there is a directed path $i' = i_0 \to i_1 \rightarrow \dots \rightarrow i_\ell$. 
From above, $\bbE_{f^I}(X_{i'})= \bbE_{f}(X_{i'})+a_{i'}$ for $a_{i'}\neq 0$.
Consider setting $\beta_{i_0,i_1} = 2|a_i|/a_{i'}$, $\beta_{i_k,i_{k+1}} = 1$ for $k = 1,\ldots,\ell-1$, and the remaining edge weights to $\epsilon > 0$. For $\epsilon$ sufficiently small, we have that $\bbE_{f^I} (X_i) \geq \bbE_f (X_i) + 1.5|a_i|$, i.e., we cannot have that $\bbE_{f^I}(X_i) = \bbE_f (X_i) + a_i$.
\end{proof}

\begin{proof}[Proof of Theorem \ref{thm:1}]
Denote $\cI=\{I_1,...,I_m\}$. For $k\in [m]=\{1,...,m\}$, let $\hat{I}_k$ and $\hat{I}_k'$ be the collection of source nodes in $I_k$ in $\cG_1$ and $\cG_2$, respectively. From \rref{def:1}, we know that $\cG_1$ and $\cG_2$ are in the same shift-$\cI$-MEC if and only if they are in the same $\cI$-MEC and, for any pair $(f,\{f^{I_k}\}_{k\in [m]})$ that is $\cI$-Markov w.r.t. both $\cG_1$ and $\cG_2$, it satisfies  
\begin{equation}\label{aeq:3}
   \bbE_{f^{I_k}}(X_i)=\bbE_{f}(X_i)+a_{i}, \quad \forall i \in \hat{I}_k, \forall k \in [m], 
\end{equation}
if and only if it also satisfies
\begin{equation}\label{aeq:4}
   \bbE_{f^{I_k}}(X_i)=\bbE_{f}(X_i)+a_{i}, \quad \forall i \in \hat{I}_k', \forall k \in [m].
\end{equation}

By Lemma \ref{lm:2}, we know that $\hat{I}_k'\subset \hat{I}_k$ for all $k\in [m]$. Otherwise we can find a pair $(f,\{f^{I_k}\}_{k\in [m]})$ that violates \eqref{aeq:4} for $i\in \hat{I}_k'\setminus \hat{I}_k$. Similarly, we have $\hat{I}_k\subset \hat{I}_k'$. Therefore $\hat{I}_k=\hat{I}_k'$. In this case, \eqref{aeq:3} is equivalent to \eqref{aeq:4}.

Hence, $\cG_1$ and $\cG_2$ are in the same shift-$\cI$-MEC if and only if they are in the same $\cI$-MEC and they have the same source nodes of $I$ for every $I\in \cI$. From Theorem 3.9 in \citetappendix{yang2018characterizing+}, we know that $\cG_1$ and $\cG_2$ are in the same $\cI$-MEC if and only if they share the same skeleton, $v$-structures and directed edges $\{i\rightarrow j|i\in I,j\notin I, I\in \cI, i-j\}$. Therefore, $\cG_1$ and $\cG_2$ are in the same shift-$\cI$-MEC if and only if they have the same skeleton, $v$-structures, directed edges $\{i\rightarrow j|i\in I,j\notin I, I\in \cI, i-j\}$, as well as source nodes of $I$ for every $I\in \cI$.
\end{proof}

Let $\cD$ be any DAG, suppose that $\cI=\{I_1,...,I_m\}$ and $\hat{I}_k$ is the collection of source nodes in $I_k$ in $\cD$ for $k\in [m]$. Then as a direct corollary of \rref{thm:1}, we can represent a shift interventional Markov equivalence class with a (general) interventional Markov equivalence class.
\begin{corollary}\label{cor:1}
Let $\hat{\cI}=\cI\cup\{\hat{I}_k|k\in [m]\}$; a DAG $\cD'$ is shift-$\cI$-Markov equivalent to $\cD$ if and only if $\cD'$ is $\hat{\cI}$-Markov equivalent to $\cD$.
\end{corollary}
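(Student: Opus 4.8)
The plan is to prove this corollary by reducing both equivalence notions to their graphical characterizations and checking that the resulting conditions coincide. By \rref{thm:1}, a DAG $\cD'$ lies in the same shift-$\cI$-MEC as $\cD$ if and only if it shares $\cD$'s skeleton, v-structures, the $\cI$-directed edges $\{i\to j\mid i\in I_k, j\notin I_k, i-j\}$, and the source sets $\hat{I}_k$ for every $k$. On the other hand, since $\hat\cI = \cI \cup \{\hat{I}_k\}_{k\in[m]}$, the characterization of the $\cI$-MEC recalled in the proof of \rref{thm:1} gives that $\cD'$ is $\hat\cI$-Markov equivalent to $\cD$ if and only if it shares $\cD$'s skeleton, v-structures, and the directed edges leaving each intervention in $\hat\cI$, namely the $\cI$-directed edges together with the sets $\{i\to j\mid i\in\hat{I}_k, j\notin\hat{I}_k, i-j\}$ for every $k$. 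Both characterizations contain the conditions ``same skeleton, v-structures, and $\cI$-directed edges,'' so the corollary reduces to the following claim: for any $\cD'$ that is $\cI$-Markov equivalent to $\cD$, the source sets of $\cD'$ equal $\{\hat{I}_k\}$ if and only if $\cD'$ shares the $\hat{I}_k$-directed edges of $\cD$ for every $k$.

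Next I would record two structural facts about a source set $\hat{I}_k$, both immediate from the definition of source: $\hat{I}_k$ is an independent set (an edge between two of its nodes would make one an ancestor of the other inside $I_k$), and no skeleton edge points from $I_k$ into $\hat{I}_k$ (such an edge would give a source an ancestor in $I_k$). Consequently every edge between $\hat{I}_k$ and $I_k\setminus \hat{I}_k$ must be oriented out of $\hat{I}_k$, while the edges from $\hat{I}_k$ to $[p]\setminus I_k$ are already fixed by $\cI$-equivalence. This settles the forward direction of the claim: if $\cD'$ has source set $\hat{I}_k$, then in $\cD'$ the same edges leave $\hat{I}_k$ as in $\cD$, so the $\hat{I}_k$-directed edges agree.

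The hard part is the converse: showing that matching all the $\hat{I}_k$-directed edges forces the source set of $I_k$ in $\cD'$ to be exactly $\hat{I}_k$. Since the directed-edge sets agree and the skeleton is shared, every edge from $\hat{I}_k$ to $I_k\setminus\hat{I}_k$ is oriented outward in $\cD'$ too, so no node of $\hat{I}_k$ has a parent in $I_k$; combined with $\hat{I}_k$ being an independent set, each node of $\hat{I}_k$ is a source of $I_k$ in $\cD'$. The genuine obstacle is ruling out that some node of $I_k\setminus\hat{I}_k$ becomes a source in $\cD'$, because sourcehood depends on ancestral rather than merely adjacency relations, and a directed path between two $I_k$-nodes may route through vertices outside $I_k$ whose mutual orientations are not directly pinned down by the $\cI$-directed or $\hat{I}_k$-directed edges. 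I expect to resolve this using the chain-graph decomposition of the $\cI$-essential graph into chordal chain components (the same tool behind \rref{lm:4}): the endpoints of any such external detour sit on boundary edges of $I_k$, which are fixed across the $\cI$-MEC, and the chordality of the chain components together with the prohibition on introducing new v-structures or directed cycles prevents any re-orientation that would alter which $I_k$-nodes are ancestrally minimal. Making this rigidity precise, i.e.\ proving that the ancestrally minimal elements of $I_k$ are invariant across the $\cI$-MEC once the internal orientations incident to $\hat{I}_k$ are fixed, is the crux. With the claim in hand the two graphical characterizations coincide and the corollary follows; optionally one can phrase the conclusion at the level of essential graphs, noting that applying the Meek rules to the common set of directed edges yields identical shift-$\cI$-EG and $\hat\cI$-EG.
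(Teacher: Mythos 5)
Your reduction is the same as the paper's: the paper's proof of Corollary~\ref{cor:1} is exactly the one-liner you set up, namely \rref{thm:1} plus the graphical characterization of the $\cI$-MEC from \citetappendix{yang2018characterizing+} plus the fact that there are no edges between nodes of $\hat{I}_k$. However, your execution of the resulting claim --- that for $\cI$-equivalent DAGs, ``same source sets $\hat{I}_k$'' is equivalent to ``same orientations on edges leaving $\hat{I}_k$'' --- has a genuine gap, in fact two. First, a step that fails: in the converse you argue that since every edge from $\hat{I}_k$ to $I_k\setminus\hat{I}_k$ is oriented outward in $\cD'$ and $\hat{I}_k$ is independent, ``each node of $\hat{I}_k$ is a source of $I_k$ in $\cD'$.'' This does not follow: being a source means $\an_{\cD'}(s)\cap I_k=\varnothing$, and a node $u\in I_k$ can become an ancestor of $s\in\hat{I}_k$ in $\cD'$ via a directed path whose intermediate vertices all lie \emph{outside} $I_k$, so that no edge on the path except the first and last is pinned down by the boundary orientations. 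This is precisely the ``external detour'' obstacle you correctly identify one sentence later for nodes of $I_k\setminus\hat{I}_k$; it afflicts both halves of the converse equally, and adjacency-level facts cannot settle either. Second, the half you do flag as ``the crux'' is never proven: you only express the expectation that the chain-graph decomposition and chordality will yield the needed rigidity, without carrying this out, so the proposal is a plan with its hardest step missing.

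For what it is worth, the missing step has a standard and fairly short resolution that does not need the chain-graph decomposition at all. Suppose, say, that $s\in\hat{I}_k$ acquires an ancestor $u\in I_k$ in $\cD'$, and take a \emph{shortest} directed path $u\to v_1\to\dots\to v_{m-1}\to s$ in $\cD'$ among all directed paths from $I_k$ to $s$. Shortestness forces the intermediate vertices outside $I_k$, and it forces the path to be induced: a forward chord would shorten it and a backward chord would create a directed cycle (a direct edge $u-s$ is handled separately, being incident to $\hat{I}_k$ and hence oriented $s\to u$ in both DAGs, since $\hat{I}_k$ is independent so \emph{every} edge incident to $\hat{I}_k$ crosses its boundary --- this is where the paper's remark about no edges within $\hat{I}_k$ does its work). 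The first edge $u\to v_1$ crosses the boundary of $I_k$ and so has the same orientation in $\cD$. Now, since $\cD$ and $\cD'$ share v-structures and the path is induced, a collider at any $v_t$ in $\cD$ would be a v-structure absent from $\cD'$; hence the orientation propagates edge by edge, and the entire path is directed $u\to\dots\to s$ in $\cD$ as well, contradicting $s\in\hat{I}_k$. A symmetric argument (starting the shortest path at a $\cD$-source $s\in\hat{I}_k$ that is an ancestor of $j\in I_k\setminus\hat{I}_k$ in $\cD$, whose first edge is again fixed because it is incident to $\hat{I}_k$ or crosses the boundary of $I_k$) shows that no node of $I_k\setminus\hat{I}_k$ can become a source in $\cD'$. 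With this orientation-propagation argument, both halves of your claimed equivalence go through and your reduction completes the proof along the same lines the paper intends.
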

\begin{proof}
The proof follws as a direct application of \rref{thm:1}, Theorem 3.9 in \citetappendix{yang2018characterizing+}, and the fact that there are no edges between nodes in $\hat{I}_k$.
\end{proof}

\subsection{Mean Interventional Faithfulness}
\begin{proof}[Proof of Lemma \ref{lm:3}]
If Assumption \ref{assumption:1} holds, then for any $i\notin T$, since $\bbE_{\rmP}(X_i)= \bbE_{\rmQ}(X_i)$, then $i\notin I^*$ and $\an_{\cG}(i)\cap I^*=\varnothing$. Let $j\in T$ such that there is an edge $i-j$ between $i$ and $j$. Since $\bbE_{\rmP}(X_j)\neq \bbE_{\rmQ}(X_j)$, there is either $j\in I^*$ or $\an_{\cG}(j)\cap I^*\neq\varnothing$. Therefore if $j\rightarrow i$, then $\an_{\cG}(i)\cap I^*\neq\varnothing$, a contradiction. Thus $j\leftarrow i$.

Conversely, if Assumption \ref{assumption:1} does not hold, then there exists $i\notin T$ (i.e., $\bbE_{\rmP}(X_i)= \bbE_{\rmQ}(X_i)$) such that either $i\in I^*$ or $\an_{\cG}(i)\cap I^*\neq \varnothing$. If $i\in I^*$, then since $\bbE_{\rmP}(X_i)= \bbE_{\rmQ}(X_i)$ and Lemma \ref{lm:2}, $i$ must not be a source in $I^*$. Therefore we only need to discuss the case where $i\notin T$ and $\an_{\cG}(i)\cap I^*\neq \varnothing$.

Let $k$ be a source of $\an_{\cG}(i)\cap I^*$, then $k$ must also be a source of $I^*$. Otherwise there is a directed path from $k'$ to $k$ where $k'\neq k$ and $k'\in I^*$. By definition of ancestors, we know from $k\in \an_{\cG}(i)$ that there is also $k'\in \an_{\cG}(i)$. Therefore $k'\in \an_{\cG}(i)\cap I^*$, which violates $k$ being a source of $\an_{\cG}(i)\cap I^*$.

Since $k$ is a source of $I^*$, by Lemma \ref{lm:1} and \ref{lm:2}, we know that $\bbE_{\rmP}(X_k)\neq \bbE_{\rmQ}(X_k)$, i.e., $k\in T$. Notice that $k\in \an_{\cG}(i)$, and thus we must have a directed path from $k\in T$ to $i\notin T$. Thus, there exists some $i-j,j\in T,i\notin T$ such that $j\rightarrow i$.
\end{proof}

Using Lemma \ref{lm:3}, we know that we can check the authenticity of Assumption \ref{assumption:1} by looking at the orientation of edges between $T$ and $[p]\setminus T$, which is achievable by any (general) intervention on $X_{T}$ (or $X_{[p]\setminus T}$).
\begin{corollary}\label{cor:2}
Assumption \ref{assumption:1} holds if and only if the $\{T\}$-essential graph (or $\{[p]\setminus T\}$-essential graph)  of $\cG$ has edges $j\leftarrow i$ for all $i-j,j\in T, i\notin T$.
\end{corollary}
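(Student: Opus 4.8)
The plan is to derive this corollary as an immediate consequence of Lemma \ref{lm:3}, reformulated in the language of essential graphs. The bridge is the observation that intervening on all of $T$ orients precisely the edges in the cut between $T$ and $[p]\setminus T$, and that these identified orientations coincide with the true orientations in $\cG$.

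First I would recall the graphical characterization of the $\cI$-MEC stated before Theorem \ref{thm:1}: for a set of interventions $\cI$, two DAGs in the same $\cI$-MEC agree on the skeleton, the v-structures, and the directed edges $\{i\to j \mid i\in I, j\notin I, I\in\cI, i-j\}$. Taking $\cI=\{T\}$, any skeleton edge $i-j$ with $j\in T$ and $i\notin T$ has exactly one endpoint in $T$, so its orientation is identified and shared across the whole $\{T\}$-MEC. Since the true DAG $\cG$ lies in its own $\{T\}$-MEC, these boundary orientations agree with those of $\cG$, and by the definition of the essential graph they therefore appear as directed edges in the $\{T\}$-essential graph. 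In other words, for every boundary edge $i-j$ with $j\in T$, $i\notin T$, the $\{T\}$-essential graph has $j\leftarrow i$ if and only if $\cG$ has $j\leftarrow i$.

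Next I would invoke Lemma \ref{lm:3}, read as a two-sided statement about the true orientations in $\cG$. Its first half gives that if Assumption \ref{assumption:1} holds, then every boundary edge $i-j$ with $j\in T$, $i\notin T$ is oriented $j\leftarrow i$ in $\cG$; the contrapositive of its second half gives that if all such boundary edges are oriented $j\leftarrow i$ in $\cG$, then Assumption \ref{assumption:1} must hold, since a violation would force the existence of a boundary edge oriented $j\to i$. Combining these yields the equivalence: Assumption \ref{assumption:1} holds if and only if every boundary edge is oriented $j\leftarrow i$ in $\cG$. Chaining this with the identification of $\cG$-orientations with $\{T\}$-essential-graph orientations from the previous paragraph proves the stated iff. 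For the parenthetical $\{[p]\setminus T\}$-essential graph, I would note that the cut between $[p]\setminus T$ and its complement $T$ consists of exactly the same edges, so intervening on $[p]\setminus T$ identifies the same boundary orientations and the identical argument applies.

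I do not anticipate a serious obstacle, as the corollary is essentially a restatement of Lemma \ref{lm:3}. The only point requiring care is justifying that the $\{T\}$-essential graph reveals exactly the true orientation of every cut edge—neither missing any (each cut edge is identified) nor introducing any orientation inconsistent with $\cG$—which is precisely what the $\cI$-MEC characterization above supplies; after that, the equivalence follows by pure logic.
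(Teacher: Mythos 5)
Your proposal is correct and follows essentially the same route as the paper, whose proof is a one-line appeal to exactly the two ingredients you use: the graphical characterization of the $\cI$-MEC (which, for $\cI=\{T\}$ or $\cI=\{[p]\setminus T\}$, orients every cut edge between $T$ and $[p]\setminus T$ in agreement with $\cG$) and the two directions of Lemma~\ref{lm:3}. You merely spell out the details the paper leaves implicit, and you do so correctly.
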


\begin{proof}
The proof follows as a direct application of the graphical characterization of interventional equivalence class in \rref{sec:id-1} and the results in Lemma \ref{lm:3}.
\end{proof}

\section{Details of Algorithms}\label{appendix:alg}
\subsection{Decomposition of Shift Interventional Essential Graphs}
\textbf{Chain Graph Decomposition:} \citetappendix{hauser2014two+} showed that every interventional essential graph is a chain graph with undirected connected chordal chain components, where the orientations in one component do not affect any other components. This decomposition also holds for shift interventional essential graphs, since every shift interventional essential graph is also an interventional essential graph (Corollary \ref{cor:1}). Below, we show an example of this decomposition (Figure \ref{afig2}).

\begin{figure}[h]
     \centering
     \begin{subfigure}[b]{0.24\textwidth}
         \centering \includegraphics[width=\textwidth]{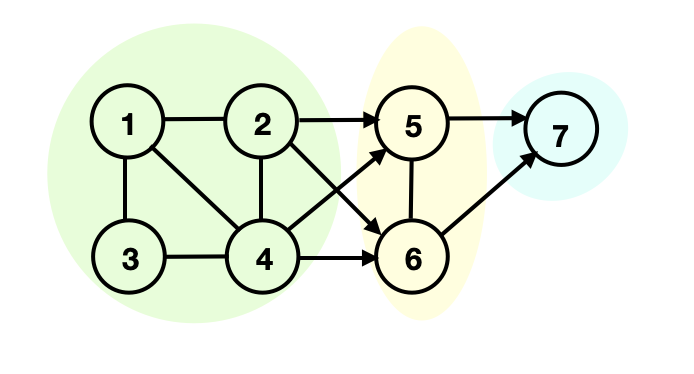}
         \caption{Essential Graph}
     \end{subfigure}
     \hfill
     \begin{subfigure}[b]{0.22\textwidth}
         \centering \includegraphics[width=0.65\textwidth]{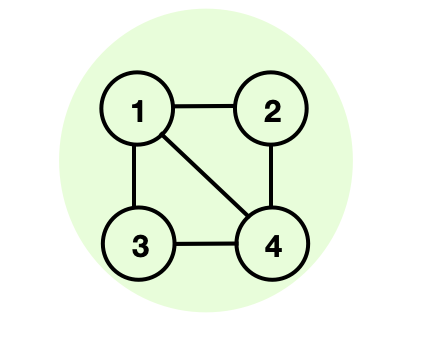}
         \caption{Chain Component 1}
     \end{subfigure}
    \hfill
     \begin{subfigure}[b]{0.22\textwidth}
         \centering \includegraphics[width=0.5\textwidth]{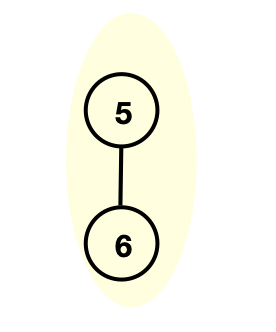}
         \caption{Chain Component 2}
     \end{subfigure}
     \hfill
     \begin{subfigure}[b]{0.22\textwidth}
         \centering \includegraphics[width=0.4\textwidth]{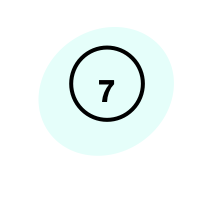}
         \caption{Chain Component 3}
     \end{subfigure}
        \caption{Chain graph decomposition of the essential graph in \textbf{(a)}.
        }
        \label{afig2}
\end{figure}

\begin{proof}[Proof of Lemma \ref{lm:4}]
Suppose an undirected connected chain component $\cC$ of the essential graph has two source nodes $i$ and $j$ w.r.t. $\cC$. Since $\cC$ is connected, there is a path between $i$ and $j$ in $\cC$; let $i-k_1-...-k_r-j$ be the shortest among all these paths. Because $i$ and $j$ are sources of $\cC$, there must be $i\rightarrow k_1$ and $k_r\leftarrow j$. Therefore, $\exists l\in \{1,...,r\}$ such that $k_{l-1}\rightarrow k_l\leftarrow k_{l+1}$ (let $k_0=i$ and $k_{r+1}=j$). By the shortest path definition, there is no edge between $k_{l-1}$ and $k_{l+1}$. Therefore there is a v-structure in $\cC$ induced by $k_{l-1}\rightarrow k_l\leftarrow k_{l+1}$. Since all DAGs in the same shift interventional equivalence class share the same v-structures, $k_{l-1}\rightarrow k_l\leftarrow k_{l+1}$ must be oriented in the essential graph. This violates $k_{l-1},k_l, k_{l+1}$ belonging to the same undirected chain component $\cC$. Thus, combining this with the fact that $\cC$ must have one source node, we obtain that $\cC$ has exactly one source node w.r.t. $\cC$.

Next we show that the source node of a chain component is also the source of $\cG$ if and only if there are no incoming edges to this component. Let $i$ be the source of the chain component $\cC$. On one hand, $i$ must be the source of $\cG$ if there is no incoming edges to $\cC$. On the other hand, if there is an incoming edge $j\rightarrow k$ for some $j\notin \cC$ and $k\in \cC$, then since the essential graph is closed under Meek R1 and R2 (Proposition \ref{prop:1}), we know that there must be an edge $j\rightarrow l$ for all neighbors $l$ of $k$. Following the same deduction and the fact that $\cC$ is connected, we obtain that $j\rightarrow l$ for all $l\in\cC$ (Figure \ref{afig3}). This means that $j\rightarrow i$ as well. Therefore $i$ cannot be a source of $\cG$.

\begin{figure}[ht]
    \centering
    \includegraphics[width=0.22\textwidth]{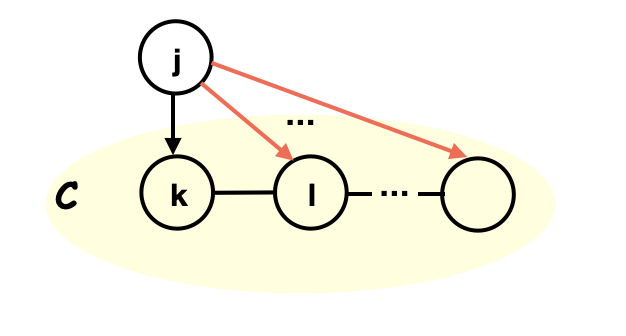}
    \caption{$j\rightarrow l$ for all $l\in\cC$.}
    \label{afig3}
    \vspace{-0.2in}
\end{figure}
\end{proof}

\subsection{NP-completeness of MinMaxC}
It was shown separately in \citetappendix{shen2012exact} and \citetappendix{lalou2018critical+} that the MinMaxC problem is NP-complete for general graphs and split graphs. Split graphs are a subclass of chordal graphs, where the vertices can be separated into a clique and an independent set (isolated nodes after removing the clique). 
Thus, MinMaxC is also NP-complete for chordal graphs.

\subsection{Clique Tree Strategy}
The clique tree strategy takes inputs of an undirected connected chordal graph $\cC$ and the sparsity constraint $S$, and outputs a shift intervention with no more than $S$ perturbation targets. If $\cC$ contains no more than $S$ nodes, then it returns any shift intervention with perturbation targets in $\cC$. If $\cC$ contains more than $S$ nodes, it first constructs a clique tree $\cT(\cC)$ of $\cC$ by the maximum-weight spanning tree algorithm \citepappendix{koller2009probabilistic+}. Then it iterates through the nodes in $\cT(\cC)$ (which are maximal cliques in $\cC$) to find a maximal clique $\mathcal{K}$ that breaks $\cT(\cC)$ into subtrees with sizes no more than half of the size of $\cT(\cC)$. If $\mathcal{K}$ has no more than $S$ nodes, then it returns any shift intervention with perturbation targets in $\mathcal{K}$. Otherwise, it samples $S$ nodes from $\mathcal{K}$ and returns any shift intervention with these $S$ nodes as perturbation targets. The following subroutine summarizes this procedure.

\begin{algorithm}[ht]
\SetAlgoLined
\KwIn{Chordal chain component $\cC$, sparsity constraint $S$.}
\eIf{$\cC$ has no more than $S$ nodes}{
 set $I$ as any shift intervention on $\cC$ with non-zero shift values\;
}{
let $C(\cC)$ be the maximal cliques of the chordal graph $\cC$\;
let $\cT(\cC)$ be a maximum-weight spanning tree of $\cC$ with $C(\cC)$ as nodes\;
set $\mathcal{K}=\varnothing$\;
\For{$K$ in $C(\cC)$}{
    get the subtrees of $\cT(\cC)$ after deleting node $C$\;
    \If{all subtrees has size $\leq \lceil (|C(\cC)|-1)/2\rceil$ 
    }{
    set $\mathcal{K}=K$\;
    break\;
    }
}
\If{$|\mathcal{K}|>S$}{set $\mathcal{K}$ as a random $S$-subset of $\mathcal{K}$\;}
set $I$ as any shift intervention on $\mathcal{K}$ with non-zero shift values\;
}
 \KwOut{Shift Intervention $I$}
 \caption{$\texttt{CliqueTree}(\cC,S)$}
 \label{alg:2}
\end{algorithm}

\textbf{Complexity:}
Let $N$ represent the number of nodes in $\cC$, i.e., $N=|\cC|$. All the maximal cliques of the chordal graph $\cC$ can be found in $O(N^2)$ time \citepappendix{galinier1995chordal}. We use Kruskal's algorithm for computing the maximum-weight spanning tree, which can be done in $O(N^2\log(N))$ \citepappendix{kruskal1956shortest}. The remaining procedure of iterating through $C(\cC)$ takes no more than $O(N^2)$ since chordal graphs with $N$ nodes have no more than $N$ maximal cliques \citepappendix{galinier1995chordal} and all subtree sizes can be obtained in $O(N)$. Therefore this subroutine can be computed in $O(N^2\log(N))$ time.

\subsection{Supermodular Strategy}
The supermodular procedure takes as input an undirected connected chordal graph $\cC$ as well as the sparsity constraint $S$, and outputs a shift intervention with perturbation targets by solving
\begin{equation}\label{aeq:5}
    \min_{A\subset V_{\cC}}\max_{i\in V_{\cC}} \hat{f}_i(A),\quad |A|\leq S,
\end{equation}
with the SATURATE algorithm \citepappendix{krause2008robust+}. Here $V_{\cC}$ represents nodes of $\cC$ and $\hat{f}_i(A) =\sum_{j\in V_{\cC}} \hat{g}_{i,j}(A)$ with $\hat{g}_{i,j}$ defined in \eqref{eq:4.3-2}. Algorithm \ref{alg:3} summarizes this subroutine.

\begin{algorithm}[ht]
\SetAlgoLined
\KwIn{Chordal chain component $\cC$, sparsity constraint $S$.}
\eIf{$\cC$ has no more than $S$ nodes}{
 set $I$ as any shift intervention on $\cC$ with non-zero shift values\;
}{
    let $A$ be the solution of \eqref{aeq:5} returned by SATURATE \citepappendix{krause2008robust+}\;
    set $I$ as any shift intervention on $A$ with non-zero shift values\; 
}
 \KwOut{Shift Intervention $I$}
 \caption{$\texttt{Supermodular}(\cC,S)$}
 \label{alg:3}
\end{algorithm}

\textbf{Supermodularity:}
First we give an example showing that $f_i$ defined in \eqref{eq:4.3-1} is not supermodular for chordal graphs, although it is clearly monotonic decreasing.
\begin{example}
Consider the chordal graph in Figure \ref{afig4}; we have $f_1(\{2\})-f_1(\varnothing)=3-4=-1 > -2 =1-3=f_1(\{2,3\})-f_1(\{3\})$. Therefore $f_1$ is not supermodular for this graph.
\begin{figure}[ht]
    \centering
    \includegraphics[width=0.12\textwidth]{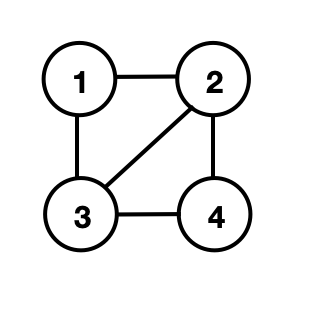}
    \caption{$f_i$ is not supermodular.}
    \label{afig4}
\end{figure}
\end{example}
Next we prove that $\hat{f}_i$ is supermodular and monotonic decreasing. 
\begin{proof}
Since $\hat{f}_i(A)=\sum_{j\in V_{\cC}} \hat{g}_{i,j}(A)$, we only need to show that every $\hat{g}_{i,j}$ is supermodular and monotonic decreasing. In the following, we refer to a path without cycles as a \textit{simple path}.

For any $A\subset B\subset V_{\cC}$, since $V_{\cC}-B$ is a subgraph of $V_{\cC}-A$, then any simple path between $i$ and $j$ in $V_{\cC}-B$ must also be in $V_{\cC}-A$. Hence $m_{i,j}(V_{\cC}-B)\leq m_{i,j}(V_{\cC}-A)$, which means that
\[
\hat{g}_{i,j}(A)\geq \hat{g}_{i,j}(B),
\]
i.e.,  $\hat{g}_{i,j}$ is monotonic decreasing.

For any $x\in V_{\cC}\setminus B$, the difference  $m_{i,j}(V_{\cC}-B)-m_{i,j}(V_{\cC}-B\cup\{x\})$ is the number of simple paths in $V_{\cC}-B$ between $i$ and $j$ that pass through $x$. Each of these paths must also be in $V_{\cC}-A$, since $V_{\cC}-B$ is a subgraph of $V_{\cC}-A$. Therefore, 
\[
m_{i,j}(V_{\cC}-B)-m_{i,j}(V_{\cC}-B\cup\{x\}) \leq m_{i,j}(V_{\cC}-A)-m_{i,j}(V_{\cC}-A\cup\{x\}),
\]
which means that
\[
\hat{g}_{i,j}(A\cup\{x\})-\hat{g}_{i,j}(A) \leq \hat{g}_{i,j}(B\cup\{x\})-\hat{g}_{i,j}(B),
\]
i.e., $\hat{g}_{i,j}$ is supermodular.
\end{proof}

\textbf{SATURATE algorithm \citepappendix{krause2008robust+}:} Having shown that $\hat{f}_i$ is monotonic supermodular, we solve the robust supermodular optimization problem in \eqref{aeq:5} with the SATURATE algorithm in \citepappendix{krause2008robust+}. SATURATE performs a binary search for potential objective values and uses a greedy partial cover algorithm to check the feasibility of these objective values; for a detailed description of the algorithm, see \citetappendix{krause2008robust+}.

\textbf{Complexity:} Let $N$ represent the number of nodes in $\cC$, i.e., $N=|\cC|$. SATURATE uses at most $O(N^2S\log(N))$ evaluations of supermodular functions $\hat{f}_i$ \citepappendix{krause2008robust+}. Each $\hat{f}_i$ computes all the simple paths between $i$ and all other $j$ in $\cC$. A modified depth-first search is used to calculated these paths \citepappendix{sedgewick2001algorithms}, which results in $\mathcal{F}(N)$ complexity. For general graphs, this problem is \verb|#|P-complete \citepappendix{valiant1979complexity}. However, this might be significantly reduced for chordal graphs. We are unaware of particular complexity results for chordal graphs, which would be an interesting direction for future work. The total runtime of this subroutine is thus bounded by $O(N^2\mathcal{F}(N)S\log(N))$.\footnote{For a more efficient implementation, one could replace the undirected graph with a DAG in its MEC (which can be found in linear time using L-BFS). All statements hold except that $\hat{f}_i$ is no longer necessarily tight for tree graphs. This replacement results in a total complexity of $O(N^4 S\log(N))$ for the subroutine, since directed simple paths can be counted in $O(N^2)$.} 

\subsection{Violation of Faithfulness}
From Corollary \ref{cor:2}, we know that we can check whether Assumption \ref{assumption:1} holds or not by any intervention on $X_{T}$ (or $X_{[p]\setminus T}$). However, we can run Algorithm \ref{alg:1} to obtain $I^*$ without Assumption \ref{assumption:1} because lines 2-14 in Algorithm \ref{alg:1} always return the correct $I^*$. 

Let $I\subset I^*$ be the
resolved part of $I^*$ in line 2, i.e., it is a shift intervention constructed by taking a subset of perturbation targets of $I^*$ and their corresponding shift values. Let $I^*-I$ be the remaining shift intervention constructed by deleting $I$ in $I^*$. Denote $T_{I}=\{i|i\in [p], \bbE_{\rmP^{I}}(X_i)\neq \bbE_{\rmQ}(X_i)\}$, which is returned by line 3. If $T_I\neq \varnothing$, then we have solved $I^*$. Otherwise we have:

\begin{lemma}
The source nodes w.r.t. $T_I$ must be perturbation targets of $I^*-I$ and their corresponding shift values are $\bbE_{\rmQ}(X_i)-\bbE_{\rmP^{I}}(X_i)$ (for source node $i$). 
\end{lemma}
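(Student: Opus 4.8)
The plan is to show that if $T_I \neq \varnothing$ (I note the lemma's conclusion applies precisely when there are still unmatched nodes after resolving $I$), then the source nodes of the induced subgraph $\cG_{T_I}$ coincide with source perturbation targets of the remaining intervention $I^* - I$, with the stated shift values. The key idea is that $T_I$ plays exactly the same role in the ``remaining'' problem that $T$ played in the original problem, so I would reduce the claim to an application of Observation \ref{obs:1} after verifying a faithfulness-type condition holds for the residual distribution $\rmP^{I}$ and residual target $I^* - I$.

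First I would establish that $\rmP^{I}$ together with the remaining matching intervention $I^* - I$ forms a valid instance of the causal mean matching problem: since $I \subset I^*$ consists of genuine perturbation targets of $I^*$ with their correct shift values, performing $I$ on $\rmP$ yields $\rmP^I$, and applying the remaining shift intervention $I^* - I$ to $\rmP^I$ recovers $\rmP^{I^*}$, whose mean equals $\bbE_{\rmQ}(X)$ by Lemma \ref{lm:1}. Hence $I^* - I$ is the (unique) matching intervention transforming $\bbE_{\rmP^I}(X)$ into $\bbE_{\rmQ}(X)$, and $T_I = \{ i : \bbE_{\rmP^I}(X_i) \neq \bbE_{\rmQ}(X_i) \}$ is exactly the set of currently-unmatched nodes for this residual problem.

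Next I would argue that $(I^* - I) \subset T_I$, i.e., every remaining perturbation target is unmatched in $\rmP^I$. This is where a faithfulness-type fact is needed, paralleling Assumption \ref{assumption:1}: a perturbation target of $I^* - I$ that is a source w.r.t. $I^* - I$ has its mean strictly shifted by Lemma \ref{lm:2}, hence lies in $T_I$; and any perturbation target downstream of a source likewise has an ancestor in $I^* - I \subset T_I$. Since sources w.r.t. $T_I$ are by definition not downstream of anything in $T_I$, I would then invoke Observation \ref{obs:1} directly on the residual problem $(\rmP^I, I^* - I, \cG_{T_I})$: the source nodes $U_{T_I}$ of $\cG_{T_I}$ satisfy $U_{T_I} \subset I^* - I$, and shifting a source $i$ cannot be affected by shifts on other nodes of $T_I$, so its required shift value is precisely $a_i = \bbE_{\rmQ}(X_i) - \bbE_{\rmP^{I}}(X_i)$, matching the claim.

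The main obstacle I anticipate is justifying that sources w.r.t. $T_I$ are indeed genuine perturbation targets of $I^* - I$ and are sources w.r.t. $I^* - I$, rather than merely being unmatched for indirect reasons. In the presence of possible faithfulness violations (which this appendix section is explicitly addressing), a node could in principle be unmatched without being a target, or a target could have an unexpectedly shifted ancestor. I would handle this by applying the contrapositive direction of Lemma \ref{lm:2} / Lemma \ref{lm:3} to the residual problem to rule out that a source of $\cG_{T_I}$ has an incoming edge from outside $T_I$ carrying a shift, thereby confirming it is a source w.r.t. $I^* - I$ and its mean shift is exactly $a_i$; this localization argument, rather than any computation, is the delicate step.
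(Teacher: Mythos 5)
Your proposal contains a genuine gap at its central step. The intermediate claim that $(I^* - I) \subset T_I$ is false in exactly the setting this lemma is meant to handle: the lemma lives in the appendix section on violations of faithfulness, so Assumption~\ref{assumption:1} may fail for the residual problem, and a non-source perturbation target of $I^*-I$ can have its shift exactly cancelled by the propagated shifts of its ancestors (this is precisely the degenerate situation of Example~1), placing it outside $T_I$. Your argument for the claim is also internally flawed: for a target downstream of a source you only conclude that it \emph{has an ancestor} in $T_I$, which does not put the node itself in $T_I$, and the phrase ``$I^*-I \subset T_I$'' appears inside its own justification. Relatedly, invoking Observation~\ref{obs:1} on the residual problem is circular here, since Observation~\ref{obs:1} is stated and justified in the main text under the standing faithfulness assumption --- this appendix lemma \emph{is} the faithfulness-free replacement for it, so it cannot be cited as a black box. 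The tools you propose for the ``delicate step'' also do not apply: the second half of Lemma~\ref{lm:2} and the converse direction of Lemma~\ref{lm:3} are existence statements (there \emph{exists} a distribution, or a violating edge), not devices for localizing the fixed distribution $\rmP^{I}$ at hand.

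The paper's proof needs no faithfulness-type condition and proceeds in two unconditional steps that your proposal does not supply. First, for any source $i$ w.r.t.\ $I^*-I$, shifting the other targets of $I^*-I$ cannot change the marginal of $X_i$, and since $(\rmP^{I})^{I^*-I} = \rmP^{I^*}$ has mean $\bbE_{\rmQ}(X)$, the shift value is forced to be $a_i = \bbE_{\rmQ}(X_i) - \bbE_{\rmP^{I}}(X_i) \neq 0$; hence every source of $I^*-I$ lies in $T_I$ with the stated shift value. Second, any $i \in T_I$ must be a perturbation target of $I^*-I$ or a descendant of one (an intervention can only move the means of its targets and their descendants --- this direction never needs faithfulness), and by the ancestor-chasing argument from the proof of Lemma~\ref{lm:3}, $i$ is then either itself a source w.r.t.\ $I^*-I$ or has such a source as a strict ancestor. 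In the latter case that ancestor lies in $T_I$ by the first step, contradicting $\an_{\cG}(i) \cap T_I = \varnothing$ for a source w.r.t.\ $T_I$. Note also that the lemma's notion of ``source w.r.t.\ $T_I$'' uses ancestors in the full graph $\cG$ (per the definition in Section~\ref{sec:id-1}), not in the induced subgraph $\cG_{T_I}$ as in your appeal to $U_{T_I}$; without faithfulness these can differ, since a directed path from a shifted source to $i$ may pass through cancelled nodes outside $T_I$, which is a second reason your reduction to Observation~\ref{obs:1} does not go through.
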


\begin{proof}
Let $i$ be a source node w.r.t. $I^*-I$ and $a_i$ be its corresponding shift value. Since intervening on other nodes in $I^*-I$ does not change the marginal distribution of $i$, we must have that $a_i =\bbE_{(\rmP^{I})^{I^*-I}}(X_i)-\bbE_{\rmP^{I}}(X_i)$. And because $ (\rmP^{I})^{I^*-I}=\rmP^{I^*} = \rmQ$, we know that
\[
a_i = \bbE_{\rmQ}(X_i)-\bbE_{\rmP^{I}}(X_i).
\]
From this, we also have that $\bbE_{\rmP^{I}}(X_i)\neq \bbE_{\rmQ}(X_i)$ since $a_i\neq 0$. Therefore, all source nodes $i$ w.r.t. $I^*-I$ are in $T_I$ and their corresponding shift values are $\bbE_{\rmQ}(X_i)-\bbE_{\rmP^{I}}(X_i)$. 

Let $i$ be a source w.r.t. $T_I$, then $i$ must also be a source node w.r.t. $I^*-I$. Since $\bbE_{\rmP^{I}}(X_i)\neq \bbE_{\rmQ}(X_i)$, $i$ must be a source node in $I^*-I$ or has a source node in $I^*-I$ as its ancestor. If it is the latter case, then since all source nodes in $I^*-I$ must be in $T_I$, $i$ cannot be a source node w.r.t. $T_I$, a contradiction. Therefore the source w.r.t. $T_I$ must also be the source w.r.t. $I^*-I$. Combined with the result in the previous paragraph, we have that all source nodes $i$ w.r.t. $T_I$ are perturbation targets of $I^*-I$ and their corresponding shift values are $\bbE_{\rmQ}(X_i)-\bbE_{\rmP^{I}}(X_i)$. 
\end{proof}

This lemma shows that $U_T$ obtained in lines 5-11 of Algorithm \ref{alg:1} must be the perturbation targets of $I^*-I$ and line 12 gives the correct shift values. Therefore Algorithm \ref{alg:1} must return the correct $I^*$. 

However, to be able to obtain the shift-$\cI$-EG of $\cG$, we need mean interventional faithfulness to be satisfied by $I\in\cI$ (replacing $I^*$ with $I$ and $\rmQ$ with $\rmP^{I}$ in Assumption \ref{assumption:1}) as well as $\cI$-faithfulness \citepappendix{squires2020permutation+} to be satisfied by  $(\rmP,\{\rmP^I\}_{I\in\cI})$ with respect to $\cG$.

\section{Proof of Worst-case Bounds}\label{appendix:bound}

\subsection{Proof of Lemma \ref{lemma:oracle-lb}}
To show Lemma \ref{lemma:oracle-lb}, we need the following proposition, which states that we can orient any maximal clique of a chordal graph to be most-upstream without creating cycles and v-structures, and the orientation in this clique can be made arbitrary. It was pointed out in \citepappendix{vandenberghe2015chordal} using similar arguments that any clique of a chordal graph can be most-upstream. Here, we provide the complete proof. 

\begin{proposition}\label{prop:2}
Let $\cD$ be any undirected chordal graph and $K$ be any maximal clique of $\cD$, for any permutation $\pi_K$ of the nodes in $K$, there exists a topological order $\pi$ of the nodes in $\cD$ such that $\pi$ starts with $\pi_K$ and orienting $\cD$ according to $\pi$ does not create any v-structures.
\end{proposition}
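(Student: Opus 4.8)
The plan is to recast the statement in terms of \emph{perfect elimination orderings} (PEOs) of the chordal graph $\cD$ and then to construct a PEO with the required terminal structure. First note that orienting $\cD$ according to a linear order $\pi$ (i.e., $a\to b$ whenever $a$ precedes $b$ in $\pi$ and $\{a,b\}$ is an edge) always produces an acyclic orientation, since every edge points forward in $\pi$; thus acyclicity is automatic and only the absence of v-structures needs to be arranged. A v-structure $a\to c\leftarrow b$ with $a,b$ nonadjacent arises exactly when some node $c$ has two nonadjacent neighbors both preceding it in $\pi$. Hence the orientation is v-structure-free if and only if, for every node $c$, the neighbors of $c$ preceding it in $\pi$ form a clique. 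Writing $\pi^{R}$ for the reversed order, the predecessors of $c$ in $\pi$ are precisely its successors in $\pi^{R}$, so this condition is equivalent to requiring that in $\pi^{R}$ the later neighbors of every node form a clique, i.e., that $\pi^{R}$ is a PEO of $\cD$. The requirement that $\pi$ begin with $\pi_K$ then translates into requiring that $\pi^{R}$ end with the vertices of $K$ in the reverse of $\pi_K$. It therefore suffices to exhibit a PEO of $\cD$ whose final $|K|$ vertices are exactly the nodes of $K$, appearing in any prescribed internal order.

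To build such a PEO I would eliminate vertices one at a time from $\cD$, always choosing a \emph{simplicial} vertex lying outside $K$, until only the vertices of $K$ remain. Two standard facts make this possible: every induced subgraph of a chordal graph is again chordal (and $K$ remains a clique in each such subgraph), and by Dirac's theorem every chordal graph that is not complete contains at least two \emph{nonadjacent} simplicial vertices. As long as the current vertex set properly contains $K$, maximality of $K$ forces the current induced subgraph to be noncomplete: otherwise its entire vertex set would be a clique strictly larger than $K$, contradicting maximality. Dirac's theorem then yields two nonadjacent simplicial vertices; since any two vertices of the clique $K$ are adjacent, at least one of these lies outside $K$, and I eliminate it. Each such step strictly decreases the number of vertices outside $K$, so the process terminates with exactly $K$ remaining, which induces a complete graph.

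Finally I would append the vertices of $K$ to the elimination order in the reverse of $\pi_K$. Since a complete graph admits every ordering as a PEO, each vertex of $K$ is simplicial at the moment it is eliminated, so the whole sequence is a genuine PEO $\pi^{R}$ that ends with $K$ in the desired internal order. Reversing it gives the sought $\pi$, which starts with $\pi_K$ and whose induced orientation of $\cD$ is acyclic and v-structure-free. The main obstacle is precisely the elimination step: guaranteeing that at every stage a simplicial vertex can be chosen \emph{outside} $K$, so that no node of $K$ is removed prematurely. This is exactly where the \emph{nonadjacency} in Dirac's two-simplicial-vertex theorem, together with maximality of $K$, is essential, since an arbitrary PEO need not leave the vertices of $K$ for last.
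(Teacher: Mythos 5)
Your proof is correct, and its overall architecture coincides with the paper's: both reduce the statement to exhibiting a perfect elimination order (PEO) whose reversal begins with $\pi_K$ (the paper quotes this correspondence from prior work, while you rederive it, correctly observing that acyclicity is automatic and that v-structure-freeness is exactly the condition that each vertex's earlier neighbors form a clique), and both then build that PEO by greedily eliminating simplicial vertices outside $K$ until only $K$ remains, finishing by eliminating $K$'s vertices in the reverse of $\pi_K$, which is valid since every ordering of a complete graph is a PEO. Where you genuinely diverge is in the justification of the crucial step: that a simplicial vertex \emph{outside} $K$ exists whenever the current vertex set properly contains $K$. The paper proves this from scratch by induction on $|\cD|$ (its intermediate ``statement''): it takes an arbitrary PEO, shows via maximality that a simplicial vertex $i \in K$ can have no neighbors outside $K$, deletes $i$, and recurses on the smaller chordal graph. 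You instead invoke Dirac's classical theorem that every non-complete chordal graph contains two \emph{nonadjacent} simplicial vertices, note that the current induced subgraph cannot be complete (else its vertex set would be a clique in $\cD$ strictly containing $K$, contradicting maximality), and observe that two nonadjacent vertices cannot both lie in the clique $K$. Your route is shorter and arguably cleaner, at the cost of importing a classical result; the paper's bespoke induction is longer but self-contained, deriving the needed existence from nothing more than the existence of PEOs in chordal graphs. Both arguments correctly handle the subtlety you flag at the end, namely that an arbitrary PEO need not leave $K$ for last, so the elimination must be actively steered away from $K$ at every stage.
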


\begin{proof}
A topological order $\pi$ of a chordal graph $\cD$, orienting according to which does not create v-structures, corresponds to the reverse of a \textit{perfect elimination order} \citepappendix{hauser2014two+}. A perfect elimination order is an order of nodes in $\cD$, such that all neighbors of $i$ in $\cD$ that appear after $i$ in this order must constitute a clique in $\cD$. Any chordal graph has at least one perfect elimination order \citepappendix{andersson1997characterization+}. In the following, we will use the reverse of a perfect elimination order to refer to a topological order that does not create v-structures.

To prove Proposition \ref{prop:2}, we first prove the following \textit{statement}: if $K\neq \cD$, then there exists a perfect elimination order of nodes in $\cD$ that starts with a node not in $K$. To show this, by Proposition 6 in \citetappendix{hauser2014two+}, we only need to prove that if $K\neq \cD$, then there is a node not in $K$, whose neighbors in $\cD$ constitute a clique. 

We use induction on the number of nodes in $\cD$: Consider $|\cD|=1$. Since $K$ is a maximal clique, $K=\cD$. This statement holds trivially. Suppose the statement is true for chordal graphs with size $n-1$. Consider $|\cD|=n$. Since $\cD$ is a chordal graph, it must have a perfect elimination order. If this perfect elimination order starts with $i\in K$, then there is no edge between $i$ and any node $j\notin K$. Otherwise, since it is a perfect elimination order starting with $i$ and $K\ni i$ is a clique, there must be edges $j-k$ for all $k\in K$. This is a contradiction to $K$ being a maximal clique. 
\begin{figure}[h]
    \centering
    \includegraphics[width=0.2\textwidth]{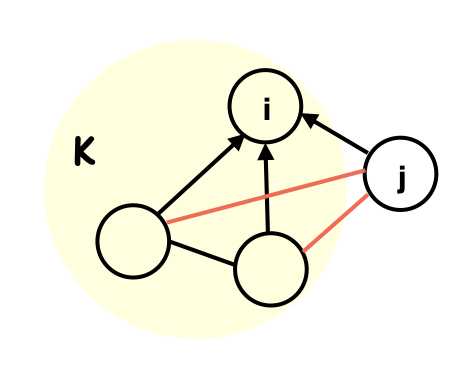}
\end{figure}

Consider the chordal graph $\cD'$ by deleting $i$ from $\cD$, $|\cD'|=n-1$. Let $K'$ be the maximal clique in $\cD'$ containing $K\setminus\{i\}$. If $K'=\cD'$, let $j$ be any node in $\cD\setminus K$. Since there is no edge $j-i$, and $\cD'\ni j$ is a clique. $j$'s neighbors in $\cD$ must also constitute a clique. If $K'\neq \cD'$, then by induction, we know that there exists $j\in \cD'\setminus K'$ such that $j$'s neighbors in $\cD'$ constitute a clique. Since there is no edge $j-i$, $j$'s neighbors in $\cD$ must also constitute a clique. Thus the statement holds for chordal graphs of size $n$. Therefore the statement holds.
\begin{figure}[h]
    \centering
     \begin{subfigure}[b]{0.3\textwidth}
         \centering
         \includegraphics[width=0.8\textwidth]{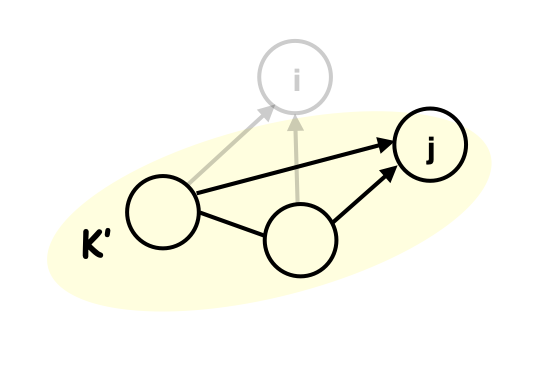}
     \end{subfigure}
     \begin{subfigure}[b]{0.3\textwidth}
         \centering
         \includegraphics[width=0.7\textwidth]{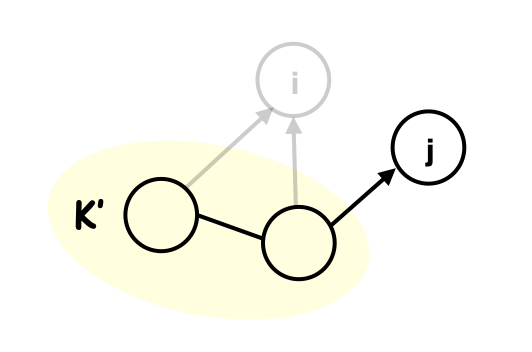}
     \end{subfigure}
\end{figure}

Now, we prove Proposition \ref{prop:2} by induction on the number of nodes in $\cD$: Consider $|\cD|=1$. Since $K$ is a maximal clique, $K=\cD$. Thus Proposition \ref{prop:2} holds trivially.

Suppose Proposition \ref{prop:2} holds for chordal graphs of size $n-1$. Consider $|\cD|=n$. If $K=\cD$, then Proposition \ref{prop:2} holds. If $K\neq\cD$, then by the above statement, there exists $j\in \cD\setminus K$, such that there exists a perfect elimination order of $\cD$ starting with $j$. Let $\cD'$ be the chordal graph obtained by deleting $j$ from $\cD$. By induction, there exists $\pi'$, a reverse of perfect elimination order, that starts with $\pi_K$. Let $\pi=(\pi',j)$; we must have that the reverse of $\pi$ is a perfect elimination order, since all neighbors of $j$ in $\cD$ constitute a clique. Therefore $\pi$ gives the wanted topological order and Proposition \ref{prop:2} holds for chordal graphs of size $n$. This completes the proof of Proposition \ref{prop:2}.
\end{proof}

\begin{proof}[Proof of Lemma \ref{lemma:oracle-lb}]
Given any algorithm $\mathcal{A}$, let $S_1,...,S_k$ be the first $k$ shift interventions given by $\mathcal{A}$. 
By Proposition \ref{prop:2}, we know that there exists a feasible orientation of $\cC$ such that the largest maximal clique $K$ of $\cC$ is most-upstream and that, for $k'=1,...,k$, $S_{k'}\cap K$ is most-downstream of $K-\cup_{l<k'} S_{l}$.
For example, in the figure below, suppose algorithm $\cA$ chooses $S_1=\{3\}$ based on (a) and $S_2=\{2\}$ based on (b).
There is a feasible orientation in (d) such that the largest clique $K=\{1,2,3\}$ is most-upstream and $S_{k'}\cap K$ is most-downstream of $K$, for $k'=1,2$.
\begin{figure}[h]
     \centering
     \begin{subfigure}[b]{0.22\textwidth}
         \centering \includegraphics[width=0.65\textwidth]{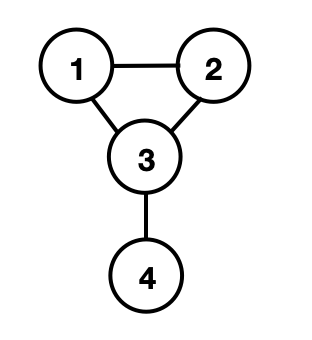} \caption{$\cC$}
     \end{subfigure}
     \hfill
     \begin{subfigure}[b]{0.22\textwidth}
         \centering \includegraphics[width=0.65\textwidth]{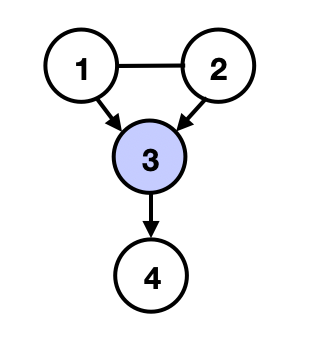}
         \caption{$\cA$ chooses $S_1=\{3\}$}
     \end{subfigure}
    \hfill
     \begin{subfigure}[b]{0.22\textwidth}
         \centering \includegraphics[width=0.65\textwidth]{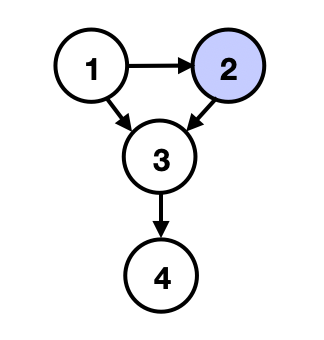}
         \caption{$\cA$ chooses $S_2=\{2\}$}
     \end{subfigure}
     \hfill
     \begin{subfigure}[b]{0.22\textwidth}
         \centering \includegraphics[width=0.65\textwidth]{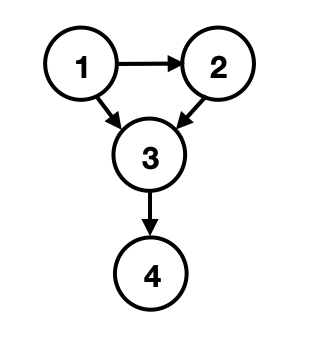} \caption{Orientation of $\cC$}
     \end{subfigure}
    \label{afigE1}
\end{figure}

Since $|S_{k'}|\leq S$ and $|K|=m_{\cC}$, in this worst case, it needs at least $\lceil \frac{m_{\cC}-1}{S}\rceil$ interventions to identify the source of $K$, i.e., the source of $\cC$ (minus $1$ because in this case, if there is only one node left, then it must be the source).
\end{proof}

\subsection{Proof of Lemma \ref{lemma:alg-lb}}

Let $K$ be the clique obtained by lines 7-13 in Algorithm \ref{alg:2}; when $\cC$ has more than $S$ nodes, we refer to $K$ as the \textit{central clique}. 
To prove Lemma \ref{lemma:alg-lb}, we need the following proposition.
This proposition shows that by looking at the undirected graph $\cC$, we can find a node in the central clique $K$ satisfying certain properties, which will become useful in the proof of Lemma \ref{lemma:alg-lb}.

\begin{proposition}\label{prop:3}
Let $\{\cT_{a}\}_{a\in A}$ be the connected subtrees of $\cT(\cC)$ after removing $K$.
For a node $k \in K$, let $A_k \subset A$ be the set of indices $a \in A$ such that the tree $\cT_a$ is connected to $K$ only through the node $k$. 
Let $\cT_{A_k}=\{\cT_{a}\}_{a\in A_k}$ be the collection of all such subtrees.
If there exists $a \in A \setminus A_k$ such that there is an edge between $\cT_a$ and $k$, let $\cT^*_{k}$ be the one with the largest number of maximal cliques; otherwise let $\cT^*_{k}=\varnothing$. 
Then there exists a node $k$ such that the number of maximal cliques in the subgraph induced by the subtrees $\cT_{A_k} \cup \{\cT^*_{k}\}$ and $k$ itself does not exceed $\lceil \frac{r-1}{2}\rceil$.
\end{proposition}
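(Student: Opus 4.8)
The plan is to reduce the statement to a counting inequality on the clique tree and then prove that inequality by an averaging argument over the nodes of $K$.

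First I would root $\cT(\cC)$ at $K$, so that each subtree $\cT_a$ ($a\in A$) hangs off $K$ through the separator $S_a=K\cap C_a$, where $C_a$ is the unique clique of $\cT_a$ adjacent to $K$; write $n_a$ for the number of maximal cliques of $\cT_a$. A routine check shows that the number of maximal cliques of the subgraph induced by $\cT_{A_k}\cup\{\cT^*_k\}$ and $k$ equals $\sum_{a\in A_k} n_a + n(\cT^*_k)$: the subtrees in $A_k$ pairwise meet only at $k$ and $k$ lies in each of their cliques adjacent to $K$, so cliques from different subtrees never merge and each stays maximal even after we keep only $k$ (dropping the rest of each separator). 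Thus it suffices to find $k\in K$ with $\sum_{a\in A_k} n_a + n(\cT^*_k)\le\lceil\frac{r-1}{2}\rceil$. Three facts feed the argument: $\sum_{a\in A} n_a = r-1$; the central-clique property, giving $n_a\le\lceil\frac{r-1}{2}\rceil$ for every branch; and, crucially, that $K$ is a \emph{maximal} clique, so every separator is a proper subset $S_a\subsetneq K$, i.e.\ $|S_a|\le|K|-1$.

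Writing $\mu_k:=n(\cT^*_k)$ (the size of the largest multi-node branch through $k$, and $0$ if none), $\Sigma_1:=\sum_{|S_a|=1}n_a$, $\Sigma_2:=\sum_{|S_a|\ge 2}n_a$, and $\nu:=\max_{|S_a|\ge 2}n_a$, I would sum $F(k):=\sum_{a\in A_k}n_a+\mu_k$ over all $k\in K$. The single-node terms contribute $\sum_k\sum_{a\in A_k}n_a=\Sigma_1$. For the maximum terms I would charge each $k$ with $\mu_k>0$ to a witness branch achieving $\mu_k$; a branch $\cT_a$ can be charged only by nodes of $S_a$, and since $S_a\subsetneq K$ every branch covers at most $|K|-1$ nodes. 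A short counting argument (each covered node contributes its witness once and every branch has size at most $\nu$) then gives $\sum_k\mu_k\le\Sigma_2+(|K|-2)\nu$. Combining this with $\nu\le\lceil\frac{r-1}{2}\rceil$ and $r-1\le 2\lceil\frac{r-1}{2}\rceil$ yields
\[
\sum_{k\in K}F(k)\ \le\ \Sigma_1+\Sigma_2+(|K|-2)\nu\ =\ (r-1)+(|K|-2)\nu\ \le\ |K|\Big\lceil\tfrac{r-1}{2}\Big\rceil,
\]
so some $k\in K$ must satisfy $F(k)\le\lceil\frac{r-1}{2}\rceil$, which is the claim.

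I expect the crux to be the estimate $\sum_k\mu_k\le\Sigma_2+(|K|-2)\nu$: this is exactly where the maximality of $K$ (hence $S_a\subsetneq K$) is indispensable, since without it a single large branch could be charged to every node of $K$ and the averaging would collapse. The reduction in the first step (that keeping only $k$ preserves maximality and does not merge cliques across subtrees) should follow routinely from the running-intersection property of clique trees, though a little care is needed to confirm that restricting $\cT^*_k$ to the single node $k$ does not change its clique count; I would verify this separately. As sanity checks I would also confirm the degenerate cases $r=1$ (no branches) and $|K|=2$ (no multi-node separators, so $\nu=0$), where the bound holds trivially.
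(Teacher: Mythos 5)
Your proof is correct, but it takes a genuinely different route from the paper's. The paper argues by an extremal two-point pigeonhole: it first proves (via chordality, using that the neighborhoods of $K$-nodes in any branch are nested) that for every subtree there is a node of $K$ with no edge to it; it then picks $k'$ maximizing $n(\cT^*_{k'})$, takes $k\in K$ non-adjacent to $\cT^*_{k'}$, shows the two collections $\cT_{A_k}\cup\{\cT^*_k\}$ and $\cT_{A_{k'}}\cup\{\cT^*_{k'}\}$ are disjoint so their clique counts sum to at most $r$, and derives a contradiction if both exceed $\lceil\frac{r-1}{2}\rceil$. Notably, the paper's argument uses only that $K$ is a maximal clique and that $\cC$ has $r$ maximal cliques in total, so it establishes the proposition for an \emph{arbitrary} maximal clique $K$. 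Your averaging-plus-charging argument instead sums $F(k)$ over all of $K$ and crucially invokes the central-clique property $n_a\le\lceil\frac{r-1}{2}\rceil$ to bound $\nu$; this is legitimate here, since in context $K$ is the clique returned by lines 7--13 of the \texttt{CliqueTree} subroutine and Proposition~3 is only applied to that clique in the proof of Lemma~\ref{lemma:alg-lb}, but your proof would not extend to non-central maximal cliques, whereas the paper's does. I verified your crux inequality $\sum_k\mu_k\le\Sigma_2+(|K|-2)\nu$: splitting the charged branches $B$ by cases ($|B|\ge 2$ uses $\sum_a c_a\le|K|$; $|B|=1$ uses $c_a\le|S_a|\le|K|-1$, i.e.\ maximality of $K$ --- the same fact underlying the paper's Fact~1), it holds, and the final arithmetic $(r-1)+(|K|-2)\nu\le|K|\lceil\frac{r-1}{2}\rceil$ goes through. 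Two small points you should make explicit: first, ``an edge between $k$ and $\cT_a$ exists iff $k\in S_a$'' (a consequence of the running intersection property), which is what licenses ``a branch can be charged only by nodes of $S_a$'' and the identification $\sum_k\sum_{a\in A_k}n_a=\Sigma_1$; second, your claimed \emph{equality} in the reduction step is really only an inequality --- after deleting $K\setminus\{k\}$, a restricted clique from a branch can be absorbed into another, so the count can drop, and when $k$ is isolated the singleton $\{k\}$ contributes one clique (relevant only when $r=1$, a degenerate case the paper also excludes separately in the proof of Lemma~\ref{lemma:alg-lb}). Since you only need the ``$\le$'' direction and flag both issues yourself, neither is a genuine gap.
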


\begin{example}
As an example, the following figure shows the subtrees that are connected to $K$ only through node $1$, indexed by $A_1$ (blue). The largest subtree in $A\setminus A_1$ that is adjacent to node $1$ is denoted by $\cT_1^*$ (undimmed in green). 
\end{example}

\begin{figure}[h]
    \centering \includegraphics[width=0.36\textwidth]{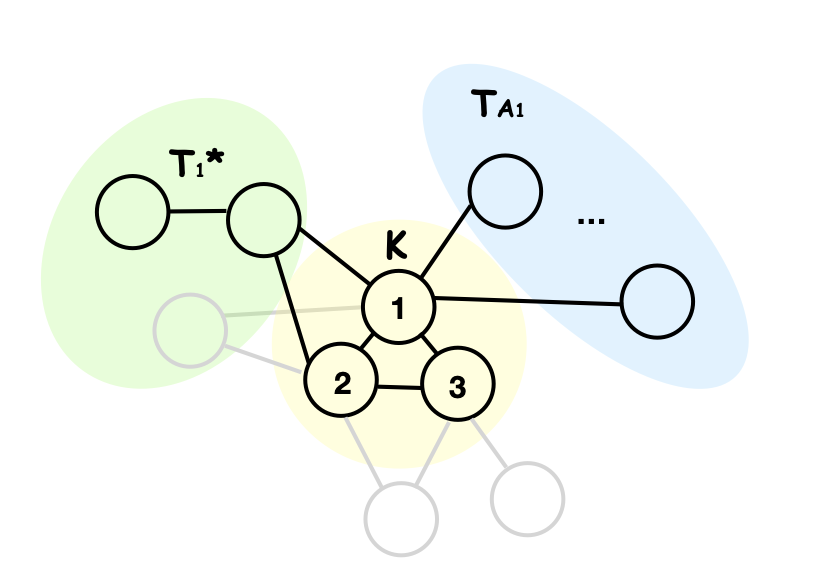}
    \caption{An example of $\cT_{A_k}$ and $\cT_{k}^*$ for $k=1$.}
    \label{afigE2}
\end{figure}

\begin{proof}[Proof of Proposition \ref{prop:3}]
Notice the following facts.

\textbf{Fact 1:} Let $\cT$ be any subtree in $\{\cT_{a}\}_{a\in A}$; then there must exist a node $i\in K$ such that there is no edge between $i$ and $\cT$.

\begin{quote}
\textit{Proof of Fact 1:} For any two nodes $i,i'\in K$, because $\cC$ is chordal and $\cT$ is connected, either the neighbors of $i$ in $\cT$ subset that of $i'$, or the the neighbors of $i'$ in $\cT$ subset that of $i$. Therefore we can order all nodes $K$, where all neighbors of $i$ in $\cT$ subset that of $i'$ that appear after $i$. Then if the first node in this order has some neighbor $t\in \cT$, all nodes in $K$ have $t$ as neighbor, contradicting $K$ being a maximal clique.
\end{quote}
\begin{figure}[ht]
     \centering
     \begin{subfigure}[b]{0.35\textwidth}
         \centering
         \includegraphics[width=0.45\textwidth]{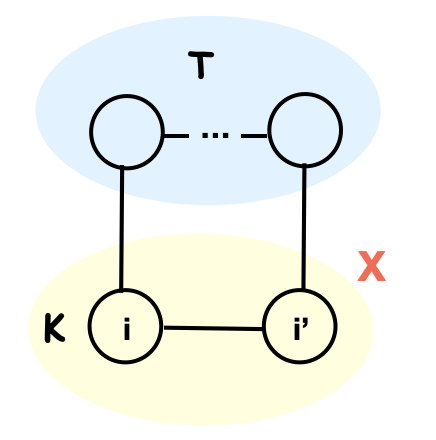}
         \caption{Contradicting chordal $\cC$}
     \end{subfigure}
     \begin{subfigure}[b]{0.35\textwidth}
         \centering
         \includegraphics[width=0.45\textwidth]{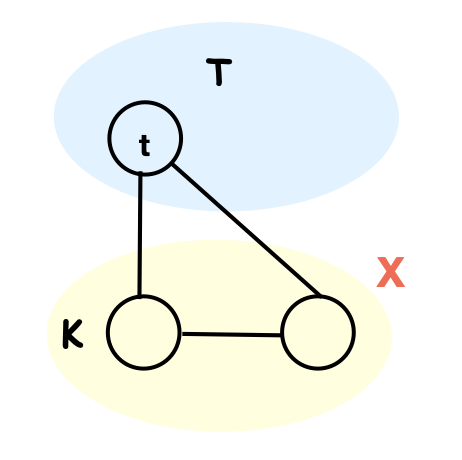}
         \caption{Contradicting maximal clique $K$}
     \end{subfigure}
\end{figure}

\textbf{Fact 2:} Let $\bar{\cT}$ be the collection of the subtrees where all edges connecting to $K$ are through a single node $k \in K$. We have that $\bar{\cT}$ is the union of disjoint sets $\cT_{A_k}, k\in K$. 

\begin{quote}
\textit{Proof of Fact 2:} This follows directly from the definition of $A_k$.
\end{quote}

\textbf{Fact 3:} Let $\cT^*$ be the collection of non-empty $\cT_{k}^*, k \in K$. Then $\cT^* \cap \bar{\cT}=\varnothing$. 
Furthermore, for any subtree in $\cT^*$, there is a node $i\in K$ such that there is no edge between $i$ and this subtree.

\begin{quote}
\textit{Proof of Fact 3:} This follows directly from the definition of $\cT_{k}^*$ and Fact 1.
\end{quote}

Now we prove Proposition \ref{prop:3}. If $\cT^*=\varnothing$, then since $K$ contains at least two nodes (otherwise $A=\varnothing$ and the proposition holds trivially) and the number of maximal cliques in $\bar{\cT}$ does not exceed $r-1$, using Fact 2, we have at least one $k\in K$ such that the number of maximal cliques in the subgraph induced by $\cT_{A_k}\cup\{\cT_{k}^*\}=\cT_{A_k}$ and $k$ itself does not exceed $\lceil \frac{r-1}{2}\rceil$.

If $\cT^* \neq\varnothing$. 
Let $\cT^*_{k'}$ be the subtree with the largest number of maximal cliques in $\cT^*$.
Let $k\in K$ be the node such that there is no edge between $k$ and the subtree $\cT^*_{k'}$ ($k$ exists because of Fact 3). 
Now suppose that the proposition does not hold. Then the number of maximal cliques in the subgraph induced by $\cT_{A_k}\cup\{\cT_{k}^*\}$ and $k$ itself must exceed $\lceil \frac{r-1}{2}\rceil$.
Also, the number of maximal cliques in the subgraph induced by $\cT_{A_{k'}}\cup\{\cT^*_{k'}\}$ and $k'$ itself exceeds $\lceil \frac{r-1}{2}\rceil$.
Notice that $(\cT_{A_k}\cup\{\cT_{k}^*\})\cap(\cT_{A_{k'}}\cup\{\cT_{k'}^*\})=\varnothing$. Therefore $\cT_{A_k}\cup\{\cT_{k}\}$ is connected to $\cT_{A_{k'}}\cup\{\cT^*_{k'}\}$ only through $K$. 
Hence the sum of numbers of maximal cliques in $\cT_{A_k}\cup\{\cT^*_{k}\}\cup\{\{k\}\}$ and $\cT_{A_{k'}}\cup\{\cT^*_{k'}\}\cup\{\{k'\}\}$ does not exceed $r$. 
We cannot have both $\cT_{A_k}\cup\{\cT^*_{k}\}\cup\{\{k\}\}$ and $\cT_{A_{k'}}\cup\{\cT^*_{k'}\}\cup\{\{k'\}\}$ having more than $\lceil \frac{r-1}{2}\rceil$ maximal cliques. 
Therefore the proposition must hold.
\end{proof}

\begin{proof}[Proof of Lemma \ref{lemma:alg-lb}]
For \texttt{CliqueTree}, we prove this lemma for a ``less-adaptive'' version for the sake of clearer discussions. In this ``less-adaptive'' version, instead of output $1$ intervention with $S$ perturbation targets sampled from the central clique $K$ (when it has more than $S$ nodes) in Algorithm \ref{alg:2}, we directly output $\lceil \frac{|K|-1}{S}\rceil$ interventions with non-overlapping perturbation targets in $K$. Each of these interventions has no more than $S$ perturbation targets and they contain at least $|K|-1$ nodes in $K$ altogether. Furthermore, we pick these interventions such that if they contain exactly $|K|-1$ nodes, then the remaining node satisfies Proposition \ref{prop:3}.

After these $\lceil \frac{|K|-1}{S}\rceil$ interventions, we obtain a partially directed $\cC$, which is a chain graph, with one of its chain components without incoming edges as input to \texttt{CliqueTree} in the next iteration of the inner-loop in Algorithm \ref{alg:1}. Denote this chain component as $\cC'$. We show that $\cC'$ has no more than $\left\lceil \frac{r-1}{2} \right\rceil$ maximal cliques each with no more than $m_{\cC}$ nodes. If $\lceil \frac{r-1}{2}\rceil=0$, then $r=1$ and this trivially holds since the source of $\cC$ must be identified. In the following, we assume $\lceil \frac{r-1}{2}\rceil>0$.

\textbf{Size of maximal cliques:} The maximal clique in $\cC'$ must belong to a maximal clique in $\cC$, and thus has no more than $m_{\cC}$ nodes. 

\textbf{Number of maximal cliques:} If the source node is identified, then $\cC'$ only has one node. This trivially holds. Now consider when the source node is not identified. We proceed in two cases. 

Case I: if these $\lceil \frac{|K|-1}{S}\rceil$ interventions contain all nodes in $K$, then they break the clique tree $\cT(\cC)$ into subtrees each with no more than $\lceil \frac{r-1}{2}\rceil$ maximal cliques. $\cC'$ must belong to one of these subtrees. Therefore it must have no more than $\lceil \frac{r-1}{2}\rceil$ maximal cliques.

Case II: if these $\lceil \frac{|K|-1}{S}\rceil$ interventions do not contain all nodes in $K$, then there is exactly one node left in $K$ that is not a perturbation target, which satisfies Proposition \ref{prop:3}. Denote this node as $k$ and the source node w.r.t. the intervened $|K|-1$ nodes as $i$. From Theorem \ref{thm:1}, we have that $i$ is identified and $\forall j\in K, j\neq k$, the orientation of edge $k-j$ is identified. 

If $i\rightarrow k$, then $i$ is the source w.r.t. $K$: if $i$ is the source w.r.t. $\cC$, then $\cC'=\{i\}$ has no more than $\lceil \frac{r-1}{2}\rceil$ maximal cliques; otherwise, there is a unique subtree of $\cT(\cC)$ after removing $K$ that has an edge pointing to $i$ in $\cC$ (it exists because $i$ is the source of $K$ but not the source of $\cC$; it is unique because there is no edge between subtrees and there is no v-structure at $i$), and therefore $\cC'$ must belong to this subtree which has no more than $\lceil \frac{r-1}{2}\rceil$ maximal cliques.

\begin{figure}[h]
     \centering
     \begin{subfigure}[b]{0.35\textwidth}
         \centering
         \includegraphics[width=0.7\textwidth]{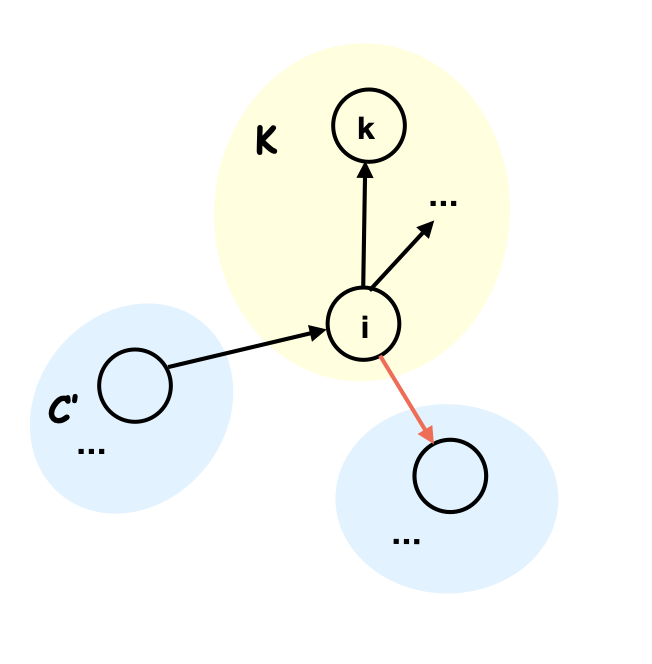}
         \caption{If $i\rightarrow k$}
     \end{subfigure}
     \begin{subfigure}[b]{0.35\textwidth}
         \centering
         \includegraphics[width=0.7\textwidth]{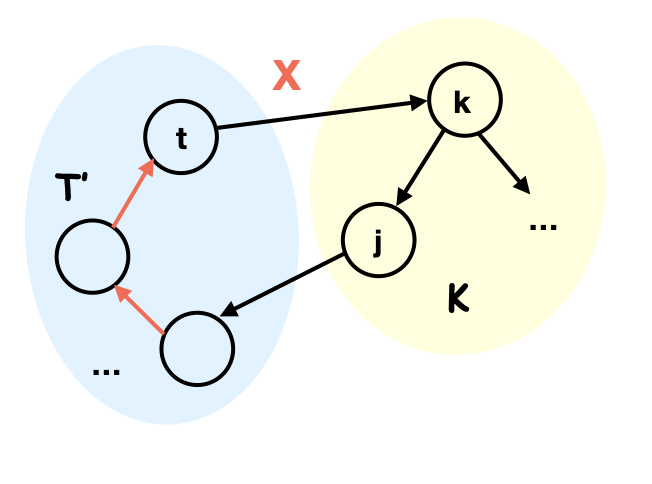}
         \caption{If $i\leftarrow k$, Fact 1}
     \end{subfigure}
\end{figure}

If $i\leftarrow k$, then $k$ is the source w.r.t. $K$: consider all the subtrees of $\cT(\cC)$ after removing $K$. We have the following two facts:

\textbf{Fact 1:} Let $\cT'$ be a subtree such that there is an edge between $\cT'$ and $K-\{k\}$ and all these edges are pointing towards $\cT'$. Then all edges between $k$ and $t\in \cT'$ must be oriented as $k\rightarrow t$. Thus $\cC'\cap \cT'=\varnothing$.

\begin{quote}
\textit{Proof of Fact 1:} Otherwise, suppose $t\in \cT'$ and $t\rightarrow k$. Let $j\in K-\{k\}$ such that there is an edge between $j$ and $\cT'$. Since $\cT'$ is connected, there must be a path from $j$ to $t$ in $\cT'$. Let $j=t_0-t_1-...-t_l-t_{l+1}=t$ be the shortest of these path. Since $t_0-t_1-...-t_l-t_{l+1}$ is shortest, there cannot be an edge between $t_{l'}$ and $t_{l''}$ with $l''-l'>1$. And since all edges between $\cT'$ and $K-\{k\}$ are pointing towards $\cT'$, there is an edge $j=t_0\rightarrow t_1$. Therefore to avoid v-structures, it must be $j=t_0\rightarrow t_1\rightarrow...\rightarrow t_l\rightarrow t_{l+1}=t$. This creates a directed cycle $k\rightarrow j \rightarrow ... \rightarrow t \rightarrow k$, a contradiction.
\end{quote}

\textbf{Fact 2:} There can be at most one subtree $\cT'$ such that there is an edge pointing from $\cT'$ to $K-\{k\}$ and also some $t\in \cT'$ such that $t\rightarrow k$ or $t-k$ is unidentified. Therefore at most one subtree $\cT'$ of this type can have $\cC'\cap\cT'\neq \varnothing$. 

\begin{quote}
\textit{Proof of Fact 2:} Otherwise suppose there are two different subtrees $\cT'_1,\cT'_2$ such that $K-\{k\}\ni j_1\leftarrow t_1\in \cT'_1, K-\{k\}\ni j_2\leftarrow t_2\in \cT'_2$. Since there is no edge $t_1-t_2$, we have $j_1\neq j_2$. Without loss of generality, suppose $j_1\rightarrow j_2$. Let $t$ be any node in $\cT_2'$ with an edge $t-k$, since $\cT_2'$ is connected, let $t=t_0'-t_1'-...-t_l'-t_{l+1}'=t_2$ be the shortest path between $t$ and $t_2$ in $\cT_2'$. Let $l'$ be the maximum in $0,1,...,l$ such that $t'_{l'}\leftarrow t'_{l'+1}$. If such $l'$ does not exist, then $t=t_0'\rightarrow t_1'\rightarrow...\rightarrow t_{l+1}'=t_2$. Since $j_1\rightarrow j_2$ and there is no v-structure at $j_2$, there must be an identified edge $j_1-t_{l+1}'=t_2$. Notice that there is no edge between $t_2$ and $t_1$ and $t_1\rightarrow j_1$, to avoid v-structure, it must be $j_1\rightarrow t_2$. The same deduction leads to identified edges $j_1\rightarrow t_{0}'=t$. Since $k\rightarrow j_1$ and there are no cycles, the edge $k\rightarrow t$ must be identified. If $l'$ exists, since $t=t_0'-t_1'-...-t_l'-t_{l+1}'=t_2$ is the shortest path and there is no v-structure, we must have $t=t_0'\leftarrow ... \leftarrow t_{l'+1}'$. Furthermore, since $l'$ is the largest, $t_{l'+1}'\rightarrow ... \rightarrow t_{l+1}' =t_2$. By a similar deduction as in the case where $l'$ does not exist, we must have an identified edge $j_1\rightarrow t_{l'+1}'$. Therefore $k\rightarrow j_1\rightarrow t_{l'+1}' \rightarrow ... t_0'=t$. To avoid directed cycles, $k\rightarrow t$ must be identified. Therefore all edges between $k$ and $\cT_2'$ are identified as pointing to $\cT_2'$.
\end{quote}

\begin{figure}[ht]
     \centering
     \begin{subfigure}[b]{0.40\textwidth}
         \centering
         \includegraphics[width=0.9\textwidth]{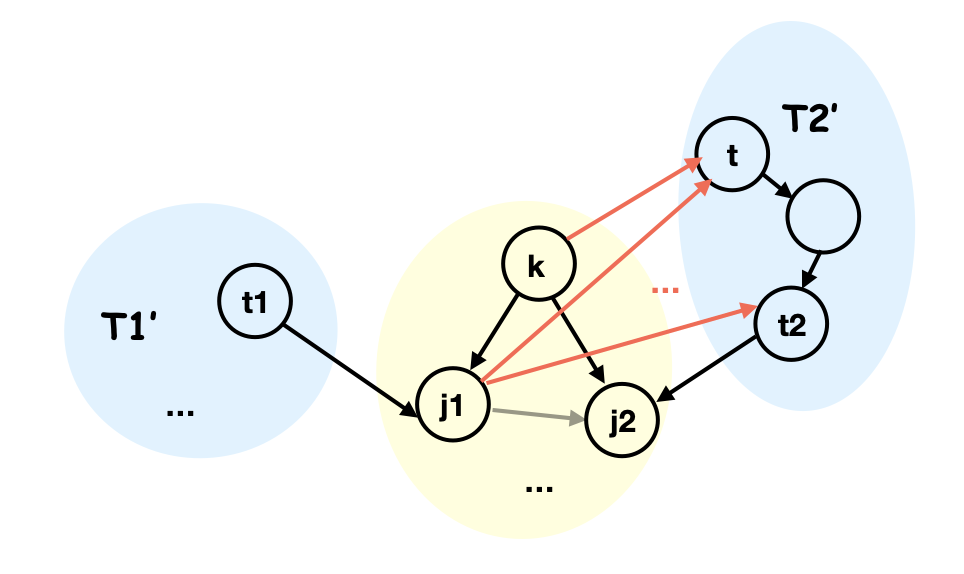}
         \caption{$l'$ does not exist}
     \end{subfigure}
     \begin{subfigure}[b]{0.40\textwidth}
         \centering
         \includegraphics[width=0.9\textwidth]{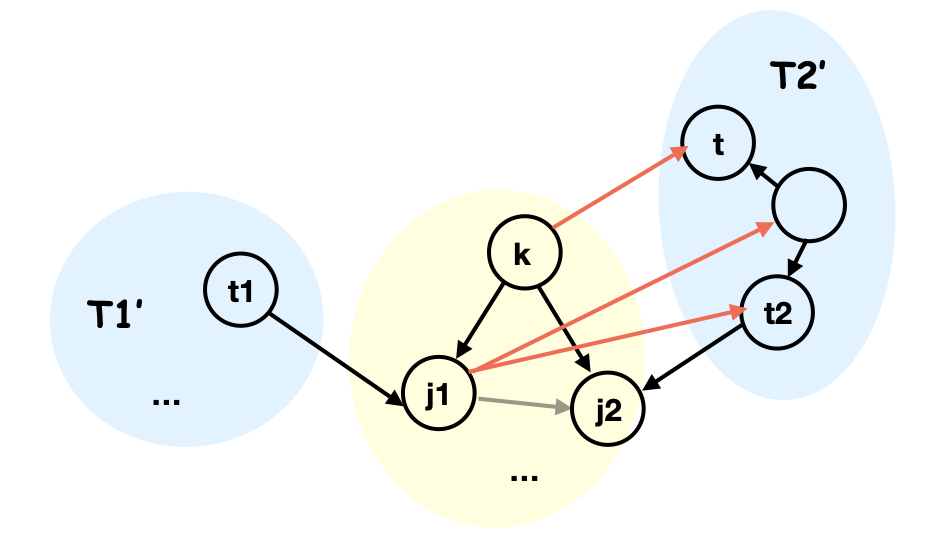}
         \caption{$l'$ exists}
     \end{subfigure}
\end{figure}

Using the above two facts, let $\cT'$ be the unique subtree in Fact 2 (if it exists); if there is no edge between $\cT'$ and $k$, then $\cC'$ must be in the subgraph induced by $k$ itself and $\cT_{A_k}$ in Proposition \ref{prop:3}, which has no more than $\lceil \frac{r-1}{2}\rceil$ maximal cliques. If there is an edge between $\cT'$ and $k$, we know that $\cC'$ must be in the joint set of $k$, $\cT'$ and $\cT_{A_k}$. Since the number of maximal cliques in $\cT'$ must be no more than that of $\cT_{k}^*$ in Proposition \ref{prop:3}, we know that $\cC'$ has no more than $\lceil \frac{r-1}{2}\rceil$ maximal cliques.

Therefore, after $\lceil \frac{|K|-1}{S}\rceil\leq \lceil \frac{m_{\cC}-1}{S}\rceil$ interventions, we reduce the number of maximal cliques to at most $\lceil \frac{r-1}{2}\rceil$ while maintaining the size of the largest maximal clique $\leq m_{\cC}$. Using this iteratively, we obtain that \texttt{CliqueTree} identifies the source node with at most $\lceil \log_2(r_\cC+1)\rceil\cdot \lceil \frac{m_{\cC}-1}{S}\rceil$ interventions.

For \texttt{Supermodular}, we do not discuss the gap between $\hat{g}_{i,j}$ and $g_{i,j}$ and how well SATURATE solves \eqref{aeq:5}. In this case, it is always no worse than the \texttt{CliqueTree} in the worst case over the feasible orientations of $\cC$, since it solves MinMaxC optimally without constraining to maximal cliques. Therefore, it also takes no more than $\lceil \log_2(r_\cC+1)\rceil\cdot\lceil\frac{m_\cC-1}{S}\rceil$ to identify the source node.
\end{proof}

\subsection{Proof of Theorem \ref{thm:2}}

\begin{proof}[Proof of Theorem \ref{thm:2}]
This result follows from Lemma \ref{lemma:oracle-lb} and \ref{lemma:alg-lb}. Divide $I^*$ into $I_1,...,I_k$ such that $I_{k'}$ is the source node of $I^*-\cup_{l<k'}I_{l}$. Since shifting $I_{k'}$ affects the marginal of subsequent $I_{k''}$ with $k''>k'$, any algorithm needs to identify $I_1,...,I_k$ sequentially in order to identify the exact shift values. 

Suppose $I_1,...,I_{k'-1}$ are learned. For $I_{k'}$, consider the chain components of the subgraph of the shift-$\{\cup_{l< k'} I_l\}$-EG induced by $T=\{i|i\in [p], \bbE_{(\rmP^{\cup_{l<k'} I_l})}(X_i)\neq \bbE_{\rmQ}(X_i)\}$ with no incoming edge. Applying Lemma \ref{lm:4} for $\cI=\{\cup_{l< k'} I_l$\} and Observation \ref{obs:1} for this subgraph and $I_{k'}$, we deduce that there are exactly $|I_{k'}|$ such chain components and $I_{k'}$ has exactly one member in each of these chain components. Let $m_{k',1},...,m_{k',|I_{k'}|}$ be the sizes of the largest maximal cliques in these $|I_{k'}|$ chain components. By Lemma \ref{lemma:oracle-lb}, we know that any algorithm needs at least $\sum_{i=1}^{|I_{k'}|} \lceil \frac{m_{k',i}-1}{S}\rceil$ number of interventions to identify $I_{k'}$ in the worst case. However, since all these chain components contain no more than $r$ maximal cliques, by Lemma \ref{lemma:alg-lb}, we know that our strategies need at most $\lceil \log_2(r+1)\rceil\cdot\sum_{i=1}^{|I_{k'}|} \lceil \frac{m_{k',i}-1}{S}\rceil$ to identify $I_{k'}$. 

Applying this result for $k'=1,...,k$, we obtain that our strategies for solving the causal mean matching problem require at most $\lceil \log_2(r+1)\rceil$ times more interventions, compared to the optimal strategy, in the worse case over all feasible orientations.
\end{proof}

\section{Numerical Experiments}\label{appendix:append-exp}
\subsection{Experimental Setup}
\textbf{Graph Generation:} We consider two random graph models: Erdös-Rényi graphs \citep{erdHos1959renyi} and Barabási–Albert graphs \citep{albert2002statistical}. The probability of edge creation in Erdös-Rényi graphs is set to $0.2$; the number of edges to attach from a new node to existing nodes in Barabási–Albert graphs  is set to $2$. We then tested on two types of structured chordal graphs: rooted tree with root randomly sampled from all the nodes in this tree, and moralized Erdös-Rényi graphs \citep{shanmugam2015learning} with the probability of edge creation set to $0.2$.

\textbf{Multiple Runs:} For each instance in the settings of Barabási–Albert graphs with 100 nodes and $S=1$ in Figure \ref{fig5:a}, we ran the three non-deterministic strategies (\texttt{UpstreamRand},\texttt{CliqueTree}, \texttt{Supermodular}) for five times and observed little differences across all instances. Therefore, we excluded the error bars when plotting the results as they are visually negligible and the strategies are robust in these settings.

\textbf{Implementation:} We implemented our algorithms using the NetworkX package \citepappendix{hagberg2008exploring} and the CausalDAG package \url{https://github.com/uhlerlab/causaldag}. All code is written in Python and run on AMD 2990wx CPU. 

\subsection{More Empirical Results}

In the following, we present additional empirical result. The evaluations are the same as in  \rref{sec:experiments}. The following figures show that we observe similar behaviors as in Figure \ref{fig5} across different settings.

\textbf{Random graphs of size $\{10, 50, 100\}$:} Barabási–Albert and Erdös-Rényi graphs with number of nodes in $\{10, 50, 100\}$.

\begin{figure}[ht]
     \centering
     \begin{subfigure}[b]{0.3\textwidth}
         \centering
         \includegraphics[width=\textwidth]{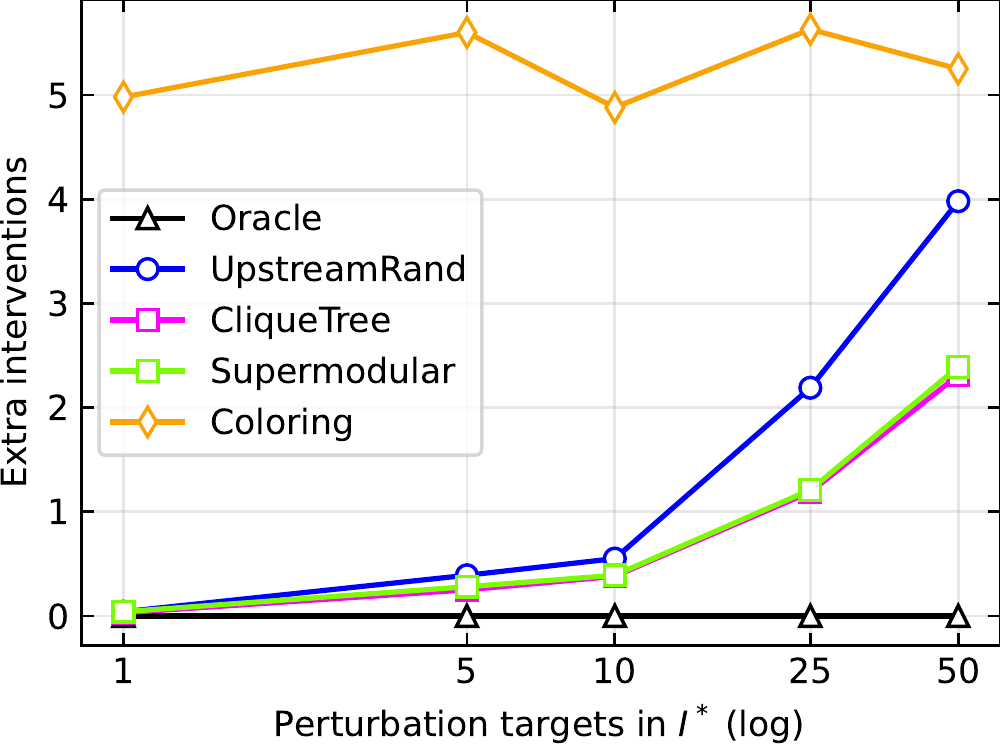}
         \caption{}
     \end{subfigure}
     \hfill
     \begin{subfigure}[b]{0.3\textwidth}
         \centering
         \includegraphics[width=\textwidth]{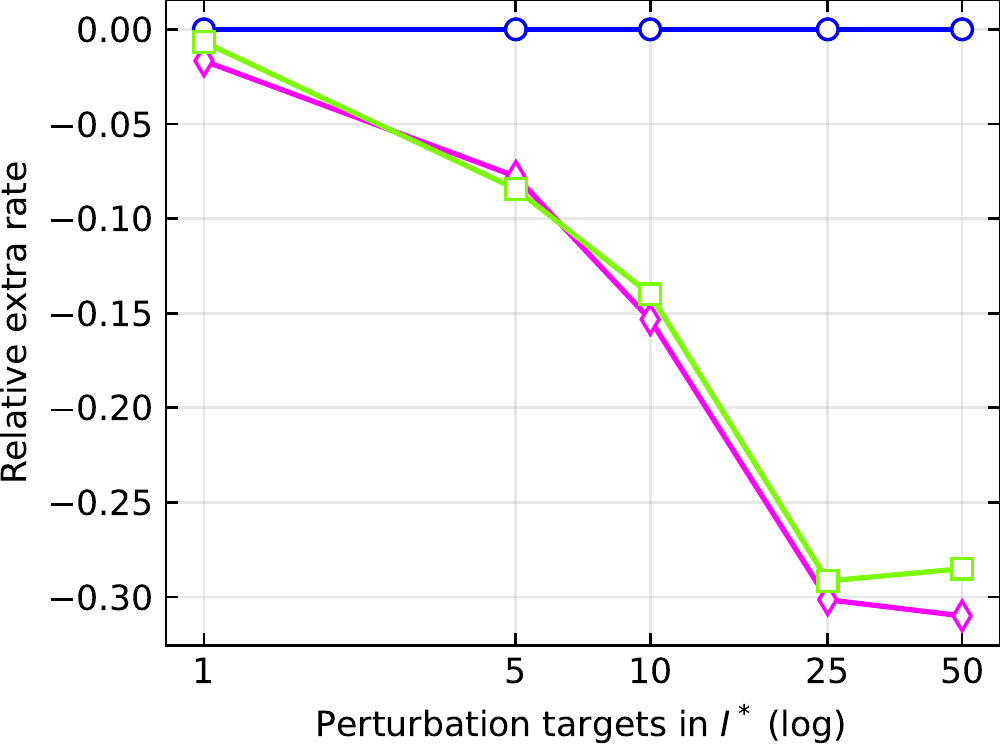}
         \caption{}
     \end{subfigure}
    \hfill
     \begin{subfigure}[b]{0.3\textwidth}
         \centering
         \includegraphics[width=\textwidth]{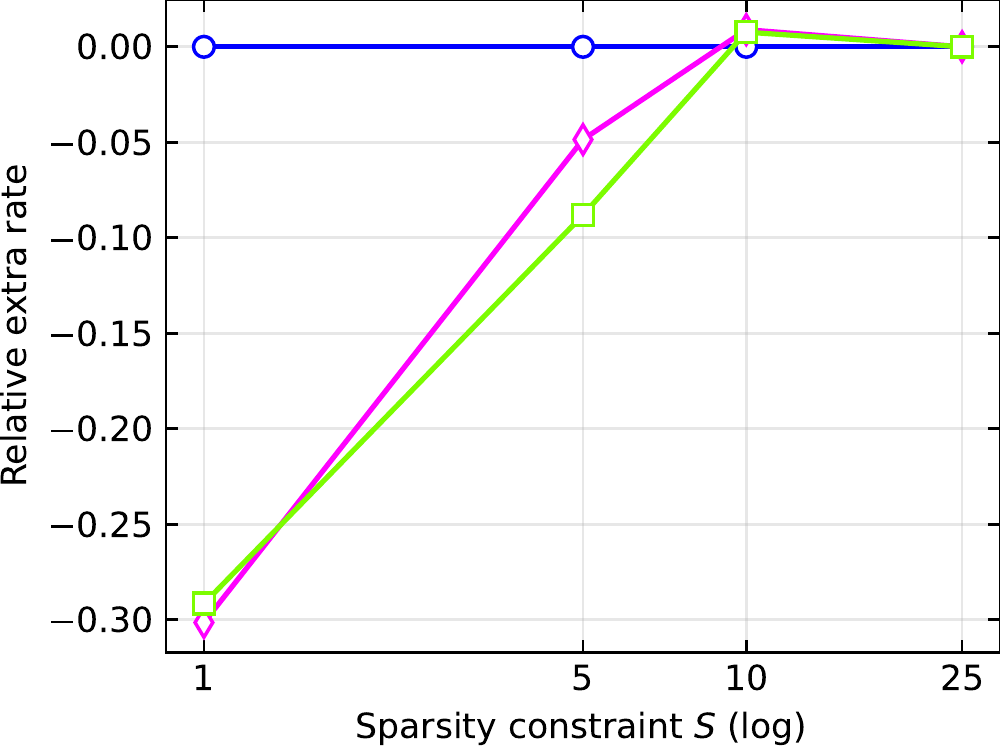}
         \caption{}
     \end{subfigure}
        \caption{Barabási–Albert graphs with $50$ nodes. \textbf{(a).} and \textbf{(b).} $S=1$; \textbf{(c).} $|I^*|=25$.
        }
    \vspace{0.1in}
\end{figure}

\begin{figure}[h!]
     \centering
     \begin{subfigure}[b]{0.329\textwidth}
         \centering
         \includegraphics[width=\textwidth]{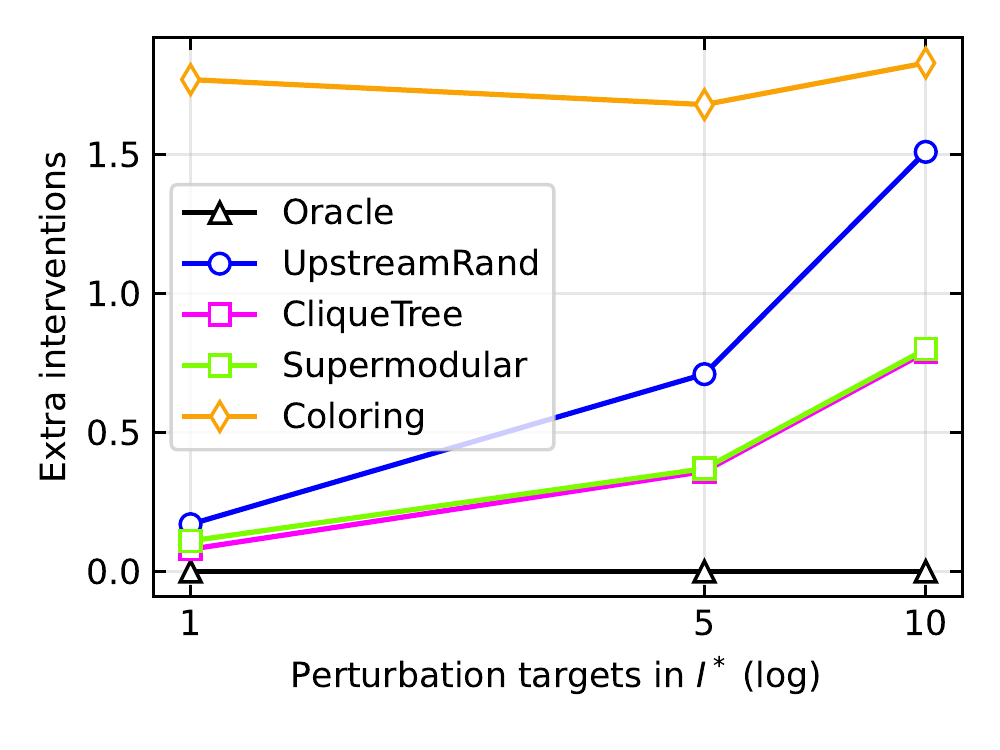}
         \caption{}
     \end{subfigure}
     \hfill
     \begin{subfigure}[b]{0.33\textwidth}
         \centering
         \includegraphics[width=\textwidth]{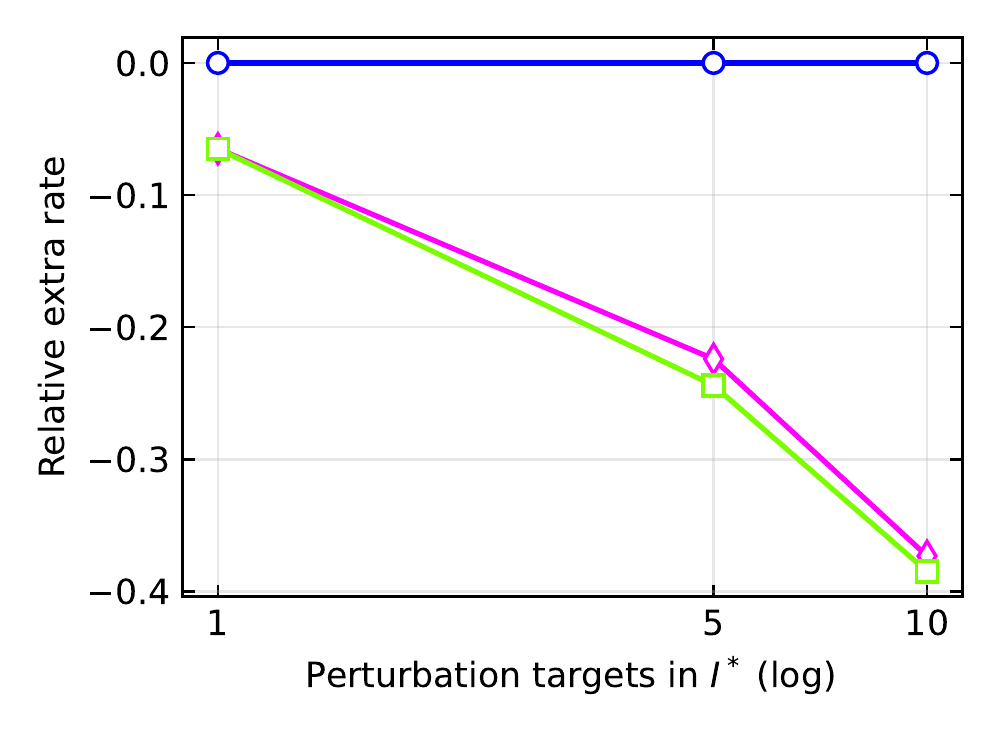}
         \caption{}
     \end{subfigure}
    \hfill
     \begin{subfigure}[b]{0.329\textwidth}
         \centering
         \includegraphics[width=\textwidth]{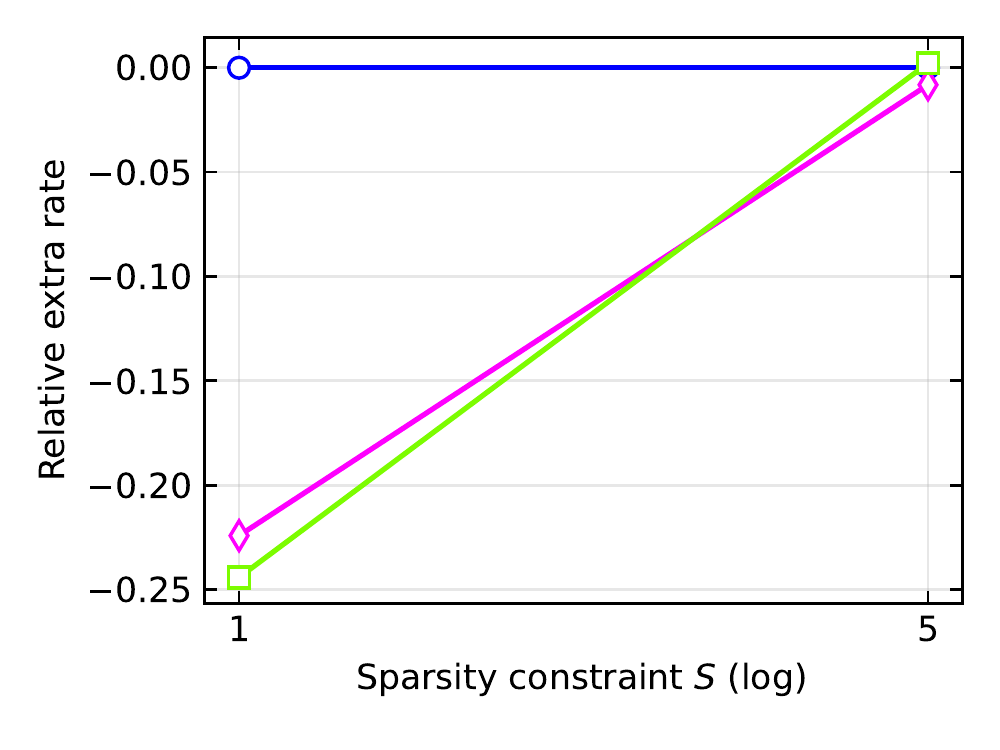}
         \caption{}
     \end{subfigure}
        \caption{Barabási–Albert graphs with $10$ nodes. \textbf{(a).} and \textbf{(b).} $S=1$; \textbf{(c).} $|I^*|=5$.
        }
    \vspace{0.1in}
\end{figure}

\begin{figure}[h!]
     \centering
     \begin{subfigure}[b]{0.3\textwidth}
         \centering
        \includegraphics[width=\textwidth]{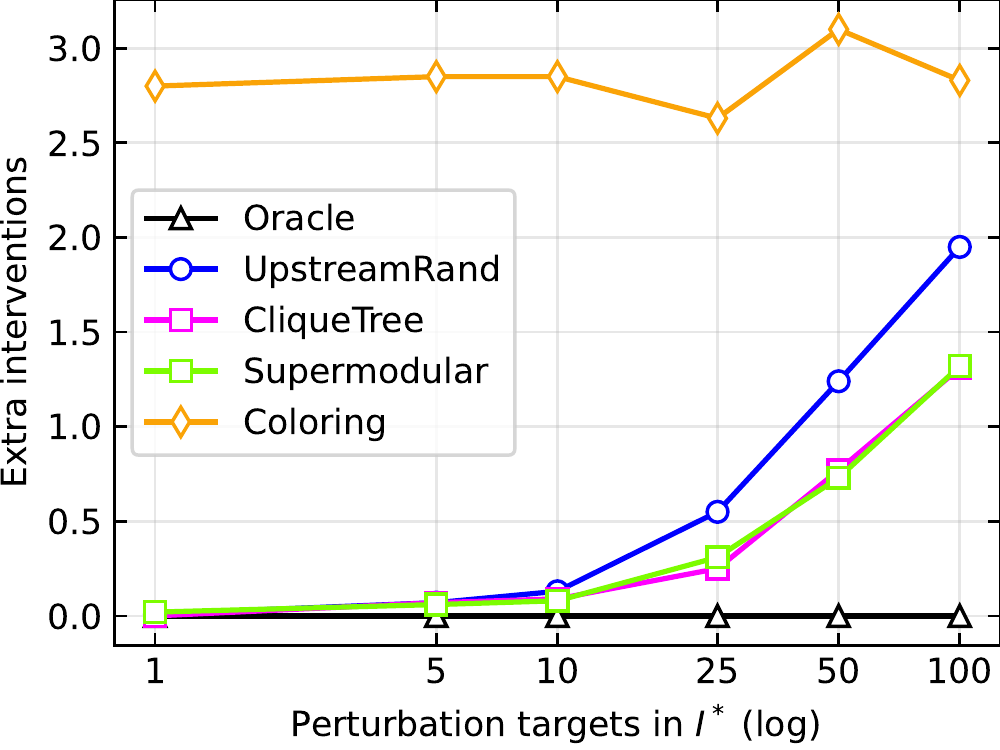}
         \caption{}
     \end{subfigure}
     \hfill
     \begin{subfigure}[b]{0.3\textwidth}
         \centering \includegraphics[width=\textwidth]{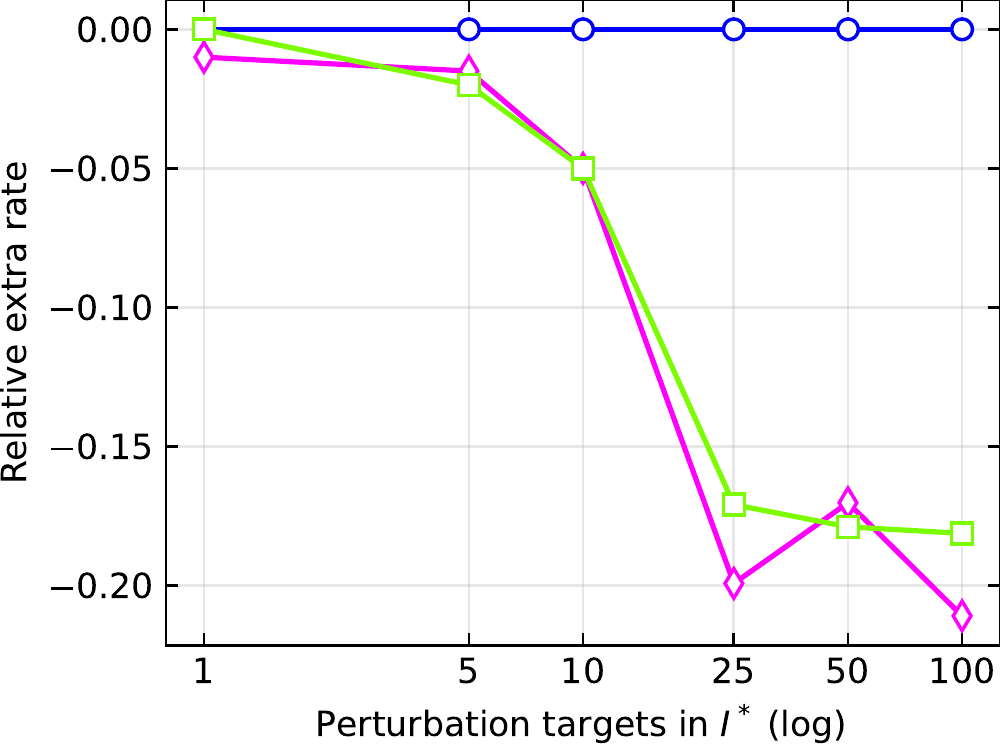}
         \caption{}
     \end{subfigure}
    \hfill
     \begin{subfigure}[b]{0.3\textwidth}
         \centering \includegraphics[width=\textwidth]{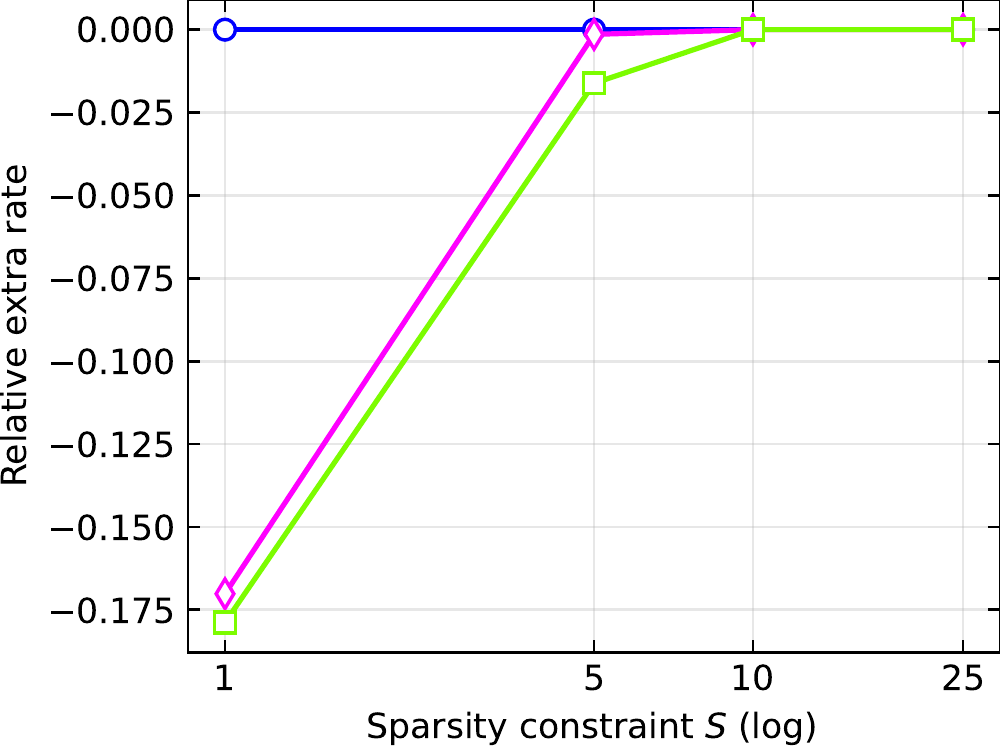}
         \caption{}
     \end{subfigure}
        \caption{Erdös-Rényi graphs with $100$ nodes. \textbf{(a).} and \textbf{(b).} $S=1$; \textbf{(c).} $|I^*|=50$.
        }
\end{figure}

\newpage

\begin{figure}[ht]
     \centering
     \begin{subfigure}[b]{0.28\textwidth}
         \centering
         \includegraphics[width=\textwidth]{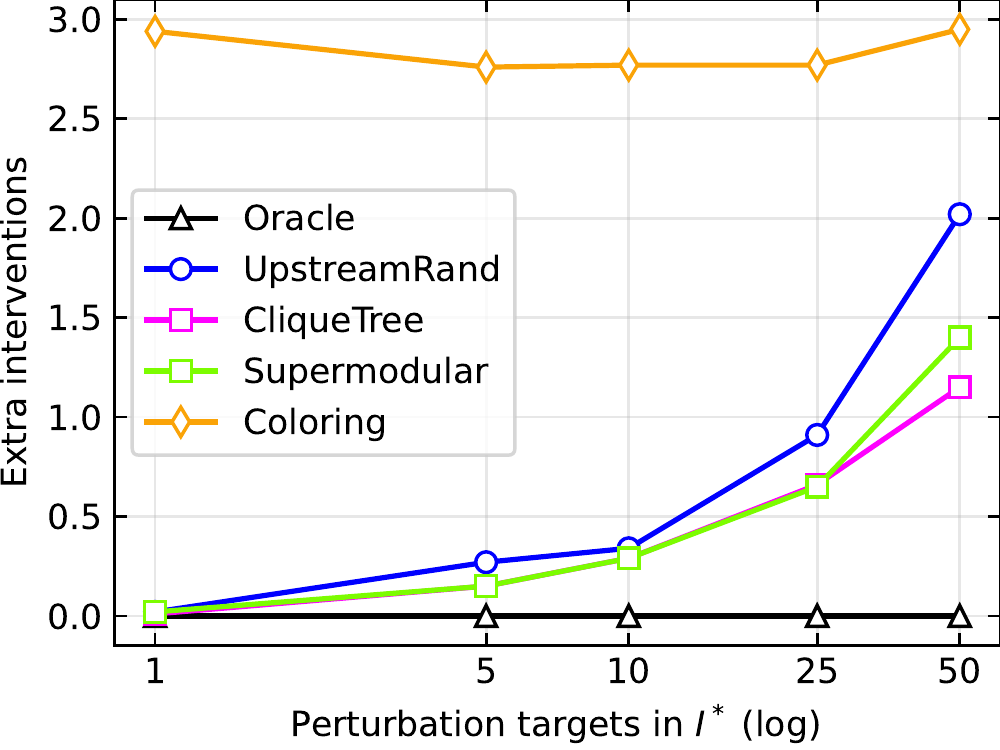}
         \caption{}
     \end{subfigure}
     \hfill
     \begin{subfigure}[b]{0.28\textwidth}
         \centering
         \includegraphics[width=\textwidth]{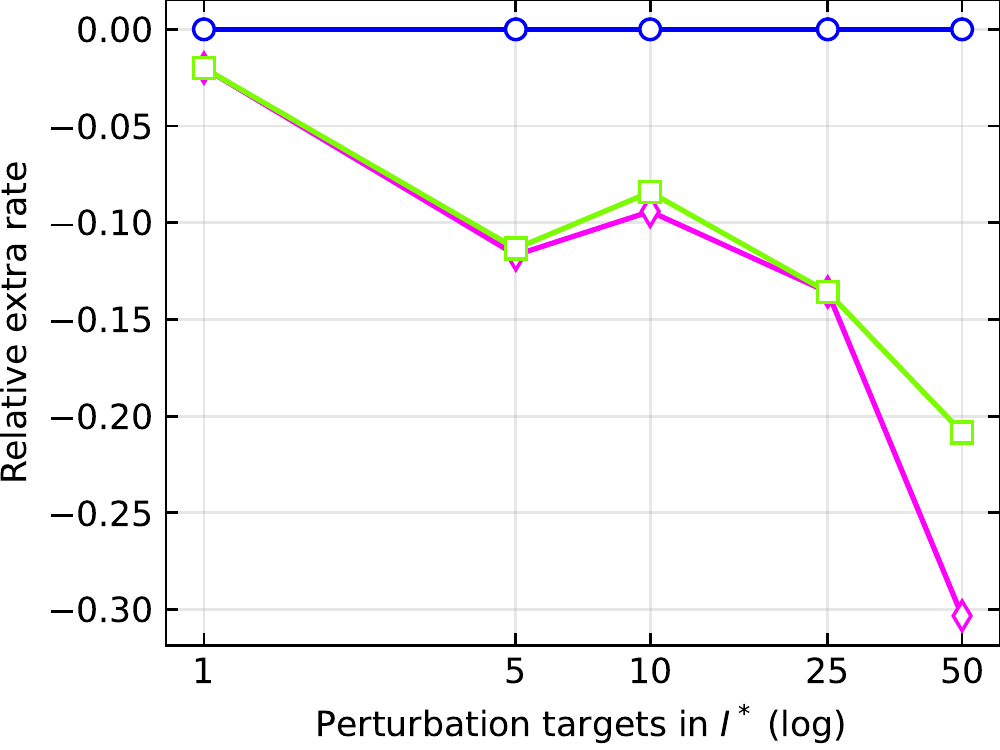}
         \caption{}
     \end{subfigure}
    \hfill
     \begin{subfigure}[b]{0.29\textwidth}
         \centering
         \includegraphics[width=\textwidth]{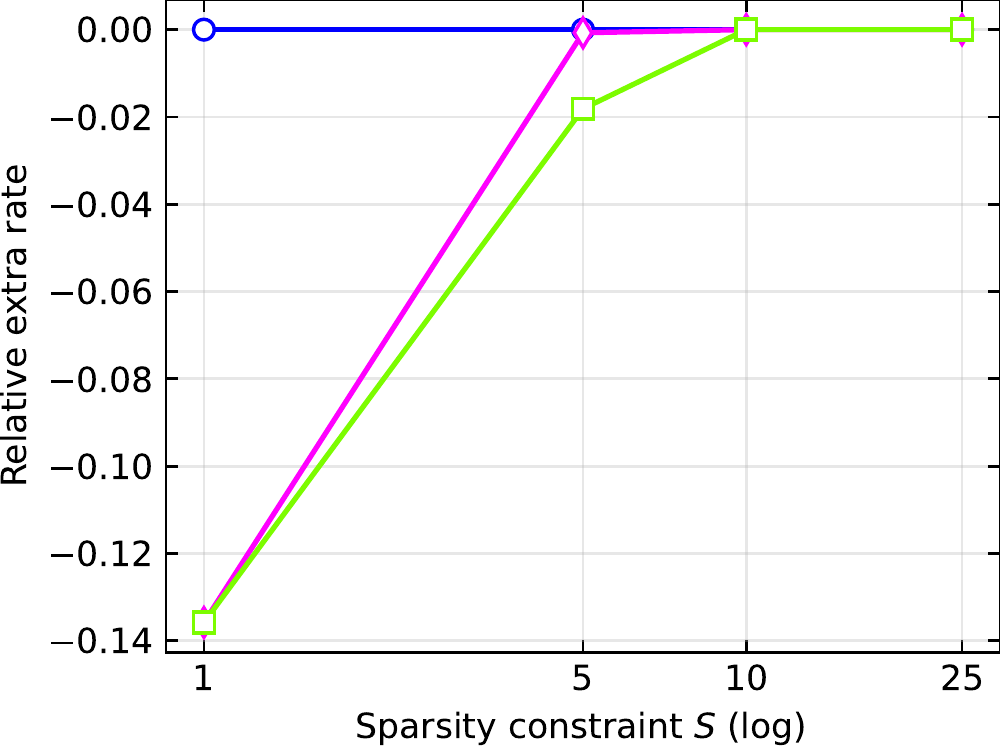}
         \caption{}
     \end{subfigure}
        \caption{Erdös-Rényi graphs with $50$ nodes. \textbf{(a).} and \textbf{(b).} $S=1$; \textbf{(c).} $|I^*|=25$.
        }
\end{figure}

\begin{figure}[h!]
     \centering
     \begin{subfigure}[b]{0.3\textwidth}
         \centering
         \includegraphics[width=\textwidth]{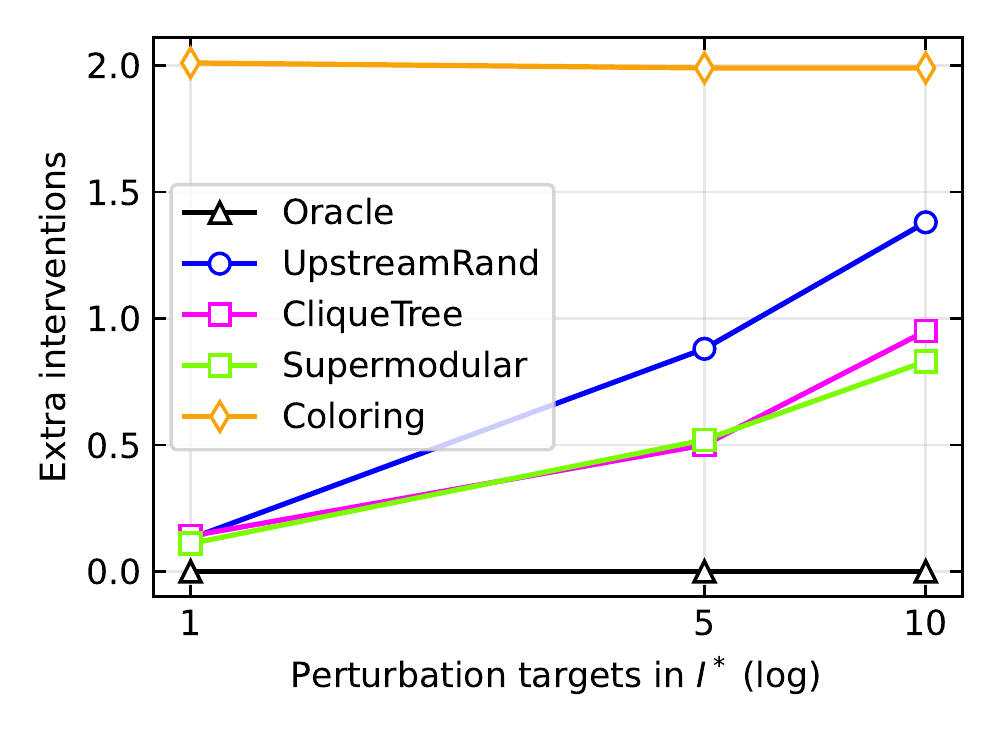}
         \caption{}
     \end{subfigure}
     \hfill
     \begin{subfigure}[b]{0.3\textwidth}
         \centering
         \includegraphics[width=\textwidth]{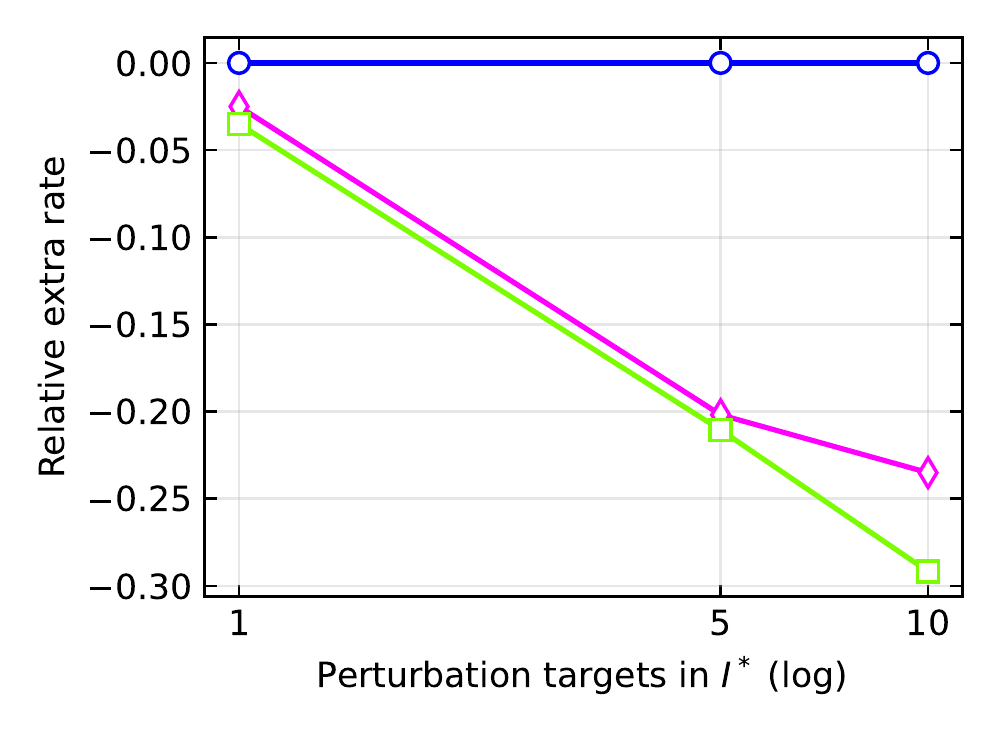}
         \caption{}
     \end{subfigure}
    \hfill
     \begin{subfigure}[b]{0.3\textwidth}
         \centering
         \includegraphics[width=\textwidth]{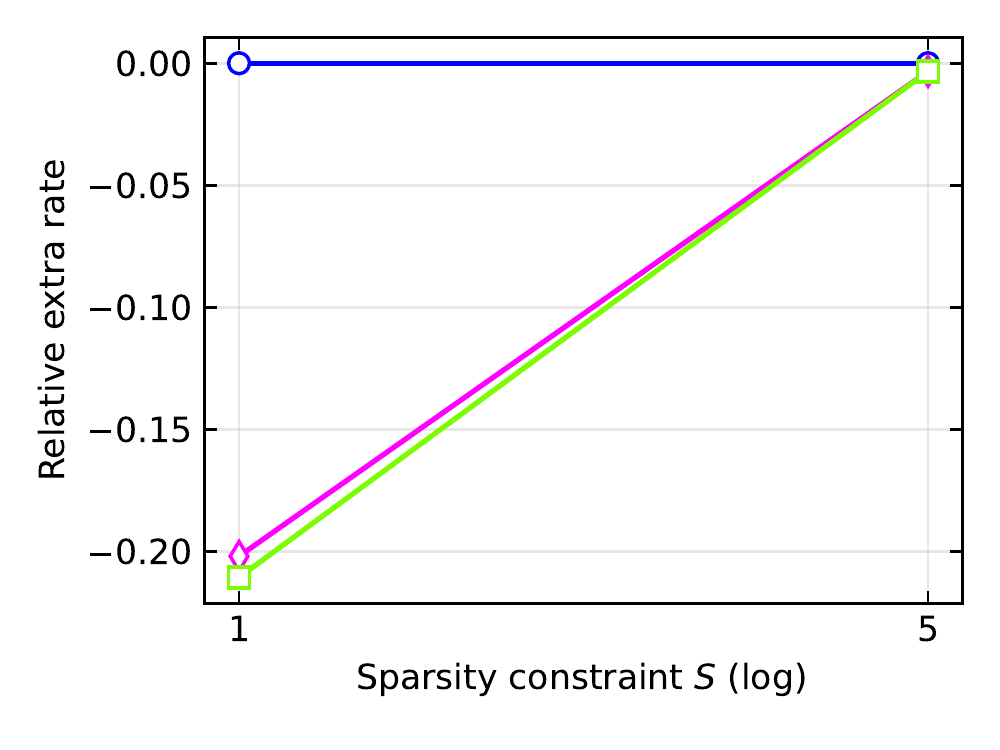}
         \caption{}
     \end{subfigure}
        \caption{Erdös-Rényi graphs with $10$ nodes. \textbf{(a).} and \textbf{(b).} $S=1$; \textbf{(c).} $|I^*|=5$.
        }
        \vspace{0.1in}
\end{figure}

\textbf{Larger Barabási–Albert graphs of size $1000$:} 

\begin{figure}[ht]
     \centering
     \begin{subfigure}[b]{0.3\textwidth}
         \centering
         \includegraphics[width=\textwidth]{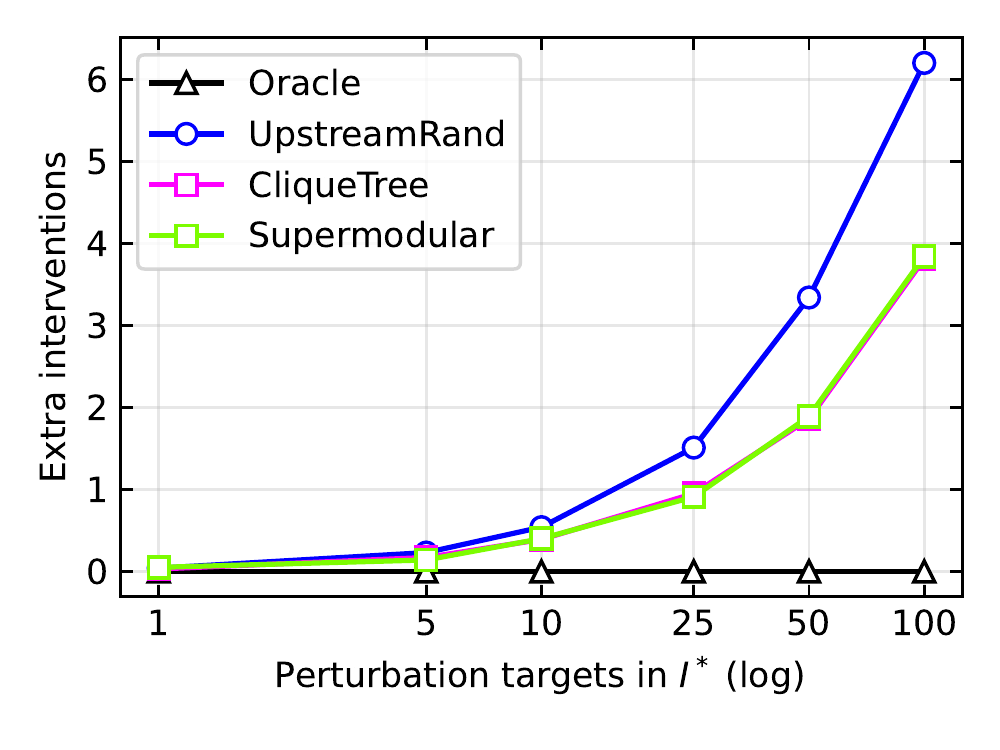}
         \caption{}
     \end{subfigure}
     \hfill
     \begin{subfigure}[b]{0.3\textwidth}
         \centering
         \includegraphics[width=\textwidth]{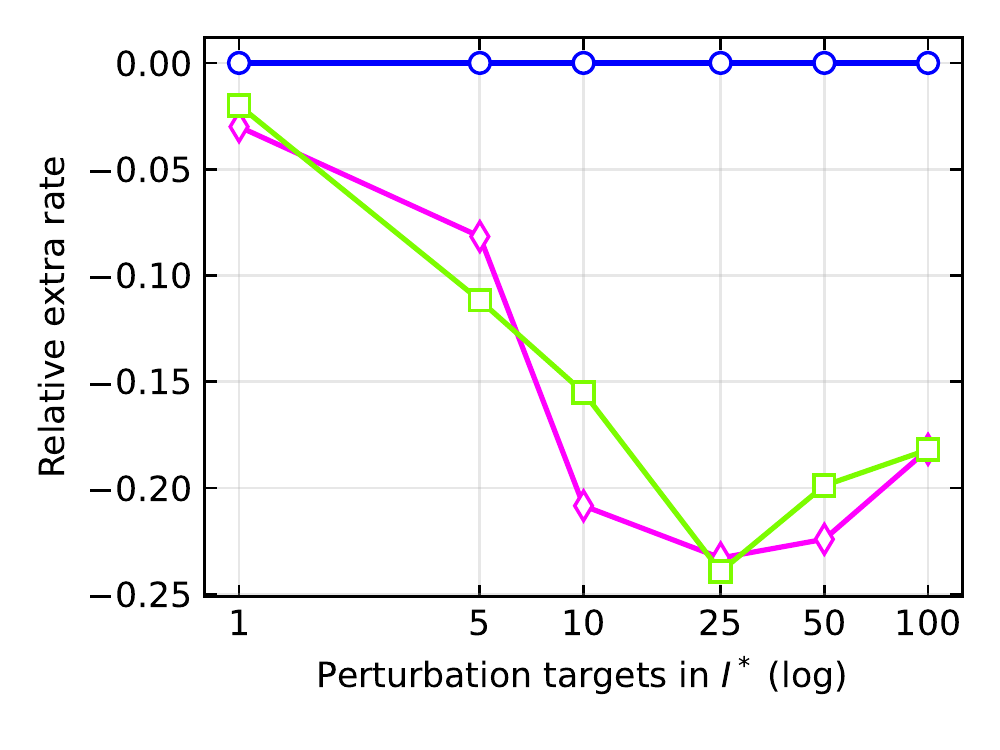}
         \caption{}
     \end{subfigure}
    \hfill
     \begin{subfigure}[b]{0.3\textwidth}
         \centering
         \includegraphics[width=\textwidth]{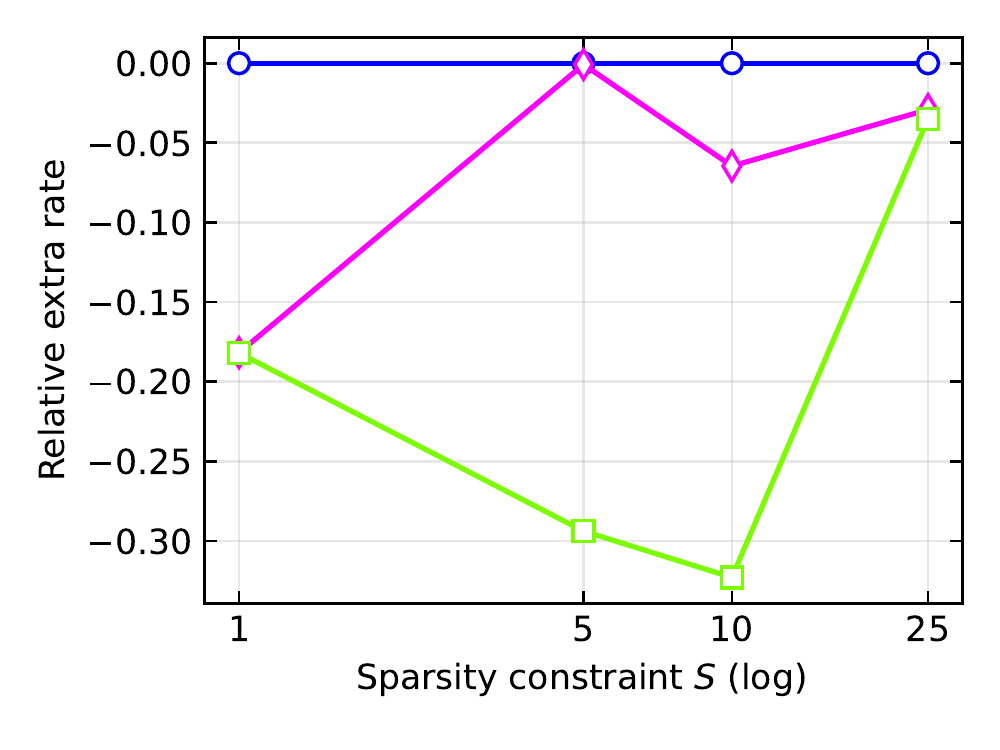}
         \caption{}
     \end{subfigure}
        \caption{Larger Barabási–Albert graphs with $1000$ nodes (excluding \texttt{coloring} which takes more than $80$ extra interventions). \textbf{(a).} and \textbf{(b).} $S=1$; \textbf{(c).} $|I^*|=100$.
        }
        \vspace{0.1in}
\end{figure}

\textbf{Two types of structured chordal graphs:} 

\begin{figure}[h!]
     \centering
     \begin{subfigure}[b]{0.24\textwidth}
         \centering
         \includegraphics[width=\textwidth]{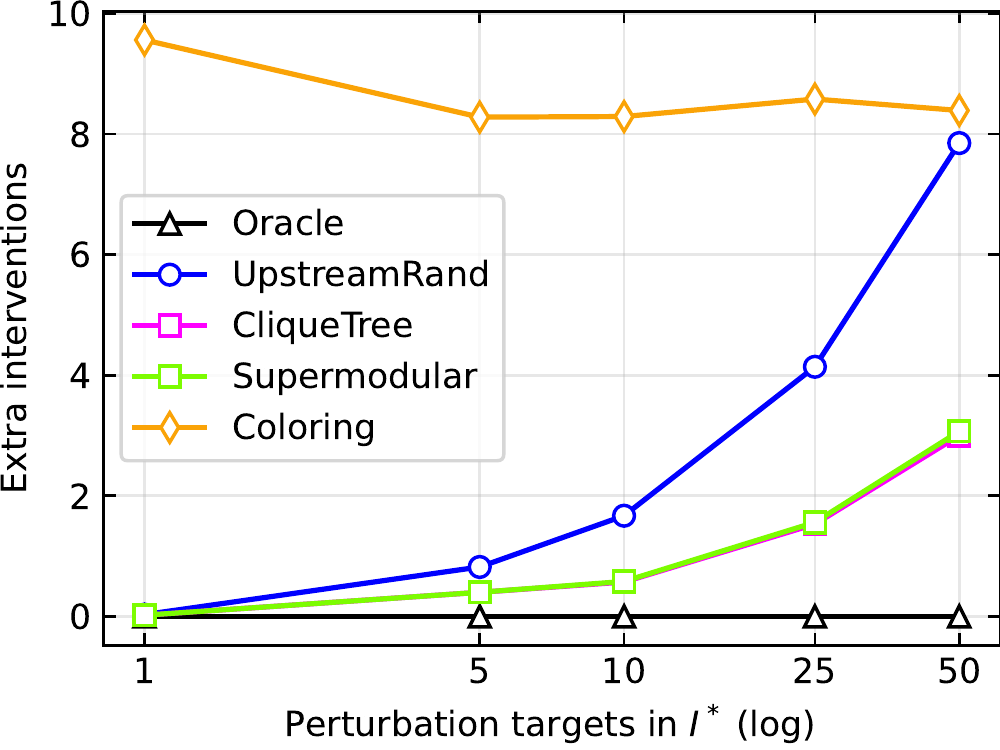}
         \caption{}
     \end{subfigure}
     \begin{subfigure}[b]{0.24\textwidth}
         \centering
         \includegraphics[width=\textwidth]{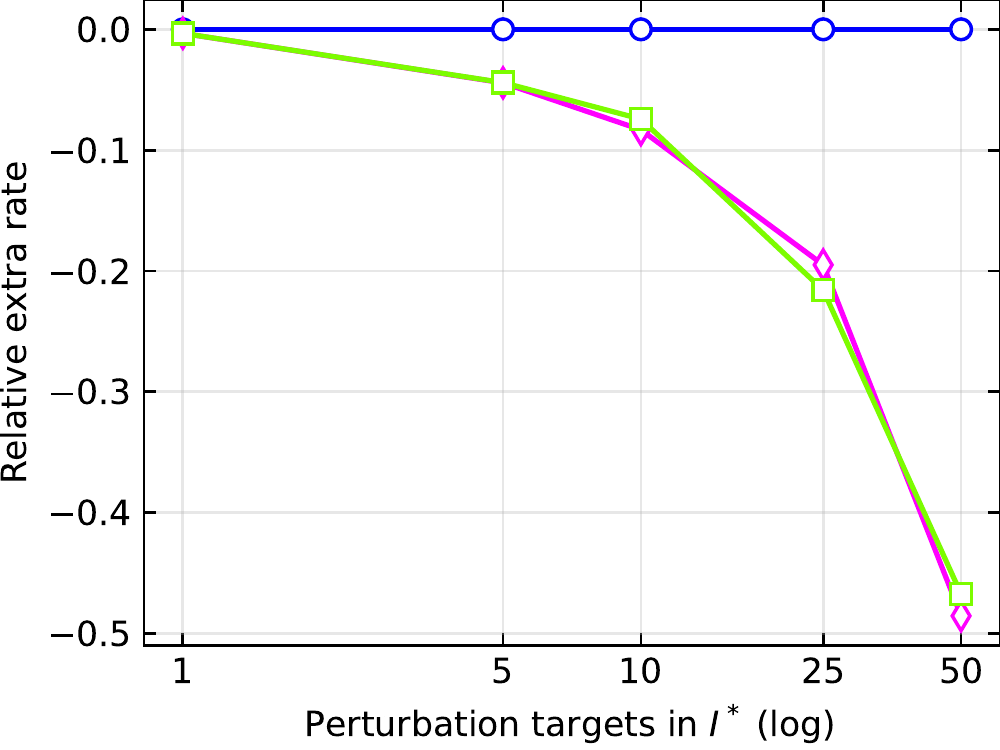}
         \caption{}
     \end{subfigure}
     \begin{subfigure}[b]{0.24\textwidth}
         \centering
         \includegraphics[width=\textwidth]{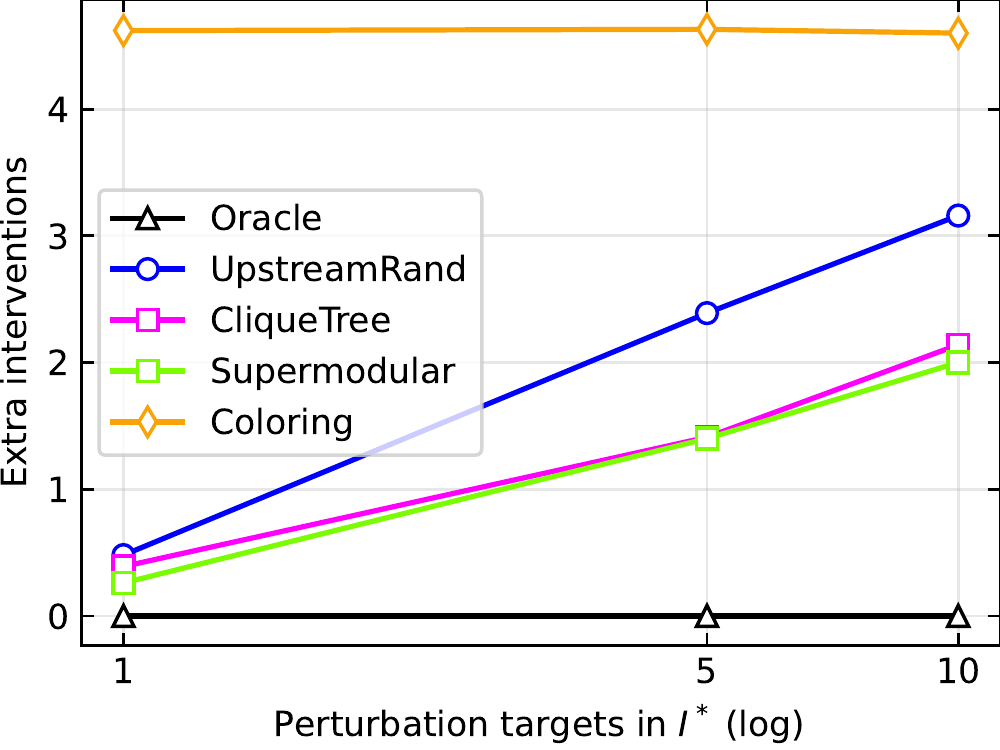}
         \caption{}
     \end{subfigure}
    \begin{subfigure}[b]{0.24\textwidth}
         \centering
         \includegraphics[width=\textwidth]{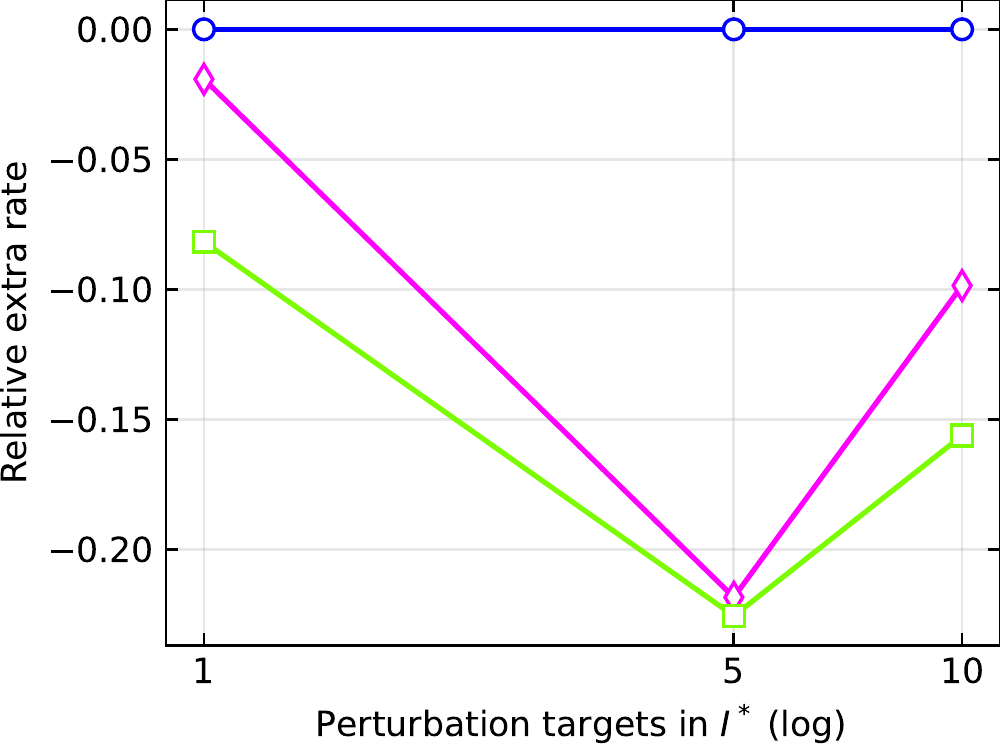}
         \caption{}
     \end{subfigure}
        \caption{Structured chordal graphs. \textbf{(a).} and \textbf{(b).} rooted tree graphs with $50$ nodes and $S=1$; \textbf{(c).} and \textbf{(d).} moralized Erdös-Rényi graphs with $10$ nodes and $S=1$.
        }
\end{figure}

\section{Discussion of the Noisy Setting}\label{appendix:noisy}
In the noisy setting, an intervention can be repeated many times to obtain an estimated essential graph. Each intervention results in a posterior update of the true DAG $\cG$ over all DAGs in the observational Markov equivalence class. For a tree graph $\cG$, this corresponds to a probability over all possible roots. To be able to learn the edges, \citetappendix{greenewald2019sample+} proposed a bounded edge strength condition on the noise for binary variables. Under this condition, they showed that the root node of a tree graph can be learned in finite steps in expectation with high probability. 

In our setting, to ensure that the source node w.r.t. an intervention can be learned, we need to repeat this intervention for enough times such that the expectation of each variable $X_i$ can be estimated. Furthermore, to ensure that the edges in the (general) interventional essential graph can be learned, we need a similar condition as in \citepappendix{greenewald2019sample+} for general chordal graphs and continuous variables.

\bibliographystyleappendix{apalike}
\bibliographyappendix{main}
\end{document}